\pgfplotsset{compat=1.17} 
\definecolor{red}{rgb}{0.7,0.15,0.15}
\definecolor{green}{rgb}{0,0.5,0}
\definecolor{blue}{rgb}{0,0,0.7}
\makeatletter \@addtoreset{equation}{section}
\newtheorem{theorem}{Theorem}[section]
\newtheorem{assumption}[theorem]{Assumption}
\newtheorem{lemma}[theorem]{Lemma}
\newtheorem{definition}[theorem]{Definition}
\newtheorem{remark}[theorem]{Remark}
\def \A{\mathbb{A}}
\def \E{\mathbb{E}}
\def \F{\mathbb{F}}
\def \G{\mathbb{G}}
\def \H{\mathbb{H}}
\def \M{\mathbb{M}}
\def \N{\mathbb{N}}
\def \P{\mathbb{P}}
\def \R{\mathbb{R}}
\def \S{\mathbb{S}}
\def \V{\mathbb{V}}
\def\Ac{{\cal A}}
\def\Bc{{\cal B}}
\def\Cc{{\cal C}}
\def\Dc{{\cal D}}
\def\Ec{{\cal E}}
\def\Fc{{\cal F}}
\def\Gc{{\cal G}}
\def\Hc{{\cal H}}
\def\Nc{{\cal N}}
\def\Oc{{\cal O}}
\def\Pc{{\cal P}}
\def\Qc{{\cal Q}}
\def\Sc{{\cal S}}
\def\Vc{{\cal V}}
\def\eps{\varepsilon}
\def\drm{\mathrm{d}}
\title{Incentives, lockdown, and testing: from Thucydides' analysis to the COVID-19 pandemic\footnote{The authors acknowledge the supports of the ANR projects PACMAN ANR-16-CE05-0027 and ReLISCoP ANR-21-CE40-0001.} }
\author{Emma {\sc Hubert} \footnote{Operation Research and Financial Engineering, Princeton University, USA,  eh3988@princeton.edu} \and Thibaut {\sc Mastrolia} \footnote{Industrial Engineering and Operations Research, UC Berkeley, USA, mastrolia@berkeley.edu.}\and Dylan {\sc Possama\"{i}} \footnote{ETH Z\"urich, Mathematics department, Switzerland, dylan.possamai@math.ethz.ch}\and Xavier {\sc Warin}\footnote{EDF R\&D and FiME, Laboratoire de Finance des March\'es de l’\'Energie \url{(www.fime-lab.org}), xavier.warin@edf.fr}}
\date{\today}
\begin{document}

\maketitle

\begin{abstract} 
In this work, we provide a general mathematical formalism to study the optimal control of an epidemic, such as the COVID-19 pandemic, \textit{via} incentives to lockdown and testing. In particular, we model the interplay between the government and the population as a principal--agent problem with moral hazard, \textit{à la} \citeauthor*{cvitanic2018dynamic} \cite{cvitanic2018dynamic}, while an epidemic is spreading according to dynamics given by compartmental stochastic SIS or SIR models, as proposed respectively by \citeauthor*{gray2011stochastic} \cite{gray2011stochastic} and \citeauthor*{tornatore2005stability} \cite{tornatore2005stability}. More precisely, to limit the spread of a virus, the population can decrease the transmission rate of the disease by reducing interactions between individuals. However, this effort---which cannot be perfectly monitored by the government---comes at social and monetary cost for the population. To mitigate this cost, and thus encourage the lockdown of the population, the government can put in place an incentive policy, in the form of a tax or subsidy. In addition, the government may also implement a testing policy in order to know more precisely the spread of the epidemic within the country, and to isolate infected individuals. In terms of technical results, we demonstrate the optimal form of the tax, indexed on the proportion of infected individuals, as well as the optimal effort of the population, namely the transmission rate chosen in response to this tax. The government's optimisation problems then boils down to solving an Hamilton--Jacobi--Bellman equation. Numerical results confirm that if a tax policy is implemented, the population is encouraged to significantly reduce its interactions. If the government also adjusts its testing policy, less effort is required on the population side, individuals can interact almost as usual, and the epidemic is largely contained by the targeted isolation of positively-tested individuals.	

\medskip

\noindent
{\bf Key words.} COVID-19, stochastic epidemic models, epidemic control, optimal incentives, moral hazard.

\medskip

\noindent
{\bf AMS 2020 subject classifications.} Primary: 92D30; Secondary: 91B41, 60H30, 93E20.
\end{abstract}

\section{Introduction}

Starting around 430 BC, and known as the first historically epidemic, the plague of Athens killed between a quarter and a third of Athenians, as reported by Thucydides.
He analysed the consequences of this epidemic, and concluded that it had led a moral upheaval for the Athenians, faced with the complete lack of any useful cure. In the end, the disease was only stopped thanks to the development of a natural immunity within the population, during the first four years of the epidemic phase. Concerning the spread of the disease itself, Thucydides wrote the following
\blockquote[{\citeauthor*{jowett1900thucydides} \cite[Volume I, Book II, pp. 138]{jowett1900thucydides}}][.]{When they were afraid to visit one another, the sufferers died in their solitude, so that many houses were empty because there had been no one left to take care of the sick; or if they ventured they perished, especially those who aspired to heroism. For they went to see their friends without thought of themselves and were ashamed to leave them, at a time when the very relations of the dying were at last growing weary and ceased even to make lamentations, overwhelmed by the vastness of the calamity}

From this analysis, we can already isolate three fundamental questions that need to be addressed whenever an unknown epidemic occurs.
\begin{enumerate}
\item[$(1)$] How can one model a disease with only parsimonious information on how it is spreading among the population?

\vspace{-0.6em}
\item[$(2)$] How can one solve the Gordian knot associated to interactions within the population: enjoying on the one hand the presence of others and avoiding solitude, and on the other hand dramatically spreading the disease?

\vspace{-0.6em}
\item[$(3)$] How can governments and decision-makers incentivise people in order to better control the spread of the epidemic?
\end{enumerate}

\textbf{Choosing a relevant epidemic model.}
The first question is naturally linked to several strands of fundamental research, both for mathematicians and physicians, dealing with the problem of choosing a relevant epidemic model. The paternity of the first mathematical model designed to describe the evolution of an epidemic seems to be attributed to \citeauthor*{bernoulli1760essai}, who proposed one for smallpox as early as 1760 in \cite{bernoulli1760essai}. However, other early mathematical approaches were used to study various types of epidemics and their consequences, for example by \citeauthor*{farr1840causes} in 1840, who applied mathematics to death records during a smallpox epidemic in England in \cite{farr1840causes}, and whose work can be considered as a starting point of the field.
Nevertheless, the real mathematical development of the theory had to wait for the 20th century, with fundamental contributions for the development of deterministic models by \citeauthor*{hamer1906epidemic} \cite{hamer1906epidemic}, \citeauthor*{ross1910prevention} \cite{ross1910prevention}, and later \citeauthor*{bartlett1949some} \cite{bartlett1949some} who proposed one of the first general investigations of the evolution of deterministic interacting systems, which was then applied to epidemiology in \citeauthor*{kendall1956deterministic} \cite{kendall1956deterministic}. It was rapidly noticed that deterministic models were insufficient to account for the uncertainty associated with the disease spreading, and the technical difficulties usually encountered for its detection. This acknowledgement helped nurturing the development of stochastic models, whose first instance seems to be traced back to \citeauthor*{m1925applications} \cite{m1925applications}. For a precise comparison between deterministic and stochastic models in discrete-time settings as well as more historical details, we refer our readers to \citeauthor*{bailey1975mathematical} \cite{bailey1975mathematical}, and to \citeauthor*{allen2008introduction} \cite{allen2008introduction} for more up-to-date references and an overview of recent epidemiological models.

\medskip

We will now describe some models, belonging to the general class of compartmental models, and which will be at the heart of our work. The first one considers a sort of worst-case scenario, in which an immunity is not developed after infection. Such models have been coined SIS (for Susceptible--Infected--Susceptible), and consider a population divided into two groups: susceptible individuals interact with infected ones, and therefore move from one class to the other repeatedly. This model was first discussed in \citeauthor*{weiss1971asymptotic} \cite{weiss1971asymptotic}, and then extended by \citeauthor*{naasell1996quasi} \cite{naasell1996quasi}, who found the quasi-stationary distribution of a continuous-time stochastic SIS model with no births nor deaths. In this work, the stochastic SIS model we will focus on is defined as a solution to a bi-dimensional SDE driven by a single Brownian motion, as proposed by \citeauthor*{gray2011stochastic} \cite{gray2011stochastic}. Alternatively to this quite pessimistic scenario, one can assume that an immunity will appear after infection, thus adding a third state: the recovered individuals, who have been cured and developed antibodies. Introduced originally by \citeauthor*{kermack1927contribution} \cite{kermack1927contribution}, this so-called SIR model was studied in depth by \citeauthor*{anderson1979population} \cite{anderson1979population} in a deterministic setting, while stochastic perturbations were introduced by \citeauthor*{beretta1998stability} \cite{beretta1998stability}. To be consistent with our choice for the SIS model, we will consider the stochastic SIR model proposed by \citeauthor*{tornatore2005stability} \cite{tornatore2005stability}. It should be noted that there is a wide variety of formulations of stochastic SIS and SIR models, which makes it impossible to list them all here. We will simply mention the works by \citeauthor*{britton2019stochastic} \cite{britton2019stochastic}, \citeauthor*{dieu2016classification} \cite{dieu2016classification}, \citeauthor*{du2020permanence} \cite{du2020permanence}, \citeauthor*{jiang2011asymptotic} \cite{jiang2011asymptotic} and \citeauthor*{schreiber2021extinction} \cite{schreiber2021extinction}, on the study of the long-term behaviour of this type of stochastic models, thus answering the question whether or not the epidemic can be controlled.

\medskip
\textbf{On the control of an epidemic.}
In the aforementioned classical compartmental models, the infection grows into the population through an incidence rate $\beta$, and proportionally to the product of the number of susceptible and infected individuals, as already discussed in the work by \citeauthor*{wilson1945law} \cite{wilson1945law}, or in the \textit{Reed--Frost theory}, revisited for instance by \citeauthor*{abbey1952examination} \cite{abbey1952examination}. In the absence of a cure or a vaccine, this transmission rate appears as the only control variable for individuals or public institutions to reduce the spread of an epidemic. Our take on the second main question will therefore be from a control-theoretic perspective. At the heart of this approach is the simple idea that when faced with an epidemic, a perfectly rational population will try to find an equilibrium interaction rate, balancing the need to still connect with others, and the natural fear of spreading the infection itself. This is by no means a new point of view, and papers discussing the use of formal control theory in epidemiology can be dated back to the 70s, see among others \citeauthor*{taylor1968some} \cite{taylor1968some}, \citeauthor*{abakuks1973optimal} \cite{abakuks1973optimal}, \citeauthor*{morton1974optimal} \cite{morton1974optimal}, \citeauthor*{wickwire1975optimal} \cite{wickwire1975optimal}, or \citeauthor*{sethi1978optimal} \cite{sethi1978optimal}. More recently and closer to our purpose, we can refer to \citeauthor*{behncke2000optimal} \cite{behncke2000optimal}, \citeauthor{riley2003transmission} \cite{riley2003transmission}, who studied the impact of the control of transmission rate on the 2002--2004 SARS outbreak in Hong Kong and on the ways to interfere with the disease's spread, \citeauthor*{hansen2011optimal} \cite{hansen2011optimal}, and more broadly to the monograph by \citeauthor*{lenhart2007optimal} \cite{lenhart2007optimal}. 

\medskip
As should be expected, a significant part of the recent literature on the COVID-19 pandemic has also adopted this control point of view, and such lockdown measures as well as their medical, societal, and economical impacts are discussed by, among others, \citeauthor*{anderson2020will} \cite{anderson2020will}, \citeauthor*{bayraktar2021macroeconomic} \cite{bayraktar2021macroeconomic}, \citeauthor*{charpentier2020covid} \cite{charpentier2020covid}, 
\citeauthor{ferguson2020report} \cite{ferguson2020report},
\citeauthor*{fowler2020effect} \cite{fowler2020effect}, \citeauthor*{grigorieva2020optimal} \cite{grigorieva2020optimal}, \citeauthor*{hatchimonji2020trauma} \cite{hatchimonji2020trauma}, \citeauthor*{kantner2020beyond} \cite{kantner2020beyond}, \citeauthor*{piguillem2020optimal} \cite{piguillem2020optimal}, or \citeauthor*{wilder2020can} \cite{wilder2020can}. The previous papers take the point of view of a government acting as a central planner, in the sense that it can impose on the population to control the epidemic in a way which is beneficial to the population as a whole. However, though it seems reasonable to assume that at least some individuals, by being afraid of getting sick, will naturally decrease their interaction rates, it would clearly be a stretch to consider that all individuals will follow the governmental's recommendations. This individuals' point of view have been considered by \citeauthor{reluga2013equilibria} in \cite{reluga2010game} and \cite{reluga2013equilibria}, as well as by \citeauthor*{li2017provisioning} \cite{li2017provisioning}, thus introducing game theory in epidemiologic models, or more recently by \citeauthor*{elie2020contact} \cite{elie2020contact} for the case of COVID-19.

\medskip
\textbf{Introducing the notion of incentives.} In light of the issues we have raised, a natural conclusion was, at least for us, that even if a control-theoretic approach to mitigate the impact of an epidemic is clearly desirable, there is {\it a priori} no evidence that in face of clear public policies, a population will directly adopt a social distancing behaviour leading to an optimal transmission rate for the welfare of the society. Moreover, in the absence of a system allowing to actually keep track of the level of interaction within the population, governments are faced with a clear situation of moral hazard: it is impossible for large countries to ensure the application of such isolation measures, and therefore it is unfeasible to have an absolute control on the behaviour of all individuals and their interactions.\footnote{Several countries worldwide have decided to use contact-tracing tools, such as mobile phone apps, designed to help tracking down subsequent exposures after an infected individual is identified, see for instance \citeauthor*{cho2020contact} \cite{cho2020contact}, or \citeauthor*{reichert2020privacy} \cite{reichert2020privacy}. Using these would in principle erase any possibility or moral hazard, provided that all the population uses the app, and that testing is organised on a massive scale. Even admitting that this would be the case, it remains that these tools have raised complex issues of privacy, see \citeauthor*{ienca2020responsible} \cite{ienca2020responsible} or \citeauthor*{park2020information} \cite{park2020information}. In any case, the incentive-based approach we propose can always be considered as a useful complement to any other adopted strategy.} Consequently, an incentive policy should also be calibrated by governments in order to get a better control on the spread of the disease. This, as expected, leads us to our third question, which is where our approach departs significantly from the extant literature. 
Indeed, to our knowledge, the literature on optimal incentives to counter moral hazard in the context of an epidemic is very sparse. Some authors, for instance \citeauthor*{valeeva2007incentive} \cite{valeeva2007incentive} or \citeauthor*{gramig2005model} \cite{gramig2005model,gramig2009livestock}, study disease spreading through the lens of asymmetry of information, but they are mostly interested in livestock related diseases, where producers have private information on preventive measures they may have adopted, prior to contamination ({\it ex ante} moral hazard), and may or may not declare whether their herd is infected after contamination ({\it ex post} adverse selection). A paper by \citeauthor*{francis2004optimal} \cite{francis2004optimal} discusses the optimal taxes/subsidies to encourage vaccination during the flu season. More closely related to our principal--agent formulation, \citeauthor*{carmona2021finite} consider in \cite[Section 5]{carmona2021finite} an application to the containment of an epidemic of their moral hazard theory for agents interacting through a finite state mean-field game. Finally, an approach similar to ours, but which takes into account mean-field type interactions between individuals within the population, has been developed concurrently and independently of the present paper by \citeauthor*{aurell2020optimal} \cite{aurell2020optimal}.

\medskip
\textbf{Principal--agent approach and technical results.} We thus propose to fulfil this gap in the literature by studying how a lockdown policy can limit the number of infected people during an epidemic, with uncertainties on the actual number of affected individuals, and on their level of adherence to such a policy. More specially, we aim at solving this moral hazard problem by finding
\begin{enumerate}[label=$(\roman*)$]
\item the best reaction effort of the population to reduce the interaction given a specific government policy;

\vspace{-0.6em}
\item the optimal policy composed by an aggregated tax paid by the population at some fixed maturity, and a testing policy to reduce the uncertainty on the estimated number of infected people.
\end{enumerate}
This problem perfectly fits with a classical principal--agent problem with moral hazard, and boils down to finding a Stackelberg equilibrium between the principal (the leader, here the government) proposing a policy to an agent (the follower, here the population) to interact optimally in order to reduce the spread of the disease. Principal--agent problems have a long history in the economics literature, dating back from, at least, the 60s. It is not our goal here to review the whole literature on the subject, and we refer the interested reader to the seminal books by \citeauthor*{laffont2002theory} \cite{laffont2002theory}, \citeauthor*{bolton2005contract} \cite{bolton2005contract}, or \citeauthor*{salanie2005economics} \cite{salanie2005economics}. We will content ourselves to mention that this literature regained a strong momentum in the past two decades with the development of continuous-time models. Main contributors in these regards are \citeauthor*{holmstrom1987aggregation}, \cite{holmstrom1987aggregation}, \citeauthor*{schattler1993first} \cite{schattler1993first}, \citeauthor*{sannikov2008continuous} \cite{sannikov2008continuous}, see also the monograph by \citeauthor*{cvitanic2012contract} \cite{cvitanic2012contract}. 
More recently, \citeauthor*{cvitanic2017moral} \cite{cvitanic2017moral,cvitanic2018dynamic} developed a general theory allowing to tackle a great number of contract theory problem, which has been then extended and applied in many different situations\footnote{See among others \citeauthor*{aid2018optimal} \cite{aid2018optimal}, \citeauthor*{el2021optimal} \cite{el2021optimal}, \citeauthor*{cvitanic2018asset} \cite{cvitanic2018asset}, \citeauthor*{elie2019tale} \cite{elie2019tale}, \citeauthor*{elie2021mean} \cite{elie2021mean}, \citeauthor*{kharroubi2020regulation} \cite{kharroubi2020regulation}.}.
However, the previous approach requires a fundamental assumption on the structure of the controlled process, that is not satisfied in our model, because roughly speaking, there is only one Brownian motion driving the two processes, and we therefore cannot directly rely on existing result to tackle our problem. In these so-called degenerate problems, the literature has so far relied on the Pontryagin stochastic maximum principle, see for instance \citeauthor*{hu2019continuous} \cite{hu2019continuous}, but this requires extremely stringent assumptions, such as linear dynamics, which are automatically precluded for SIS/SIR models. We nevertheless prove that in our specific problem, it is possible to identity a whole family of contract representations, which is different from the (unique) one obtained in \cite{cvitanic2018dynamic}, but which still allows us to re-interpret the problem of the principal as a standard stochastic control problem. As far as we know, ours is the first paper in the literature which uses a dynamic programming approach to solve a degenerate principal--agent problem, and this constitutes our main mathematical contribution.
 
 \medskip
\textbf{Numerical results and policy-related implications.} Unfortunately, there is no way to extract from our model explicit results, especially on the shape of optimal controls. It is therefore necessary to perform numerical simulations, by implementing semi-Lagrangian schemes.
The numerical results for the SIR model are conclusive, in the sense that they confirm the relevance of a tax and testing policy to improve the control of an epidemic. First, in the benchmark case, \textit{i.e.} when the government does not put into place a specific policy, the efforts of the population are not sufficient to contain the epidemic. In our opinion, this supports the need for incentives. Indeed, if a tax policy is put into place, even in the absence of a specific testing policy, the population is encouraged to significantly reduce its interactions, thus containing the epidemic until the end of the period under consideration. Moreover, if the government also adjusts the testing policy, less effort is required on the population side, so individuals can interact almost in a business-as-usual fashion, and the epidemic is largely contained by the targeted isolation of positively-tested individuals. However, in both cases, the population relaxes its effort at the very end of the fixed lockdown period, leading to a resumption of the epidemic at that point. We obtain similar results in the case of a SIS model (see \cite[Appendix A]{hubert2020incentives}).

\medskip
{\small\textbf{Notations.} We let $\N^\star$ be the set of positive integers, $\R_+ :=[0,\infty)$ and $\R_+^\star:=(0,\infty)$. We fix a time horizon $T>0$ corresponding to the lockdown length chosen, \textit{a priori}, by the government. For every $n\in\N^\star$, $\S^n$ represents the set of $n \times n$ symmetric positive matrices with real entries. We also denote by $\Cc^n$ the space of continuous functions from $[0,T]$ into $\R^n$, and simplify notations when $n=1$ by setting $\Cc:=\Cc^1$. The set $\Cc^n$ will always be endowed with the topology associated to the uniform convergence on the compact $[0,T]$. For every finite dimensional Euclidean space $E$, and any $n\in\N^\star$, we let $\Cc_b(E,\R)$ be the space of bounded, continuous functions from $E$ to $\R$, as well as $\Cc^n_b(E,\R)$ the subset of $\Cc_b(E,\R)$ of all $n$-times continuously differentiable functions on $E$, with bounded derivatives. For every $\varphi\in \Cc^2_b(E,\R)$, we denote by $\nabla\varphi$ its gradient vector, and by $D^2\varphi$ its Hessian matrix. }

\section{Informal pandemic models and main results}

In this section, in order to highlight the results we obtain throughout this paper, we present our model in an informal way, and refer the reader to \Cref{sec:rigorous_maths} for the rigorous mathematical study. In particular, we first detail the compartmental epidemic models we consider to represent the spreading of the virus, namely a stochastic version of the well-known SIS and SIR models, and how both the population and the government can impact these dynamics. We then describe their optimal control problems, together with the Stackelberg game in which they are involved. Finally, we summarise our theoretical findings, which will prove useful for the numerical resolution described in \Cref{sec:numerical}.

\subsection{Controlled stochastic SIS/SIR dynamics}\label{ss:SIS-SIR}

At the beginning of an epidemic, it is unlikely that decision-makers, let alone the population, will have sufficient information to conclude that infected individuals become immune to the virus in question once they have recovered. This is particularly true when the virus is new, as in the case of the COVID-19. For this reason, we choose to address in our study both SIS and SIR compartmental models. The SIS model considers that infected individuals do not develop an immunity to the disease, and thus assume that an infected individual can, after recovery, re-contract the disease. Conversely, the SIR compartment model involves a third class, namely the `Recovered', \textit{i.e.}, individuals who have contracted the disease, are now cured, and especially immune to the virus under consideration. In order to make our study more comprehensive, we consider a meta-model, whose epidemic pattern is described by \Cref{fig:SIR_model}, and which allows us to deal with the two compartmental models mentioned above. We denote by $(S_t,I_t,R_t)$ the proportion of individuals in each state `Susceptible', `Infected' and `Recovered' at time $t \ge 0$. We describe below the main parameters, and whether they are controlled or not, which allows to progressively construct the final specification of the epidemic in terms of stochastic dynamics satisfied by $(S,I,R)$, given by the system \eqref{eq:SIS_SIR}.

\begin{figure}[!ht]
	\centering
	\begin{tikzpicture}
		\node [rectangle] (B) at (-3,0) {};
		\node [rectangle, draw, rounded corners=4pt, fill=blue!20, text centered] (S) at (-0,0) {\bf Susceptible};
		\node [rectangle, draw, rounded corners=4pt, fill=red!20, text centered, dotted, thick] (I) at (4,0) {\bf Infected};
		\node [rectangle, draw, rounded corners=4pt, fill=gray!20, text centered, dashed] (D) at (4,-2.5) {\bf Death};
		\node [rectangle, draw, rounded corners=4pt, fill=green!20, text centered, dashed, right of=I] (R) at (7,0) {\bf Recovery};
		\draw[->, draw=black, thick]  (B) edge  node[midway, above] { \bf $\lambda \drm t$}(S);
		\draw[->, draw=black, thick] (S) edge  node[midway, below] {\bf $\beta_t S_t I_t \drm t$}(I);
		\draw[->, draw=black, dashed, thick] (I) edge  node[midway, fill=white] {\bf $(\gamma+\mu) I_t \drm t$}(D);
		\draw[->, draw=black, dashed, thick]  (S) edge[bend right]  node[midway, left] {\bf $ \mu S_t \drm t \quad$}(D);
		\draw[->, draw=black, thick]  (I) edge  node[midway, above] {\bf $ \rho I_t \drm t$ }(R);
		\draw[->, draw=black, dashed, thick]  (R) edge[midway, bend left]  node[midway, right] {\bf $\quad \mu R_t \drm t$}(D);
		\draw[->, draw=black, thick]  (I) edge[bend right]  node[midway, above] {\bf $ \nu I_t \drm t$}(S);
	\end{tikzpicture}
	\caption{SIS and SIR models with demographic dynamics}
	\label{fig:SIR_model}
\end{figure}

\medskip

\textbf{Uncontrolled parameters.} Some parameters are common in the SIS and SIR models, as highlighted by \Cref{fig:SIR_model}. In particular, both models involve three non-negative parameters: $\lambda$, $\mu$ and $\gamma$. While the parameters $\lambda$ and $\mu$ represent respectively the birth and (natural) death rates among the population, and therefore reflect the demographic dynamics unrelated to the epidemic\footnote{It should be noted that if the length of the epidemic is relatively short in relation to the life expectancy at birth, the demographic dynamics become less relevant and may be dismissed altogether, by setting $\lambda = \mu = 0$. Nevertheless, for the sake of generality, we choose to take these dynamics into account, in order to allow for a straightforward application of our study to other types of epidemics.}, while $\gamma$ represents the death rate associated to the disease. Conversely, the two non-negative constant rates $\nu$ and $\rho$ are specific to the SIS and SIR models respectively. More precisely, $\nu$ corresponds to the rate at which an infected individual returns, after recovery, to the class of susceptible individuals, while $\rho$ represents the recovery rate in the SIR model, \textit{i.e.}, the rate at which individuals who have contracted the disease are cured, and therefore immune to the virus under consideration. All the aforementioned parameters, \textit{i.e.} $\lambda,\mu,\gamma, \nu$ and $\rho$ are homogeneous to the inverse of our unit of time, \textit{i.e.} days, and are assumed to be constant and exogenous.

\medskip

\textbf{Control of the transmission rate.} The transmission rate $\beta$ of the disease is defined as the average number of contacts made by an average infective per unit of time that leads to an infection, and is therefore also homogeneous to the inverse of our unit of time, \textit{i.e.} days. In contrast to the previous parameters, $\beta$ is assumed here to be endogenous and time-dependent, in order to model the influence that the population can have on this rate. Indeed, the transmission rate of an epidemic depends essentially on two factors: the disease characteristics and the contact rate within the population. Although the population cannot modify the disease characteristics, each individual can make a costly effort to reduce his/her contact rate with other individuals in the population. With this in mind, we first assume that the constant initial transmission rate of the disease, \textit{i.e.}, without any control measures or particular effort from the population, is given by some level $\overline \beta > 0$. We then consider that the population can deviate from this initial transmission rate, namely by choosing, at some cost, a process $\beta \in \Bc$, assumed to be $B$-valued for $B:=[0,\beta^{\rm max}]$, where the constant $\beta^{\rm max} \geq \overline \beta$ represents the maximum rate of interaction that can be considered.\footnote{We refer to \Cref{ss:interaction_population} for a more precise definition of the set $\Bc$, taking into account the information flow in the model.} 

\medskip

\textbf{Compartmental model with uncertainty.} The use of a deterministic model is widespread and generally justified for most epidemics. However, when considering for example the COVID-19 pandemic, it appears that the number of infected individuals is not so simple to quantify and estimate. Indeed, without a large testing campaign, it seems complicated to know precisely the actual number of susceptible and infected, especially because of the absence of symptoms for a significant proportion of infected individuals. As a consequence, it seems more realistic for our purpose to represent the spread of the epidemic by a stochastic dynamic, which is inspired by the versions of stochastic SIS and SIR models respectively considered by \citeauthor*{gray2011stochastic} in \cite[Section 2]{gray2011stochastic} and \citeauthor*{tornatore2005stability} in \cite{tornatore2005stability}. More precisely, we consider the following dynamic for the epidemic, where the proportion of infected and susceptible are impacted at each time $t$ by a Brownian motion $W_t$
\begin{equation}\label{eq:SIS_SIR}\begin{cases}
		\displaystyle S_t=s_0+\int_0^t\big(\lambda -\mu S_s +\nu I_s-\beta_s \sqrt{\alpha_s}S_sI_s\big)\mathrm{d}s+\int_0^t\sigma \alpha_s S_s I_s \mathrm{d} W_s,\; t\in[0,T],\\[0.8em]
		\displaystyle  I_t=i_0-\int_0^t\big((\mu+\nu+\gamma+\rho) I_s-\beta_s\sqrt{\alpha_s} S_s I_s\big) \mathrm{d}s -\int_0^t\sigma \alpha_s S_s I_s \mathrm{d} W_s,\; t\in[0,T], \\[0.8em]
		\displaystyle  R_t=r_0-\int_0^t  (\rho I_s -\mu R_s) \mathrm{d}s,\; t\in[0,T],
	\end{cases}
\end{equation}
for a given initial distribution of individuals at time $0$, denoted by $(s_0,i_0,r_0) \in \R^3_+$ and assumed to be known. Note that to recover a stochastic SIS model, one has to set $\rho = 0$, and conversely $\nu = 0$ for a SIR.

\begin{remark}
	There exist several different versions of stochastic {\rm SIS/SIR} models, see among others the works by {\rm \citeauthor*{allen2008introduction} \cite{allen2008introduction} and \citeauthor*{greenwood2009stochastic} \cite{greenwood2009stochastic}}, in addition to those already mentioned in the introduction. In this paper, we assume that the uncertainty giving rise to the emergence of the Brownian motion is related to the interaction rate $\beta$. More precisely, here, $\beta$ is no longer constant compared to deterministic model but subject to random shocks, \textit{i.e.},
	$\beta \drm t \longleftarrow  \beta \drm t +\sigma \drm B_t$.
	We refer to the works by {\rm \citeauthor*{gray2011stochastic} \cite{gray2011stochastic}} and {\rm \citeauthor*{lesniewsk2004epidemic} \cite{lesniewsk2004epidemic}} for more details on the construction of such stochastic models. However, we would like to emphasise that, although we have chosen a specific dynamic, and a formulation in terms of rate and non-dimensionless groups, the general approach we develop in this paper can be adapted in a straightforward way to various stochastic models and other formulations.
\end{remark}

\textbf{Testing policy.} In addition to the parameters described above---the constant rates $\lambda$, $\mu$, $\gamma$, $\rho$ and $\nu$, and the population's control $\beta$---this stochastic version includes two new parameters: a fixed and deterministic parameter $\sigma > 0$, homogeneous to the inverse of the square root of our time unit, and a dimensionless time-dependent process $\alpha$, representing the actions of the government in terms of testing policy. More precisely, we first assume that, without any specific effort of the government, $\alpha$ is equal to $1$. Then, the government can choose to increase, at some cost, the number of tests in the population, represented by a decrease of the parameter $\alpha$, thus reducing the volatility of the processes $S$ and $I$. Hence, both the population and the government have a clearer view of the proportion of susceptible and infected, and thus on the epidemic. In particular, this control $\alpha$ of the government is assumed to be $A$-valued, where $A:=[\eps,1]$ for a small parameter $\eps\in(0,1)$\footnote{The lower bound $\eps$ is here to insist on the fact that it is not possible, or prohibitively expensive, to cancel completely the uncertainty linked to the disease's dynamics, by taking $\alpha$ to be $0$.}, and we denote by $\Ac$ the corresponding set of processes.\footnote{We refer to \Cref{ss:weak_formulation} for the rigorous definition of the set $\Ac$.}

\medskip

In addition, the testing policy allows the government to isolate positively-tested individuals. More precisely, without any testing policy, \textit{i.e.} $\alpha=1$, the government cannot isolate contaminated individuals efficiently. In this case, all infected people spread the disease, and the transmission rate of the virus is given by $\beta$. Conversely, if a testing policy is implemented by the government, \textit{i.e.} $\alpha<1$, we consider that individuals with positive test results can be isolated, and as a consequence less infected people spread the disease. In this case, the effective transmission rate is lower. We however do not assume that the impact of the testing policy on the volatility of $S$ and $I$, and on the transmission rate has the same magnitude: we expect a lower reduction of the effective transmission rate, compared to the volatility reduction for a given policy $\alpha$. Indeed, it is easier to reduce the uncertainty on the number of infected people, compared to actually isolate individuals who have been identified as infected. We thus assume a linear dependency with respect to $\alpha$ for the volatility of both $S$ and $I$, while the effective transmission rate is chosen equal to $\beta \sqrt{\alpha}$, so that the number of infected people spreading the disease at time $t$ is actually given by $\sqrt{\alpha} I_t$.

\begin{remark}
To be more realistic, the implementation of the testing policy could be modelled through the addition of a supplementary state, to capture the individuals under quarantine. The theoretical approach developed in this paper can easily be adapted to this purpose, and even for more refined compartmental models. However, the complexity of the numerical resolution increases drastically by adding a state, as mentioned in {\rm \Cref{sec:extensions}}. We therefore make the choice to limit the number of states, by considering that the testing policy has a direct impact on the effective transmission rate. Nevertheless, this shortcut should not alter the significance of our results in terms of appropriate policies, even if a more precise model would obviously give more relevant quantitative results.
\end{remark}

\subsection{The Stackelberg equilibrium}

In addition to the choice of a testing policy, the government can also incentivise the population to limit their social interactions, in order to decrease the transmission rate of the disease, by introducing financial penalties. More precisely, at time $0$, the government informs the population about its testing policy $\alpha\in\Ac$, as well as its fine policy $\chi \in \mathfrak C$\footnote{See \Cref{ss:def_contract} for a rigorous definition of the set $\mathfrak C$ of admissible fine policies.}, for the lockdown period $[0,T]$. Informally, while the testing policy directly impact the dynamic \eqref{eq:SIS_SIR} of the epidemic, the fine policy will play an indirect role: by being indexed on the proportion of susceptible and infected individuals, this tax will incentivise the population to decrease the transmission rate $\beta$, in order to limit the spread of the epidemic. Knowing these policies, the population will choose an interacting behaviour according to the following rules
\begin{enumerate}[label=$(\roman*)$]
\item an increase in the tax lowers its utility;

\vspace{-0.6em}
\item an increase in the level of interaction (up to a specific threshold, namely $\overline \beta$) improves its well-being;

\vspace{-0.6em}
\item the population is scared of having a large number of infected people.
\end{enumerate}
Then, by anticipating the optimal response of the population to a given policy $(\alpha,\chi)\in\Ac\times\mathfrak C$, the government will optimise this policy in order to maximise its own expected utility.

\subsubsection{Population optimisation problem}\label{sss:pop_problem}

For a given policy $(\alpha,\chi)\in\Ac\times\mathfrak C$, we assume that the population solves the following optimal control problem:
\begin{align}\label{pb:agent_informal}
V_0^{\rm A} (\alpha,\chi) := \sup_{\beta\in \Bc} \mathbb E \bigg[\int_0^T u(t,\beta_t, I_t)\mathrm{d}t + U(-\chi)\bigg],
\end{align}
where $u:[0,T] \times B \times \R_+ \longrightarrow \mathbb R$ and $U:\R\longrightarrow \R$ are continuous functions in all their arguments, and $U$ is a bijection from $\R$ to $\R$. Given a pair $(\alpha, \chi)$, the set of optimal contact rates $\beta$ will be denoted $\Bc^\star(\alpha, \chi)$.\footnote{Once again, the reader is referred to \Cref{ss:def_contract}, and more precisely to \Cref{eq:Bc_star} for a rigorous definition of $\Bc^\star$.}

\medskip

The functions $u$ and $U$ should be interpreted as functions translating respectively the actual value of interaction from the point of view of the population, and the disutility associated to the fine. More precisely, the function $U$ is assumed to be an increasing function, according to $(i)$ above. Concerning the function $u$, it should be non-decreasing in the second variable up to $\overline \beta$, and then non-increasing, modelling $(ii)$ above. On the other hand, the function $u$ is assumed to be non-increasing with respect to the proportion of infected individual in the population. In particular, this allows to take into account both the fear of the infection (as mentioned in $(iii)$ above) and the cost that is incurred if an individual is infected.\footnote{From the population's point of view, this cost should not actually be expressed in terms of money, but mainly corresponds to medical side effects or general morbidity. We refer to \citeauthor{anand1997disability} \cite{anand1997disability}, \citeauthor{zeckhauser1976where} \cite{zeckhauser1976where} and \citeauthor{sassi2006calculating} \cite{sassi2006calculating}, for an introduction to QALY/DALY (Quality- and Disability-Adjusted Life-Year), the generic measures of disease burden used in economic evaluation to assess the value of medical interventions.}
Moreover, we choose to normalise the utility of the population to zero when there is no epidemic. In other words, if $i_0 = 0$, then $I_t = 0$ for all $t \in [0,T]$, and thus the utility of the population should be equal to $0$. With this in mind, we assume first that $U(0) = 0$, which means that without a fine, the population does not suffer any disutility. Second, when there is no epidemic, the population should not reduce its social interaction, meaning that for all $t \in [0,T]$, $\beta_t = \overline \beta$. This leads us to assume that $u(t, \overline \beta, 0)  = 0$, for all $t \in [0,T]$.

\subsubsection{Government optimisation problem}\label{sss:gov_problem}

As already explained, the government can choose the tax $\chi \in \mathfrak C$ paid by the population at the end of the lockdown period, together with the testing policy $\alpha \in \Ac$, and we informally write its optimisation problem as
\begin{align}\label{pb:principal_informal}
    V^{\rm P}_0 := \sup_{(\alpha,\chi) \in \Xi} \sup_{\beta \in \Bc^\star (\alpha,\chi)}
    \mathbb E \bigg[\chi - \int_0^T \big( c(I_t) + k(t, \alpha_t, S_t, I_t) \big)\mathrm{d} t \bigg],
\end{align}
where $c : \R_+ \longrightarrow \R_+$ and $k : [0,T] \times A \times \R_+ \times \R_+ \longrightarrow \R$ are continuous functions. The function $c$ denotes the instantaneous cost implied by the proportion of infected people, and is thus assumed to be non-decreasing, while the function $k$ represents the cost of the testing policy. 

\medskip

In addition, the set $\Xi$ takes into account the so-called participation constraint for the population. This means that the government is benevolent, which translates into the fact that it has committed to ensure that the living conditions of the population do not fall below a minimal level. Mathematically, the government can only implement policies $(\alpha,\chi)\in\Ac\times \mathfrak C$ such that $V_0^{\rm A}(\alpha,\chi)\geq \underline v$, where the minimal utility $\underline v\in\R$ is given. This is what is encoded in the set $\Xi$.

\begin{remark}
	Recall that, while the testing policy $\alpha \in \Ac$ directly impact the dynamic of the epidemic, the tax $\chi \in \mathfrak C$ plays an indirect and incentive role. Indeed, in the moral hazard situation of interest, \textit{i.e.} when the government cannot observe the population's efforts to reduce the transmission rate of the virus, the government can only encourage the population to make efforts, by implementing an incentive scheme. In particular, by indexing the tax $\chi \in \mathfrak C$ in an optimal way on the paths of the stochastic processes $S$ and $I$, which are the only variables observable by the government in this moral hazard context, the population will be incentivised to decrease the transmission rate of the epidemic.
\end{remark}

\subsubsection{Utilities and cost specifications}\label{ex:utility_pop}

We now provide the specification for the utility and cost functions of the population and the government, respectively, that will be used for the numerical simulations in \Cref{sec:numerical}. Nevertheless, we would like to emphasise that our general approach in {\rm \Cref{sec:rigorous_maths}} does not take these specifications into account, and therefore our theoretical results are valid for very general forms of cost and utility functions. This naturally implies that alternative parameterisations could be chosen for the numerical part, if one wants to capture some costs or effects that are neglected here, for example the individual cost of being infected or the possible scaling costs of testing.

\medskip

\textbf{For the population.} Concerning the population's utility $U$ with respect to the tax $\chi$, we choose a mixed {\rm CARA}--{\rm risk-neutral} utility function, so that $U(0)=0$, and $U$ is an increasing and strictly concave bijection from $\R$ to $\R$
\[
U(x):=\frac{1-\mathrm{e}^{-\theta_{\rm p} x}}{\theta_{\rm p}}+\phi_{\rm p} x,\; x\in\R, \; \text{\rm for some } (\theta_{\rm p}, \phi_{\rm p})\in(0,+\infty)^2.
\]
For later use, we record that the inverse of $U$, denoted by $U^{(-1)}$, can be expressed in terms of the {\rm LambertW} function\footnote{See {\rm\citeauthor*{corless1996lambertw} \cite{corless1996lambertw}} for more details on the {\rm LambertW} function.}
\[
U^{(-1)}(y):=\frac1{\theta_{\rm p}}\mathrm{LambertW}\Big(\phi_{\rm p}^{-1}\mathrm{e}^{\frac{1-\theta_{\rm p} y}{\phi_{\rm p}}}\Big)+\frac{\theta_{\rm p} y-1}{\theta_{\rm p}\phi_{\rm p}},\; y\in\R.
\]
Note that the previous function $U$ defines how the population values dollars (the unit of the tax) in terms of units of utility, called \textit{util}. More precisely, \$$1$ corresponds to $U(1)$ utils, and conversely $1$ util is worth \$$U^{(-1)}(1)$.

\medskip  

Next, concerning the running utility function $u$, we can consider the following separable form
\begin{align}\label{eq:u_separable}
    u(t,b,i) &:= - u_{\beta} (t, b) - u_{\rm I} (i),\; (t,b,i)\in[0,T] \times B \times \R_+,
\end{align}
where the functions $u_\beta :[0,T] \times \R_+ \longrightarrow \R$ and $u_{\rm I}:\R_+ \longrightarrow \R$ should respectively capture the two rules $(ii)$ and $(iii)$. 
The function $u_{\rm I}$ could model the fact that the population underestimates the epidemic when the proportion of infected is close to $0$, while when it becomes large, the population is irrationally afraid. For instance, we can choose
\begin{align}\label{eq:u_I}
u_{\rm I} (i) = c_{\rm p} i^3,\; i \in \R_+,\; \text{\rm for some } c_{\rm p} \geq 0 \; \text{ (in util$\cdot$day$^{-1}$)}.
\end{align}
Finally, the function $u_\beta$ must first acknowledge that it is costly for the population to deviate from its usual contact rate. Second, during the lockdown period, the social cost of distancing measures can become more and more important for the population, and we thus expect the cost $u_\beta$ to also reflect this sensitivity with respect to time. More precisely, we can consider the following form
\begin{align}\label{eq:u_beta}
	u_{\beta} (t,b) := \eta_{\rm p} \psi(t) (\overline \beta-b)^2/2, \; (t,b)\in[0,T]\times B, \; \text{\rm for some $\eta_{\rm p} > 0$} \; \text{ (in util$\cdot$day)}.
\end{align}
Above, $\psi$ should be a non-decreasing and convex $\R_+$-valued function, to represent the increasing aversion to the lockdown for the population as time passes. In other words, deviating from its usual level of interaction entails a social cost to the population that is greater as the duration increases. More precisely, we can consider 
\[
\psi(t):= \mathrm{e}^{\tau_{\rm p} t},\; t\in[0,T], \; \text{\rm for some $\tau_{\rm p}>0$} \; \text{ (in day$^{-1}$)}.
\]

\textbf{For the government.} Regarding the cost function $c$, one can choose for instance the following linear--quadratic form $c(i):=c_{\rm g}(i+i^2)$, $i\in\R_+$, for some $c_{\rm g}$ in dollars per days, whose value is greater than $c_{\rm p}$ to take into account that the marginal cost linked to the proportion of infected people in the population is higher for the government than for the population itself.
More precisely, the linear part represents the cost per unit of infected people, while the quadratic part highlights the cost induced by the saturation of the healthcare system when the number of infected is too high. Compare to the cubic cost chosen for the population in {\rm \Cref{ex:utility_pop}}, this choice emphasises that, on the one hand, even for a small number of infected, the marginal cost faced by the government is not close to $0$ $($hence the linear term$)$. On the other hand, the population is more likely to incur very high and lasting costs in terms of QALY/DALY when the disease spreads uncontrollably, when compared to the government which mostly faces pecuniary costs.

\medskip

Concerning the cost $k$ associated with the testing policy, we may consider the following function
\begin{align*}
    k(t, a, s, i) := \dfrac{\kappa_{\rm g}}{\eta_{\rm g}} \big( a^{-\eta_{\rm g}} - 1 \big),\; (t,a,s,i)\in[0,T] \times A \times \R_+^2, \; \text{for some } \kappa_{\rm g} > 0 \; \text{ (in \$$\cdot$day$^{-1}$) and} \; \eta_{\rm g} > 0.
\end{align*}
This function highlights the fact that it is very costly, if not impossible, to eliminate the uncertainty associated with the epidemic by choosing $\alpha = 0$, while the cost of a no-testing policy $(\alpha = 1)$ is null. Indeed, on a country-wide scale, it seems impossible to develop a testing policy sufficient to know exactly the proportion of susceptible and infected.

\subsubsection{Two alternative problems}\label{sec:bench}

As already mentioned, the framework of interest in this paper is that of moral hazard, \textit{i.e.} when the government does not observe the efforts of the population, and must therefore find an optimal incentive scheme. However, in order to test the relevance of this incentive scheme, it is important to compare our results with those obtained in two more traditional settings: $(i)$ a benchmark case, when the government does not interfere, and $(ii)$ when there is no moral hazard (\textit{first-best case}), and therefore the government can enforce the optimal transmission rate on the population.

\begin{enumerate}[label=$(\roman*)$]
	\item When the government does not interfere, \textit{i.e.} without tax and testing policy, it suffices to solve \eqref{pb:agent_informal} for $\alpha=1$ and $\chi=0$. Since we assumed that $U(0) = 0$, the optimisation problem faced by the population boils down to the following standard control problem, whose associated PDE will be given by \eqref{eq:HJB_benchmark}
	\begin{equation}\label{pb:agent_benchmark}
		V_0^{\rm A}(1,0) = \sup_{\beta\in \Bc} \mathbb E\bigg[\int_0^T u(t, \beta_t, I_t)\mathrm{d}t \bigg].
	\end{equation}
	\item The first-best case is the best possible scenario where the government can enforce whichever interaction rate $\beta\in\Bc$ it desires, and simply has to satisfy the participation constraint of the population. From the practical point of view, this could correspond to a situation where the government is able to track every individual and force them to stop interacting. In this case, the problem faced by the government is 
	\begin{equation}\label{eq:FB}
		V_0^{\rm P,FB}:=\sup_{(\alpha,\chi,\beta)\in\Ac\times\mathfrak C\times \Bc} \mathbb E \bigg[\chi - \int_0^T \big( c(I_t) + k(t, \alpha_t, S_t, I_t) \big)\mathrm{d} t \bigg], \; \text{s.t.} \; \mathbb E \bigg[\int_0^T u(t,\beta_t, I_t)\mathrm{d}t + U(-\chi)\bigg] \geq \underline v.
	\end{equation}
\end{enumerate}

\subsection{Main results and comparison}\label{ss:results}

In this section, we present the main theoretical results obtained when the dynamic of the epidemic is given by \eqref{eq:SIS_SIR}: we begin by outlining the results in the two alternative problems mentioned above---the benchmark and first-best cases---then explain the optimal form of the tax in the moral hazard case, and conclude with the resolution of the government's problem in this general case. In short, the solution to any of the three problems is equivalent to solving the relevant Hamilton--Jacobi--Bellman (HJB for short) equation.

\subsubsection{The benchmark case: without tax and testing policies}\label{sec:results_bench}
As mentioned in \Cref{sec:bench} $(i)$, this benchmark problem is a standard Markovian stochastic control problem. 
In this case, the population's Hamiltonian is defined, for $t \in [0,T]$, $(s,i) \in (\R_+^\star)^2$, $p :=(p_1,p_2) \in \R^2$ and $M \in \S^2$ by
\begin{align}\label{eq:hamiltonien_benchmark}
    H^{\rm A} (t, s,i,p, M) := & \sup_{b \in B} \big\{ u(t, b, i)-b s i (p_1 - p_2)  \big\}
    + (\lambda -\mu s +\nu i) p_1
    - (\mu+\nu+\gamma+\rho) i p_2 + \frac{\sigma^2 (si)^2}2 \begin{pmatrix} 1\\ -1\end{pmatrix}^\top M \begin{pmatrix} 1\\ -1\end{pmatrix}.
\end{align}
We then have the natural identification $V_0^{\rm A}(1,0)=v(0,s_0,i_0)$, where $v$ solves the associated HJB equation
\begin{align}\label{eq:HJB_benchmark}
    - \partial_t v(t,s,i) - H^{\rm A} (t, s,i,\nabla v, D^2v) = 0, \; (t,s,i)\in\Dc,
\end{align}
with terminal condition $v(T,s,i) = 0,\; (s,i)\in\Dc_T$; where, for a particular function $F$ defined by \eqref{eq:boundary_time} in \Cref{ss:weak_formulation},
\[
\Dc := \big\{ (t,s,i) \in [0,T)\times(\R_+^\star)^2  : 0 < s+i \leq F(t,s_0,i_0) \big\},\; \Dc_T:=\big\{ (s,i)\in  \R_+^2 : 0 < s+i < F(T,s_0,i_0) \big\}.
\]

\begin{remark}
Note that if we consider a separable utility $u$, for example of the form in {\rm \Cref{ex:utility_pop}}, the maximiser of the Hamiltonian is explicitly given by $b^\circ (s,i,p_1-p_2)$, where $b^\circ$ is defined for all $(s,i,z) \in (\R_+^\star)^2 \times \R$ by
	\begin{align}\label{eq:b_circ}
		b^\circ (s,i,z) &:= \beta^{\rm max}\mathbf{1}_{\{siz<\eta_{\rm p}\psi(t)(\overline\beta-\beta^{\rm max})\}}+\bigg(\overline \beta - \dfrac{s iz}{\eta_{\rm p}\psi(t)}\bigg)\mathbf{1}_{\{\eta_{\rm p}\psi(t)(\overline\beta-\beta^{\rm max})\leq siz\leq \overline \beta \eta_{\rm p}\psi(t)\}}.
	\end{align}
In particular, the optimal interaction rate is given in this case by $\beta^\circ_t=b^\circ (S_t, I_t, (\partial_s v-\partial_iv)(t,S_t,I_t))$, $t\in [0,T]$.
\end{remark}

\subsubsection{The first-best case: without moral hazard}

To find the optimal interaction rate $\beta \in \Bc$, as well as the optimal contract $(\alpha, \chi) \in \Ac \times \mathfrak C$, in the first-best case, one has to solve the government's problem defined by \eqref{eq:FB}. Mathematical details are postponed to \Cref{sss:fb_maths}, but we present here an overview of the main results. To take into account the participation constraint, one has to introduce the associated Lagrangian. Given a Lagrange multiplier $\varpi>0$, we first remark that the optimal tax is constant and given by $\chi^\star(\varpi):=-\big(U^\prime\big)^{(-1)}\big(1 / \varpi\big).$ Then, defining for any $\varpi>0$
\begin{align}\label{eq:V0_varpi_FB}
    \overline V_0(\varpi):=\sup_{(\alpha,\beta)\in\Ac\times\Bc}\E\bigg[\int_0^T\big(\varpi u(t,\beta_t,I_t)-c(I_t)-k(t,\alpha_t,S_t,I_t)\big)\mathrm{d}t\bigg],
\end{align}
we have
\begin{align}
\label{eq:V0_P_FB}
V_0^{\rm P,FB} =
\inf_{\varpi>0}\Big\{ \chi^\star(\varpi)+\varpi \big(U\big(-\chi^\star(\varpi)\big)-\underline v\big)
+ \overline V_0(\varpi)
\Big\}.
\end{align}
Note that $\overline V_0(\varpi)$ is the value function of a standard stochastic control problem, and therefore we expect to have $\overline V_0(\varpi)= v^\varpi(0,s_0,i_0)$, for a function $v^\varpi:[0,T]\times\R_+^2 \longrightarrow \R$ solution to the following HJB PDE
\[
\displaystyle
-\partial_t v^\varpi(t,s,i) + c(i) 
- (\lambda-\mu s+\nu i) \partial_s v^\varpi
+ (\mu+\nu+\gamma+\rho)i \partial_i v^\varpi
- \Hc^{\varpi} (t,s,i,\partial v^\varpi, D^2 v^\varpi) =0,\; (t,s,i)\in \Dc,
\]
with terminal condition $v^\varpi(T,s,i)=0,\; (s,i)\in\Dc_T$, where the Hamiltonian is defined, for $t \in [0,T]$, $(s,i) \in (\R_+^\star)^2$, $p :=(p_1,p_2) \in \R^2$ and $M \in \S^2$ by
\begin{align*}
    \Hc^{\varpi} (t,s,i,p,M) := \sup_{a \in A} \bigg\{
\sup_{b \in B} \big\{
\varpi u(t,b,i) - b si \sqrt{a}  (p_1-p_2)
\big\} - k(t,a,s,i) 
+ \dfrac12 \sigma^2 (si)^2 a^2 (M_{11} -2 M_{12} + M_{22})
\bigg\}.
\end{align*}

\begin{remark}
	If we consider for instance the utilities given in {\rm\Cref{ex:utility_pop}} for the utility $u$, the optimal interaction rate is given for all $t \in [0,T]$ by $\beta^{\varpi}_t = b^{\varpi} \big(S_t, I_t, \partial v^\varpi (t,S_t,I_t), \alpha_t \big)$, for $\alpha \in \Ac$ and a Lagrange multiplier $\varpi > 0$, where $b^{\varpi} (s,i,p,a) := b^\circ \big( s, i,\sqrt{a} (p_1-p_2) / \varpi \big), \; \text{for all} \; 
		(s,i,p,a) \in (\R_+^\star)^2 \times \R^2 \times A,$
	recalling that $b^\circ$ is defined by \eqref{eq:b_circ}.
\end{remark}

\subsubsection{Relevant form of tax policy}\label{sss:relevant_contract}

Let us now return to the main problem, \textit{i.e.} the case with moral hazard. One of the main theoretical result of our study is given by \Cref{thm:agent}. Informally, this theorem states that given an admissible contract, namely a testing policy $\alpha \in \Ac$ and a tax $\chi \in \mathfrak C$, there exist a unique $Y_0$ and $Z$ such that the following representation holds
\begin{align}\label{eq:Uxi_representation}
	U(-\chi) = Y_0
	- \int_0^T \Big(Z_t (\mu+\nu+\gamma+\rho) I_t +  u(t, \beta^\star_t, I_t) - \beta^\star_t\sqrt{\alpha_t} S_t I_t Z_t \Big) \drm t
	- \int_0^T Z_t \drm I_t,
\end{align}
where $\beta^\star$ is the unique optimal contact rate for the population. More precisely, we can state that for (Lebesgue--almost every) $t \in [0,T]$, $\beta^\star_t := b^\star (t, S_t, I_t, Z_t)$ is the maximiser of the function $b \in B \longmapsto u(t,b,I_t) - b S_t I_t Z_t$. Under some assumptions for existence and smoothness of the inverse of the function $U$, the previous equation naturally gives a representation for the tax $\chi$. Based on \eqref{eq:Uxi_representation}, the tax $\chi$ will be indexed on the variation of the proportion of infected $I$, through the stochastic integral $\int_0^\cdot Z_s\mathrm{d}I_s$, and not on the variation of susceptible $S$ $($though it is indexed on $S$ through the $\mathrm{d}t$ integral$)$. Nevertheless, using the link between the dynamics of $I$ and $S$, we can write \begin{align}\label{eq:Uxi_representation2}
	U(-\chi) =  Y_0
	- \int_0^T \big(u(t, \beta^\star_t, I_t) - \beta^\star_t\sqrt{\alpha_t} S_t I_t Z_t - Z_t (\lambda -\mu S_s +\nu I_s ) \big) \drm t
	+ \int_0^T Z_t \drm S_t.
\end{align}
Through this equation, we can state that the tax can alternatively be indexed on $S$ instead of $I$. Therefore, given the strong link between the number of Susceptible and the number of Infected, it is sufficient to index the tax on only one of these two quantities, and one can therefore choose indifferently to index the tax $\chi$ on the variations of $I$ or $S$. The reader familiar with contract theory in continuous-time will have noticed that the previous representation for the tax $\chi$ is not exactly the expected one. Indeed, referring for instance to \citeauthor*{cvitanic2018dynamic} \cite{cvitanic2018dynamic} the contract is usually the sum of three components: a constant similar to $Y_0$, chosen by the Principal in order to satisfy the participation constraint of the Agent; an integral with respect to time $t \in [0,T]$ of the agent's Hamiltonian; a stochastic integral with respect to the controlled process, \textit{i.e.}, in our framework, $(S,I)$. Neither the representation \eqref{eq:Uxi_representation} nor \eqref{eq:Uxi_representation2} are, {\it a priori} of this form. This difference is due to the fact that the dynamics of $(S,I)$ is degenerated. More precisely, there is a fundamental structure condition in \cite{cvitanic2018dynamic} requiring that the drift of the output process belongs to the range of its volatility. In words, defining for $ (s,i) \in \R_+^2$ and $(a,b) \in A \times B$,
\begin{align*}
	\sigma(i,s,a):=\sigma asi
	\begin{pmatrix}
		1\\
		-1
	\end{pmatrix}, \text{ and }  
	\lambda(s,i,b,a):=\begin{pmatrix}
		\lambda-\mu s+\nu i+b \sqrt{a} si\\
		-(\mu+\nu+\gamma+\rho)i+b \sqrt{a}si
	\end{pmatrix},
\end{align*}
the condition assumed in \cite[Equation (2.1)]{cvitanic2018dynamic} implies that 
$\lambda(s,i,b,a)\propto\sigma(i,s,a)$, for any $(s,i,a,b)\in\R_+^2\times A\times B$, which is obviously impossible here. Therefore, we cannot use directly any existing result in the literature, and we should not expect, {\it a priori}, to be able to obtain a contract representation similar to the one in \cite{cvitanic2018dynamic}, nor that the so-called dynamic programming approach will prove effective in our case. Indeed, as far as we know, such degenerate models have only been tackled using the stochastic maximum principle, see \citeauthor*{hu2019continuous} \cite{hu2019continuous}. However, and somewhat surprisingly, the form we exhibit for the tax is actually strongly related to the usual representation. The reason for this is twofold. First, up to the sign, the volatilities in the dynamics of both $S$ and $I$ are exactly the same. Second, both the processes $S$ and $I$ are driven by the same Brownian motion $W$. Therefore, intuitively, in order to provide incentives to the population, the government can afford to index the tax on only one of the two processes. Mathematically, it is also straightforward to show that given an arbitrary decomposition of the process $Z$ in \Cref{eq:Uxi_representation} of the form $Z=:Z^s-Z^i$, we have exactly the general form provided in \cite{cvitanic2018dynamic}. The main difference is that in \cite{cvitanic2018dynamic}, $Z^s$ and $Z^i$ are both uniquely given, while in our representation, only their difference actually matters. Hence, there is an infinite number of possible representations for the tax $\chi$ in our degenerate model.

\subsubsection{Government's problem in the general case}

Thanks to the reasoning developed in \Cref{sec:rigorous_maths}, we are able to determine the optimal design of the fine policy and the associated optimal effort of the population. In particular, as informally explained in the previous section, to implement a tax policy $\chi \in \mathfrak C$, the government only needs to choose a constant $Y_0$ and a process $Z$. Given these two parameters, we can state that the optimal contact rate for the population is defined by $\beta^\star_t := b^\star(t, S_t, I_t, Z_t,\alpha_t)$, such that the function $b \in B \longmapsto u(t,b,I_t) - b \sqrt{\alpha_t} S_t I_t Z_t$ is maximised for (Lebesgue--almost every) $t \in [0,T]$.\footnote{If we consider separable utilities, as in {\rm\Cref{ex:utility_pop}}, the maximiser $b^\star$ is given for all $(t,s,i,z,a) \in [0,T] \times (\R_+^\star)^2 \times \R\times A$ by $b^\star (s,i,z,a) := b^\circ (s, i, z \sqrt{a})$, recalling that $b^\circ$ is defined by \eqref{eq:b_circ}.} It thus remains to solve the government's problem in order to determine the optimal choice of $Y_0$ and $Z$. The reader is referred to \Cref{ss:solving_gov_pb} for the rigorous government's problem, but, to summarise the results, the optimal process $Z$ as well as the optimal testing policy $\alpha$ are determined so as to maximise the government's Hamiltonian, given by
\begin{align*}
    H^{\rm P} (t, s, i, p, M)
    = &\ \sup_{z \in \R,a\in A} \bigg\{
  b^\star(t, s,i,z,a)\sqrt{a} s i (p_2 - p_1)+
     \dfrac{1}{2} \sigma^2 a^2 (si)^2 f(z,M) 
    -  k(t, a, s, i)    - u^\star (t, s,i,z,a) p_3\bigg\} \\[0.3em]
    &+ (\lambda - \mu s +\nu i) p_1 
    - (\mu+\nu+\gamma+\rho) i p_2
    - c (i),
\end{align*}
for $(t,s,i,p,M)\in[0,T]\times\R_+^2\times\R^3\times\S^3$, and where, in addition for $z \in \R$,
\[
f(z,M) := M_{11} - 2 M_{12} + M_{22} - 2 z (M_{23} -M_{13}) + z^2 M_{33},\; \text{\rm and}\; u^\star (t, s,i,z,a) := u \big(t, b^\star(t, s,i,z,a) ,i\big).
\]
Finally, it remains to solve numerically the following HJB equation, for all $t \in [0,T]$ and $x := (s,i,y) \in \R^3$
\begin{align}
\label{eq:HJBGen}
    - \partial_t v (t, x)  - H^{\rm P} \big(t, x, \nabla_x v, D^2_x v\big) = 0,\; (t,x)\in \Oc,\; v(T,x)= -U^{(-1)}(y),\; x\in\Oc_T,
\end{align}
where the natural domain over which the above PDE must be solved is
\[
\Oc:=\big\{ (t,s,i,y) \in [0,T) \times \R_+^2 \times \R: 0 < s+i < F(t,s_0,i_0) \big\},\; \Oc_T:=\big\{ (s,i,y) \in \R_+^2 \times \R: 0 < s+i < F(T,s_0,i_0) \big\}.
\]

\section{Numerical experiments}\label{sec:numerical}

The results presented in \Cref{ss:results} are quite theoretical: except for the optimal transmission rate, it is complicated to obtain explicit formulae for the other variables sought, in particular for the optimal testing policy $\alpha$, even if we consider separable utility functions as in \Cref{ex:utility_pop}. It is therefore necessary to perform numerical simulations to evaluate the optimal efforts of the population and the government, as well as the optimal tax policy. Given the similarities in the results between the SIS and SIR models, only those related to the SIR model are presented in this section. The reader will find in \cite{hubert2020incentives} the results corresponding to the SIS model.

\subsection{Choice of parameters}

The set of parameters used for the simulations of the epidemic dynamics given by \eqref{eq:SIS_SIR} are provided in \Cref{tab:params_SIR} and are inspired by those chosen by \citeauthor*{elie2020contact} \cite{elie2020contact}. Recall that the parameter $\overline \beta$ denotes the usual contact rate within the population, before the beginning of the lockdown. In other words, $\overline \beta$ represents the initial and effective transmission rate of the disease, without any specific effort of the population. The associated reproduction number $\mathcal R_0$, commonly defined by $\mathcal R_0 := \overline \beta / (\nu+\rho)$ in the literature on epidemic models, is equal to $2.0$, and is thus in the confidence interval of available data, see for example \citeauthor{li2020early} \cite{li2020early}. Then, the parameters $\lambda$ and $\mu$ represent respectively the birth and (natural) death rates among the population, and therefore reflect the demographic dynamics unrelated to the epidemic, while $\gamma$ represents the death rate associated to the disease. To simplify, and since the duration of the COVID-19 epidemic should be relatively short in comparison to the life expectancy at birth, we choose to disregard the demographic dynamics by setting $\lambda = \mu = 0$. In contrast, we set $\gamma = 1\%$, since the mortality associated with the disease appears to be significant. Finally, recall that the parameters $\nu$ and $\rho$ correspond respectively to the recovery rates in the SIS and SIR models. Since we want to consider here a SIR dynamic, we let $\nu = 0$ and $\rho = 0.1$, to account for the average 10-day duration of COVID-19 disease.

\renewcommand{\arraystretch}{1.5}
\begin{table}[!ht]
\centering
\small
\begin{tabular}{|c|c|c|c|c|c|c|c|c|c|c|}
\hline
    $T$ (days) & $(s_0,i_0, r_0)$  & $(\lambda, \mu)$& $\gamma$& $\nu$& $\rho$& $\sigma$& $\overline \beta$\\
    \hline
    $200$ & $(0.99984, 1.07 \times 10^{-4}, 5.3 \times 10^{-5})$ & $(0,0)$ & $0.01$  & $0$ & $0.1$ & $0.1$
    & $0.2$  \\ \hline
\end{tabular}
\caption{\small Set of parameters for the simulations of SIR model.}
\label{tab:params_SIR}
\end{table}

In addition, the following numerical experiments are implemented using the utility and cost functions mentioned in \Cref{ex:utility_pop}.
These functions require to specify several parameters, provided in \Cref{tab:params_contract}. 

\begin{table}[H]
	\small  \centering
	\begin{subtable}[b]{.45\linewidth}
		\centering
		\begin{tabular}{|c|c|c|c|c|c|c|}
			\hline
			Parameters & $c_{\rm p}$ & $\eta_{\rm p}$ & $\theta_{\rm p}$ & $\tau_{\rm p}$ & $\phi_{\rm p}$ & $\beta^{\rm max}$ \\
			\hline
			Values & $0.5$ & $1$ & $4$ & $0$ & $0.5$ & $0.2$  \\
			\hline
		\end{tabular}
		\caption{\small Characteristics of the population.}
		\label{tab:params_contract_pop}
	\end{subtable}
	\begin{subtable}[b]{.45\linewidth}
		\small    \centering
		\begin{tabular}{|c|c|c|c|c|}
			\hline
			Parameters & $\kappa_{\rm g}$ & $c_{\rm g}$ & $\eta_{\rm g}$ & $\varepsilon$ \\
			\hline
			Values & $0.001$ & $1$ & $0.01$ & $0.01$ \\
			\hline
		\end{tabular}
		\caption{\small Characteristics of the government.}
		\label{tab:params_contract_gov}
	\end{subtable}\vspace{-0.7em}
	\caption{\small Set of parameters for cost and utility functions}\label{tab:params_contract}
\end{table}

When not explicitly specified, the simulations presented in this section are performed with the sets of parameters described in \Cref{tab:params_contract,tab:params_SIR}. 
However, the parameters used to describe in particular the utility and cost functions of the population and government are set in a relatively arbitrary way. 
To actually estimate these parameters would require an extensive sociological and economic study, that we do not presume to be able to perform at this stage, and linking, for example, the population's costs to the DALY/QALY concepts already mentioned, and the government's costs to those of the health care system and its possible congestion.
Moreover, there is considerable uncertainty in the literature on the choice of all parameters used to describe the dynamics of the epidemic, in particular because the COVID-19 is a new type of virus.
It will therefore be necessary to study the sensitivity of the results obtained with respect to the selected parameters.

\medskip

Finally, it should be remembered that, in contrast to usual principal--agent problems, the government implements a mandatory tax, which the population cannot refuse. Nevertheless, we consider that the government is benevolent, in the sense that it still wishes to ensure that the utility of the population remains above a certain level, denoted by $\underline v$. To fix this level, we assume that the government wants to ensure at the very least to the population the same living conditions it would have had in the event of an uncontrolled epidemic, \textit{i.e.}, without any effort on the part of neither the population nor the government, meaning $\beta= \overline \beta$, $\alpha = 1$ and $\chi = 0$.
Mathematically, this is equivalent to the following, since $u$ is separable of the form \eqref{eq:u_separable}, such that for all $t \in [0,T]$, $u_{\beta} (t, \overline \beta) =0$ and $u_{\rm I}$ satisfies \eqref{eq:u_I}
\begin{align}\label{eq:reservation_utility}
\underline v 
:= \mathbb \E^{\P^{1,\overline\beta}} \bigg[\int_0^T u(t, \overline\beta, I_t) \mathrm{d}t + U(0)\bigg]
= \mathbb \E^{\P^{1,\overline\beta}} \bigg[- \int_0^T \big( u_{\rm I} (I_t) + u_{\beta} (t, \overline\beta) \big) \mathrm{d}t \bigg]
= - c_{\rm p} \mathbb \E^{\P^{1,\overline\beta}} \bigg[\int_0^T I_t^3 \mathrm{d}t \bigg].
\end{align}
Notice that the reservation utility $\underline v$ is given by the worst case scenario, without any sanitary precaution neither from the population nor from the government. This level may be judged too severe, and one could consider a model where the government is more benevolent. Nevertheless, the value of $\underline v$ should not be of major importance, since it should only impact the initial value $Y_0$.

\subsection{Numerical approach}\label{ss:numerical_approach}

In order to solve \Cref{eq:HJB_benchmark} corresponding to the population's problem in the benchmark case, as well as \Cref{eq:HJBGen} for the government's problem, we need a method permitting to deal with degenerate HJB equations. We choose to implement semi-Lagrangian schemes, first proposed in \citeauthor{camilli1995approximation} \cite{camilli1995approximation}. 
These are explicit schemes using a given time-step $\Delta t$, and requiring interpolation on the grid of points where the equation is solved.
This interpolation can be either linear, as proposed in \cite{camilli1995approximation}, or using some truncated higher-order interpolators, as proposed by \citeauthor{warin2016some} \cite{warin2016some}, leading to convergence of the numerical solution to the viscosity solution of the problem. A key point here, which makes the approach delicate, is that the domain over which the PDEs are solved is unbounded. It is therefore necessary to  define a so-called resolution domain, over which the numerical solution will be actually computed, which on the one hand must be large enough, and which on the other hand creates additional difficulties in the treatment of newly introduced boundary conditions. In order to treat these issues, we use two special tricks:
\begin{enumerate}[label=$(\roman*)$]
    \item picking randomly the control in \eqref{eq:SIS_SIR} for the benchmark case, and in \eqref{eq:edsGen} for the general case, and using the forward SDE with an Euler scheme, a Monte Carlo method allows us to get an envelop of the reachable domain with a high probability.
    Then, given a  discretisation step, the grid of points used by the semi-Lagrangian scheme is defined at each time-step with bounds set by the reachable domain estimated by Monte Carlo.
    Therefore, at time $0$, the grid is represented by one mesh, while their number can reach millions near $T$;
    
    \vspace{-0.6em}
    \item since the scheme is explicit, starting at $t$, it requires to use only some discretisation points at date $t+\Delta t$, and a modification of the scheme is implemented to use only points inside the grid at date $t+ \Delta t$, as shown in \cite{warin2016some}.
\end{enumerate}
Lastly, in dimension 3 or above, parallelisation techniques defined in \cite{warin2016some} have to be used. The numerical results below are obtained using the StOpt library, see \citeauthor*{gevret2018stochastic} \cite{gevret2018stochastic}.

\subsection{The benchmark case}\label{sec:benchSIR}

We first focus on the benchmark case, when the government does not implement any particular policy to tackle the epidemic, \textit{i.e.}, $\alpha = 1$ and $\chi = 0$. Recall that in this case, the population's problem is given by \eqref{pb:agent_benchmark}, and is then equivalent to solving the HJB equation \eqref{eq:HJB_benchmark}. For our simulations, we choose a number of time-steps equal to $200$, and a discretisation step equal to $0.0025$. The interpolator is chosen linear, and the optimal command $b^\circ$ used to maximise the Hamiltonian is discretised with $200$ points given a step discretisation of $0.005$.
Once the PDE is solved, a forward Euler scheme is used to obtain trajectories of the optimally controlled $S$ and $I$, meaning with the optimal transmission rate $b^\circ$. In order to check the accuracy of the method described in \Cref{ss:numerical_approach}, we implement two versions of the resolution: the first version is a direct resolution of 
\eqref{eq:HJB_benchmark} with the Hamiltonian \eqref{eq:hamiltonien_benchmark}; the second one relies on a change of variable. More precisely, we consider $(s,x:=(s+i))$ as state variables, instead of $(s,i)$, and then solve the problem \eqref{eq:HJB_benchmark}, but with a slightly modified Hamiltonian to take into account this change of variable. The advantage of the second representation is that the dispersion of $I_t+S_t$ is zero and thus smaller than the one of $I_t$, leading to the use of grids with a smaller number of points. First, to give an overview of the overall trend, we plot, on \Cref{fig:SIR_Simu_benchmark}, $100$ trajectories of the optimal interaction rate $\beta^\star$, and the associated proportions $S_t$ and $I_t$ of susceptible and infected, using the resolution method $(i)$ mentioned above, \textit{i.e.}, with state variables $(S,I)$. For more accurate trajectories, we compare on \Cref{fig:SIRBenchASim} two different trajectories of the optimal interaction rate $\beta^\star$, together with the corresponding dynamic of the proportion $I$ of infected. For these two simulations, we compare the results given by the two aforementioned methods. More precisely, while the blue curve is obtained through the direct resolution, the orange one results from the second method, \textit{i.e.}, with state variables $(S,S-I)$. Finally, on \Cref{fig:SIRSimusTau}, we test the influence of the parameter $\tau_{\rm p}$ by setting $\tau_{\rm p} = 0.01$, instead of $0$.
\begin{figure}[H]
\centering
   \begin{minipage}[c]{.29\linewidth}
          \includegraphics[width=\linewidth]{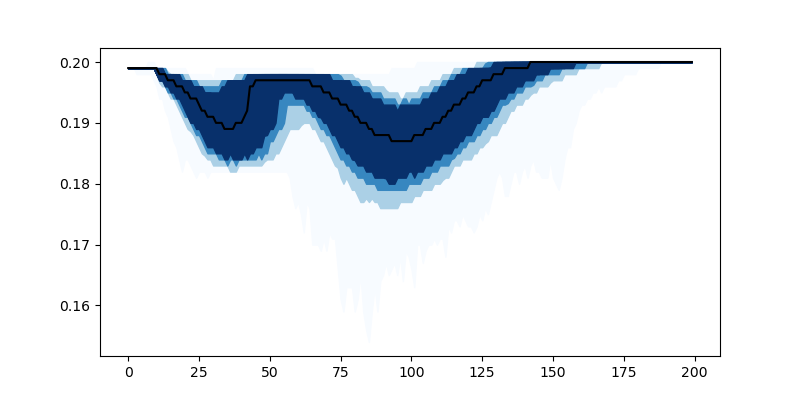}\vspace{-1em}
          \caption*{\small Optimal effort $\beta^\star$}
   \end{minipage}
   \begin{minipage}[c]{.29\linewidth}
      \includegraphics[width=\linewidth]{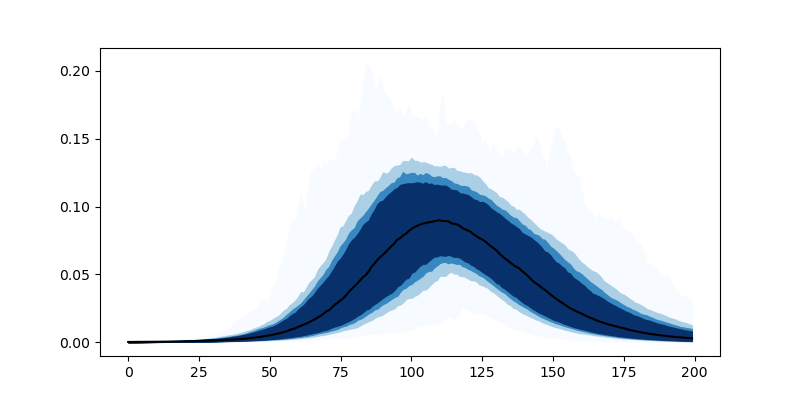}\vspace{-1em}
      \caption*{\small Proportion $I$ of infected}
   \end{minipage}
      \begin{minipage}[c]{.29\linewidth}
      \includegraphics[width=\linewidth]{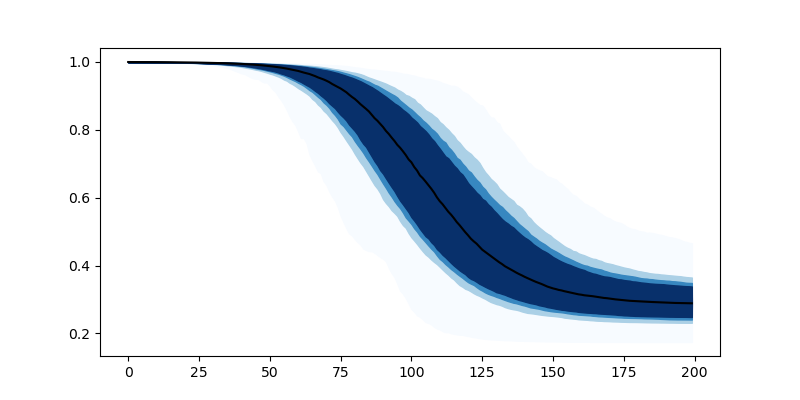}\vspace{-1em}
       \caption*{\small Proportion $S$ of susceptible}
   \end{minipage}\vspace{-0.7em}
\caption{\small Dispersion of $1000$ simulations with respect to time of the SIR model in the benchmark case.}\label{fig:SIR_Simu_benchmark}
\end{figure}

\textbf{Voluntary lockdown of the population.} As expected, the optimal behaviour $\beta^\star$ is to start close to $\overline \beta$, then to decreases as the disease spreads in the population. More specifically, two waves of effort can be observed: the first one delays the acceleration of the epidemic, and the second, generally more significant, takes place during the peak of the epidemic.
Approaching the fixed maturity, individuals come back to their usual behaviour $\overline \beta$. 

\begin{figure}[H]
\centering
\begin{minipage}[c]{.29\linewidth}
    \centering
    \includegraphics[width=\linewidth]{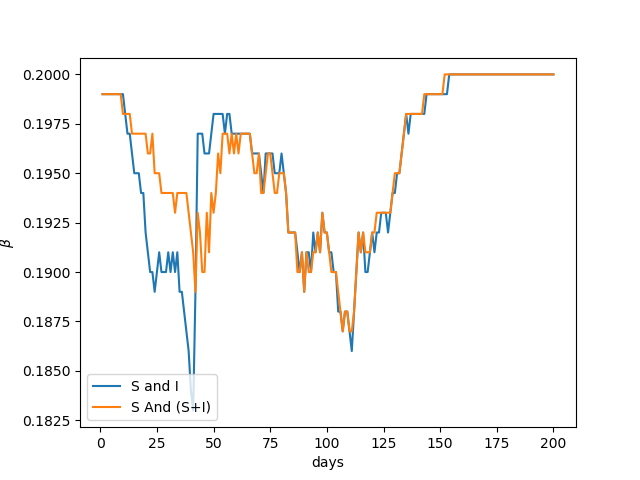} \vspace{-2em}
    \caption*{\small Proportion $I$ of infected}
\end{minipage}
\begin{minipage}[c]{.29\linewidth}
    \centering
    \includegraphics[width=\linewidth]{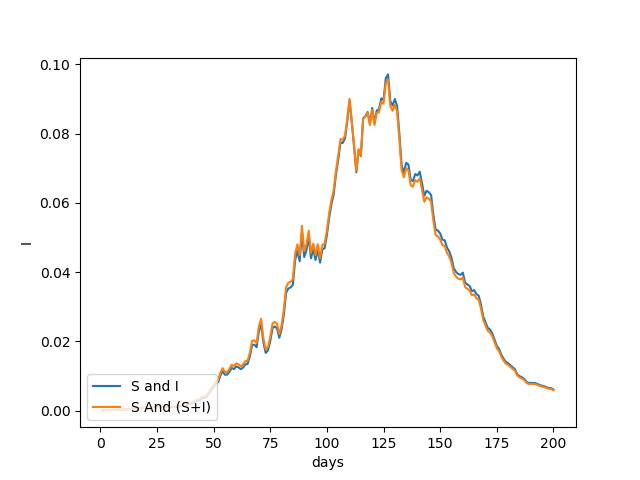} \vspace{-2em}
    \caption*{\small Optimal effort}
\end{minipage}   \vspace{-0.7em}
\caption{\small The optimal transmission rate $\beta$ and the resulting proportion $I$ in the benchmark case}\label{fig:SIRBenchASim}
{\small Comparison between of the two methods aforementioned on two simulations.}
\end{figure}

\textbf{Sensitivity with respect to the method.}
As we can see in \Cref{fig:SIRBenchASim} (top), the optimal effort exhibits the same features as those previously described. Moreover, the blue curve and the orange curve, representing respectively the results of the two aforementioned methods, are very close, except at the beginning of the time interval, probably because of the very small initial value $i_0$.
Nevertheless, we can see that the two methods lead to the same dynamic for the proportion of infected, since the two curves are almost superposed. Therefore, a small error on the computation of the optimal effort at the beginning does not impact the optimally controlled trajectories of $I$.
The resolution with respect to $(s,s+i)$ seems to be more regular, and may give a command  closer to the analytical one.

\medskip
\textbf{The fear of the infection is not enough.} Without a proper government policy to encourage the lockdown, the natural reduction of the interaction rate among individuals is not sufficient to contain the disease, so that it spreads with a high infection peak, up to $0.175$. As a result, even if at the end of the time interval under consideration, the epidemic appears to be over, between $60$ and $80\%$ of the population has been contaminated by the virus, since the proportion $S$ at time $T=200$ lies between $0.2$ and $0.4$. In conclusion, without some governmental measures, the fear of the epidemic is not sufficient to encourage the population to make sufficient effort, in order to significantly reduce the rate of transmission of the disease. The introduction by the government of an effective lockdown policy together with an active testing policy should improve the results of the benchmark case, in particular by reducing the peak of infection and the total number of infected people over the considered period.

\medskip
\textbf{The lockdown fatigue.} By setting $\tau_{\rm p} = 0.01$ instead of $0$, the cost of the lockdown from the population's point of view is now increasing with time. This allows to take into account the possible fatigue the population may suffer if the lockdown continues for too long. As expected, by comparing \Cref{fig:SIRSimusTau,fig:SIR_Simu_benchmark}, the impatience of the population gives higher values of optimal interaction rate $\beta$. Moreover, we can see that the second wave of effort is more impacted (\textit{i.e.}, the contact rate is less reduced) by the impatience of the population than the first one.

\begin{figure}[H]
\centering
   \begin{minipage}[c]{.29\linewidth}
          \includegraphics[width=\linewidth]{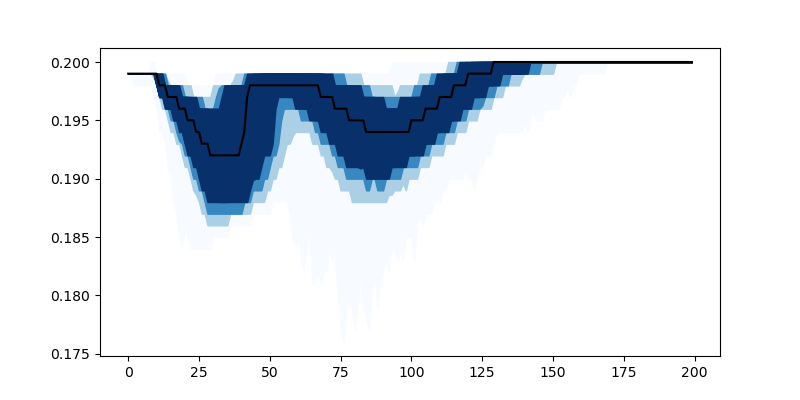}\vspace{-1em}
          \caption*{\small Optimal effort $\beta^\star$}
   \end{minipage}
   \begin{minipage}[c]{.29\linewidth}
      \includegraphics[width=\linewidth]{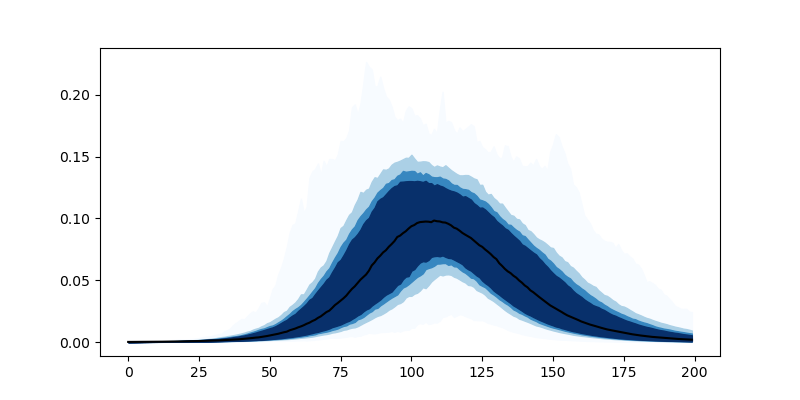}\vspace{-1em}
      \caption*{\small Proportion $I$ of infected}
   \end{minipage}
      \begin{minipage}[c]{.29\linewidth}
      \includegraphics[width=\linewidth]{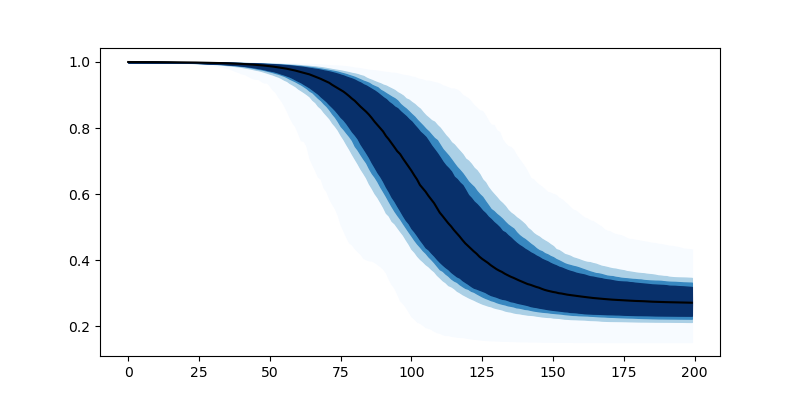}\vspace{-1em}
       \caption*{\small Proportion $S$ of susceptible}
   \end{minipage}\vspace{-0.7em}
	\caption{\small Dispersion of  simulations of the SIR model in the benchmark case with $\tau_{\rm p} =0.01$}\label{fig:SIRSimusTau}
\end{figure}
\subsection{Lockdown policy, without testing}

We focus in this section on the tax policy, by assuming that $A = \{1\}$. In such a situation, \textit{i.e.}, without a proper testing policy, the detection and hence the isolation of ill people becomes very intricate. This case is interesting, as it corresponds to the lockdown policy that most of western countries have implemented in 2020, when faced with the COVID-19 disease, while a very small number of tests was available. 
Indeed, most countries put in place systems of fines, or even prison sentences, to incentivise people to lockdown. Although the penalties for non-compliance are not as sophisticated as in our model, most governments did adapt the level of penalties according to the stage of the epidemic: higher fines during periods of strict lockdown (hence at the peak of the epidemic), or in case of recidivism, for example. This reflects the adjustment of sanctions in many countries according to the health situation, and therefore a notion of dynamic adaptation to circumstances, which is exactly what is suggested by our tax system. 
Though it is clear that our model is different from reality, since  in most countries, the fine is paid by a particular individual who has not complied with the injunctions, we still believe it allows to highlight sensible guidelines.

\medskip
The numerical approach is highly similar to the method used to solve the benchmark case. One difference is that we have to estimate the reservation utility of the population, namely $\underline v$, given by \eqref{eq:reservation_utility}. Using a Monte Carlo method and a Euler scheme with a time-discretisation of 200 time-steps and $10^6$ trajectories, we obtain an approximated value $\underline v =-0.02937$.
Then, we can solve \eqref{eq:HJBGen} through the aforementioned semi-Lagrangian scheme, with $200$ time steps, as well as a step discretisation for the grid in $(s,i,y)$ corresponding to $(0.0025,0.0025,0.005)$, leading to a number of meshes at maturity equal to $250 \times 70  \times 800$. A last technical point concerning the domain of the control $Z$. Although this control of the government, used to index the tax on the proportion of infected, can take high values, we have to bound its domain in order to perform the numerical simulations. We choose to restrict its domain to an interval $[-Z_{\rm max}, Z_{\rm max}]$, and consider a discretisation step equal to $0.5$.
One would naturally expect that a larger choice would lead to somewhat better solutions. However, this neglects a fundamental numerical issue: large values of $Z$ increase the numerical cost, as they enlarge the volatility of the process $Y$ (given by $\sigma Z I S$). As such, since the volatility cone becomes larger, it is necessary to sample a much larger grid in order to be able to cover the region were $Y$ will most likely take its values. Too large values of $Z_{\rm max}$ therefore become numerically intractable, unless one is willing to sacrifice accuracy. A balance need to be struck, which is why we capped $Z_{\rm maz}$ at $30$. 

\medskip
First, we present in \Cref{fig:optISIR} different trajectories of the proportion $I$ of infected when the government implements the optimal tax policy, and compare it to the trajectories obtained in the benchmark case. As mentioned before, we also want to study the sensibility with respect to the arbitrary bound $Z_{\rm max}$, and we thus represent the paths of $I$ in three cases, in addition to the benchmark case: for $Z_{\rm max} = 10$ (orange curves), $Z_{\rm max} = 20$ (green), and $Z_{\rm max} = 30$ (red).
Then, the corresponding simulations of the optimal control $Z$ of the government, used to index the tax on the proportion of infected, are given in \Cref{fig:optZSIR}. We compare optimal controls $\beta$ and $Z$ for the tax policy with different lockdown time period in \Cref{fig:matSIR}.
Finally, \Cref{fig:optBetaSIR} regroups the simulations of the optimal transmission rate $\beta^\star$ obtained with the tax policy, and compare it to $\beta^\circ$ obtained in the benchmark case. 

\begin{figure}[H]
\centering
    \begin{minipage}[c]{.29\linewidth}
          \includegraphics[width=\linewidth]{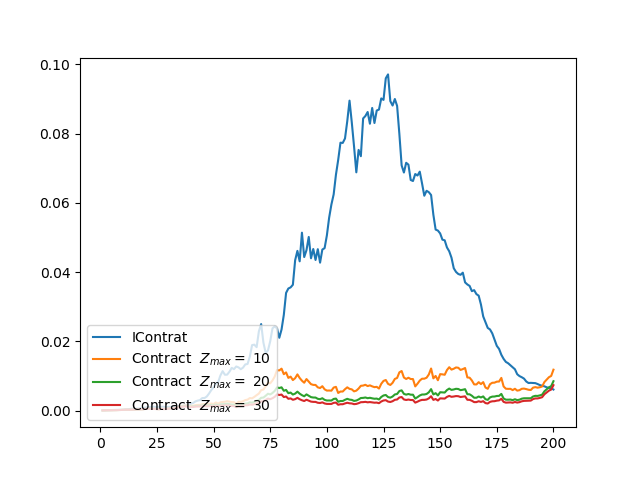}\vspace{-1em}
          \caption*{\small \textit{Simulation $1$}}
      \end{minipage}
    \begin{minipage}[c]{.29\linewidth}
          \includegraphics[width=\linewidth]{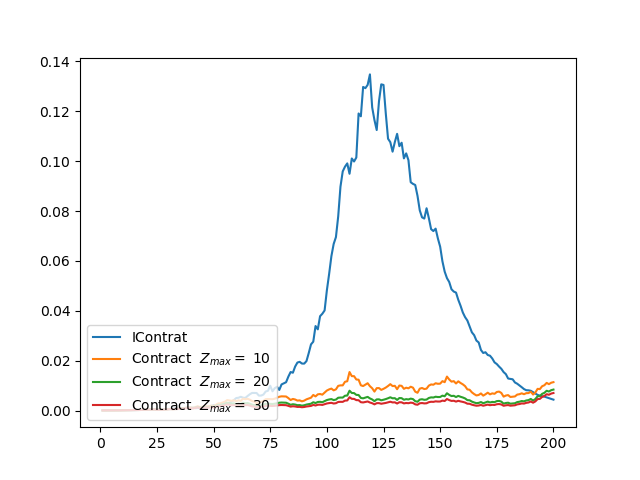}\vspace{-1em}
          \caption*{\small \textit{Simulation $2$}}
      \end{minipage}
        \begin{minipage}[c]{.29\linewidth}
          \includegraphics[width=\linewidth]{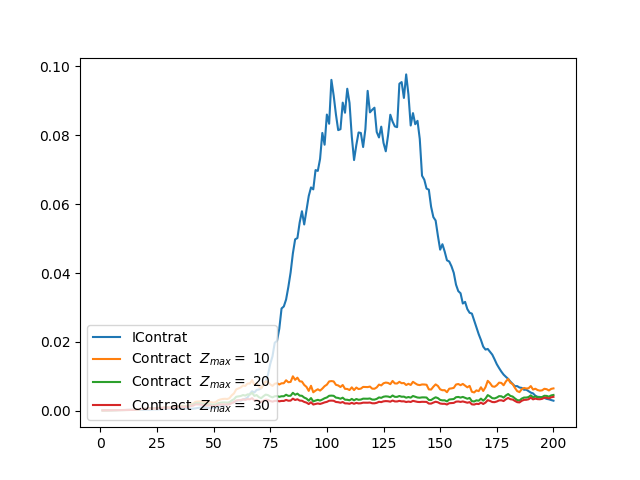}\vspace{-1em}
          \caption*{\small \textit{Simulation $3$}}
      \end{minipage}\vspace{-0.7em}
      \caption{\small Optimal trajectories of $I$ without testing. Comparison for different values of $Z_{\rm max}$ and for the benchmark.}\label{fig:optISIR} 
\end{figure}

\textbf{The epidemic is at best contained, and at worst delayed.} Compared to the benchmark case, we observe in \Cref{fig:optISIR} that the optimal lockdown policy prevents the epidemic peak by maintaining low levels of infection during the lockdown period. Therefore, the government has more time to prepare for a possible infection peak after the lockdown, specifically to increase hospital capacity and provide safety equipment (surgical masks, hydro-alcoholic gel, respirators...). The government can also use this time to fund the development of tests to detect the virus, as well as the research on a vaccine or a remedy for the related disease. However, we can see that at the end of the lockdown period, in many cases the virus is not exterminated and the epidemic may even restart. This is also illustrated by \Cref{fig:IContratSIR}, representing the dispersion of $500$ trajectories of $I$, obtained with the optimal control. Such a phenomenon can be understood as follows: the lockdown slows down the epidemic, so that a very small proportion of the population has been infected and is therefore immune. We thus cannot rely on herd immunity, which is reached here if at least 50\% of the population has been contaminated, to prevent a resurgence of the epidemic. Consequently, this lockdown policy is a powerful leverage to delay an epidemic, but this tool needs to be supplemented by alternative policies. If the time saved through lockdown is not exploited, it will have no impact on the final consequences of the epidemic.

\medskip
\textbf{Policy implications.} We first remark in \Cref{fig:optZSIR} that the shape of the optimal indexation parameter rate $Z$ remains the same, regardless of the simulation and the value of $Z_{\rm max}$. More importantly, we will see that the paths of the optimal transmission rate associated to different $Z_{\rm max}$, are almost superposed. As a consequence, and as previously exhibited in \Cref{fig:optISIR}, the value of $Z_{\rm max}$ has a minor impact on the trajectories of $I$ itself. On the shape of the control $Z$, we remark that it first takes the most negative value possible ($-Z_{\rm max}$) for about $20$ days, then increases almost instantaneously to reach the maximum value $Z_{\rm max}$, before slowly decreasing to $0$. Therefore, the optimal tax scheme set by the government is as follows. 
First, at the beginning of the epidemic, it seems optimal to give to the population a compensation as high as possible, by setting $Z = -Z_{\rm max}$. Though this may be a numerical artefact, the fact that this appeared in all our simulations tends to show that it is actually significant. We interpret this as the government anticipating the negative consequences of the lockdown policy by immediately providing monetary relief to the population. This is exactly what happened in several countries, for instance in the USA with stimulus checks sent to every citizen, and our model endogenously reproduces this aspect. Policy-wise, it shows that maximum efficiency for such packages is attained when they are provided to the population as early as possible.

\begin{figure}[H]
	\begin{minipage}[c]{.29\linewidth}
		\includegraphics[width=\linewidth]{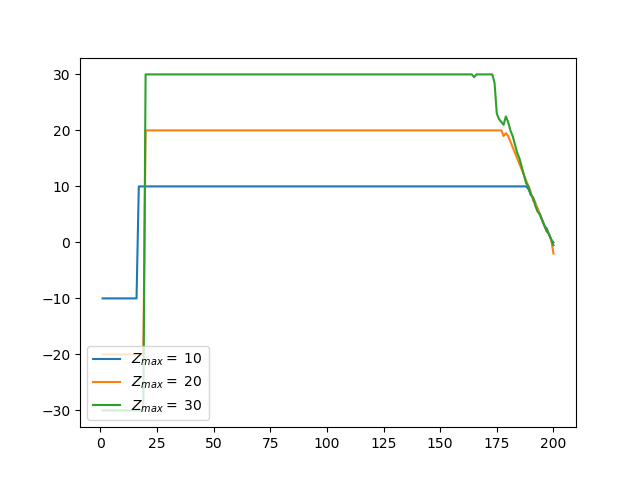}\vspace{-1em}
		\caption*{\small\textit{ Simulation $1$}}
	\end{minipage}   
	\begin{minipage}[c]{.29\linewidth}
		\includegraphics[width=\linewidth]{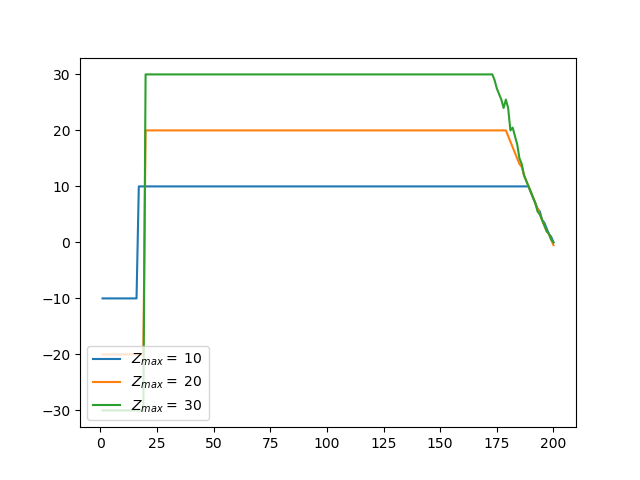}\vspace{-1em}
		\caption*{\small \textit{Simulation $2$}}
	\end{minipage}  
	\begin{minipage}[c]{.29\linewidth}
		\includegraphics[width=\linewidth]{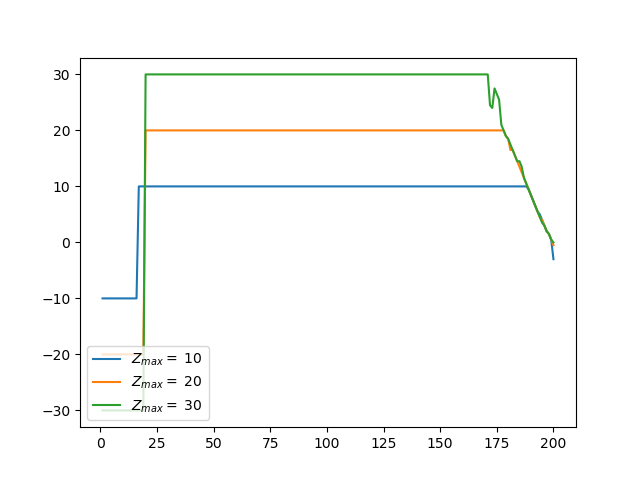}\vspace{-1em}
		\caption*{\small \textit{Simulation $3$}}
	\end{minipage} \vspace{-0.7em}
	\centering
	\caption{\small Optimal trajectories of the control $Z$ without testing. Comparison for different values of $Z_{\rm max}$, with $A =\{1\}$.} \label{fig:optZSIR} 
\end{figure}

Approaching the maturity, the government eases the lockdown. However, this may be premature, since we have observed in the previous figures that the epidemic may restart at the end of the period. Indeed, considering a final time horizon is equivalent to assuming that `the world' stops at that time: costs generated by the epidemic after $T$ are not taken into account. Nevertheless, this boundary effect has no impact on the previous results and interpretations. Indeed, we remark that if we consider a more distant time $T$, the lockdown certainly lasts longer, but follows the exact same patterns (see \Cref{fig:matSIR} below). 
Moreover, the lockdown period should still end at some time, which is why a finite terminal time is assumed. This time may correspond to an estimate of the time needed to implement other more sustainable policies, such as the implementation of an active testing policy, or to wait for the discovery of a vaccine.

\begin{figure}[H]
    \begin{minipage}[c]{.29\linewidth}
    \centering
          \includegraphics[width=\linewidth]{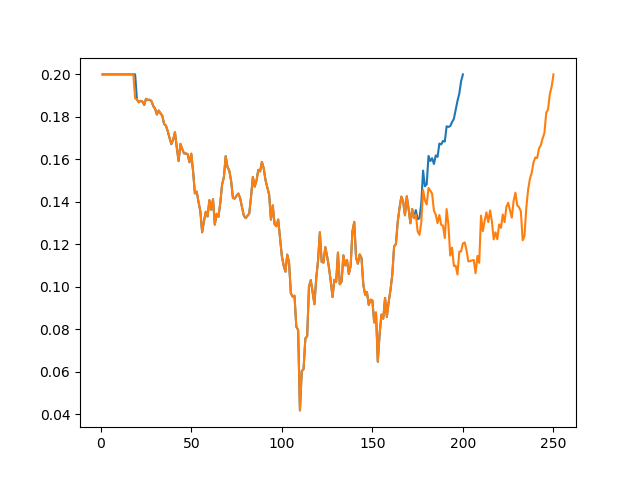}\vspace{-1em}
          \caption*{\small Optimal $\beta$}
      \end{minipage}  
    \begin{minipage}[c]{.29\linewidth}
    \centering
        \includegraphics[width=\linewidth]{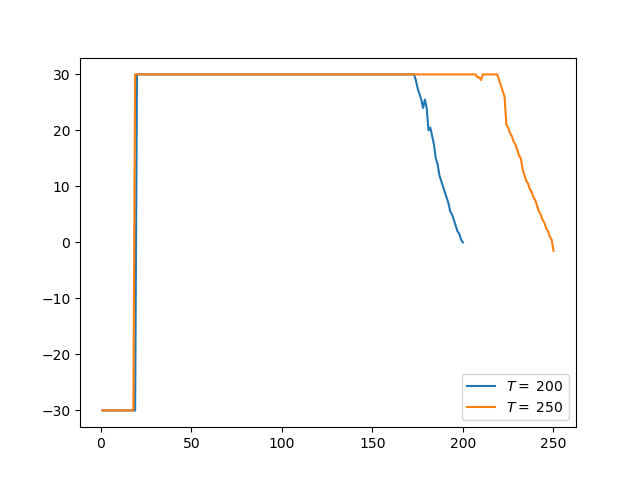}\vspace{-1em}
          \caption*{\small Optimal control $Z$}
      \end{minipage}   \vspace{-0.7em}
\centering
\caption{\small Maturity effect for the tax policy in the SIR model}\label{fig:matSIR}
{\small Comparison of the optimal trajectories of $Z$ for $T=200$ and $T=250$, with $Z_{\rm max}=30$.}
\end{figure}

\textbf{Optimal tax sensitivity with respect to the lockdown duration.}
On \Cref{fig:matSIR}, we give two trajectories of the optimal contact rate $\beta$ and the optimal $Z$ for two different maturities. It is clear that both trajectories follow the same paths until some point. Regardless of the maturity, the contact rate $\beta$ and the parameter $Z$ have the same characteristics as those shown respectively in \Cref{fig:optZSIR,fig:optBetaSIR}. As one approaches the shortest maturity, \textit{i.e.} $T=200$, the parameter $Z$ decreases towards $0$, while the other remains at the maximum, and decreases later. Therefore, the fact that $Z$ decreases at maturity, as mentioned above, appears to be a boundary effect.

\begin{figure}[H]
	\centering 
	\begin{minipage}[c]{.29\linewidth}
		\includegraphics[width=\linewidth]{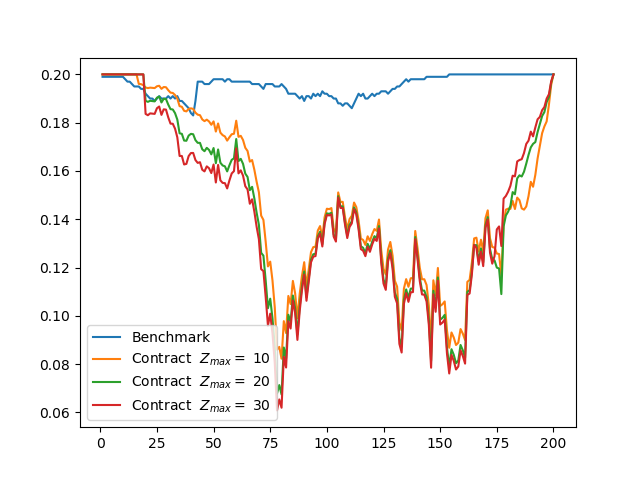}\vspace{-1em}
		\caption*{\small \textit{Simulation $1$}}
	\end{minipage}   
	\begin{minipage}[c]{.29\linewidth}
		\includegraphics[width=\linewidth]{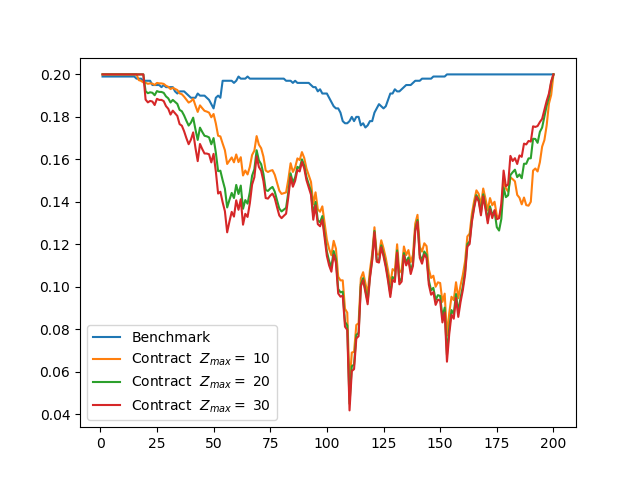}\vspace{-1em}
		\caption*{\small \textit{Simulation $2$}}
	\end{minipage}   
	\begin{minipage}[c]{.29\linewidth}
		\includegraphics[width=\linewidth]{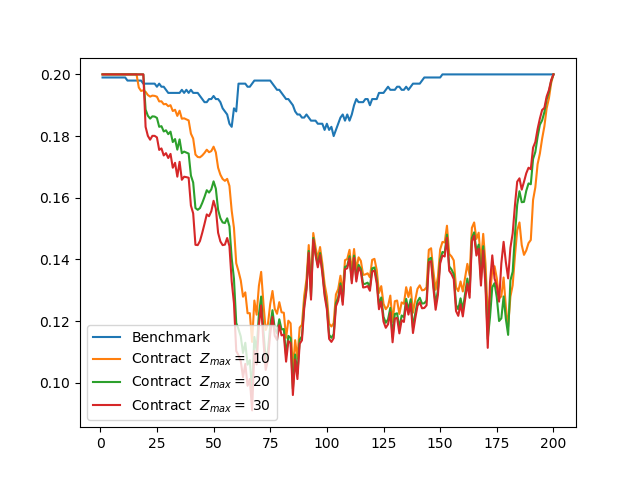}\vspace{-1em}
		\caption*{\small \textit{Simulation $3$}}
	\end{minipage}   \vspace{-0.7em}
	\caption{\small Optimal transmission rate $\beta$ without testing}
	{\small Comparison for different $Z_{\rm max}$ and with the benchmark case, in the case $A=\{1\}$.}
	\label{fig:optBetaSIR}
\end{figure}

\textbf{Optimal interaction rate and comparison with the benchmark case.} In the beginning, recall that $Z$ is negative, meaning that the tax is negatively indexed on the variation of $I$. In other words, since $I$ is globally (but very slightly) increasing at the beginning of the epidemic, the compensation increases with $I$, which means that the population is not incentivised at all to decrease their contact rate, and thus the transmission rate of the virus, which remains equal to the initial level $\overline \beta$. Then, as the epidemic spreads, $Z$ becomes very high, which now incentivises the population to reduce the transmission rate below $\overline \beta$. Finally, near the end of the lockdown period, $Z$ plunges to zero, which naturally implies that the optimal contact rate $\beta^{\star}$ goes back to its usual level $\overline \beta$.
\begin{figure}[H]
\centering
\begin{minipage}[c]{.3\linewidth}
\centering
\includegraphics[width=\linewidth]{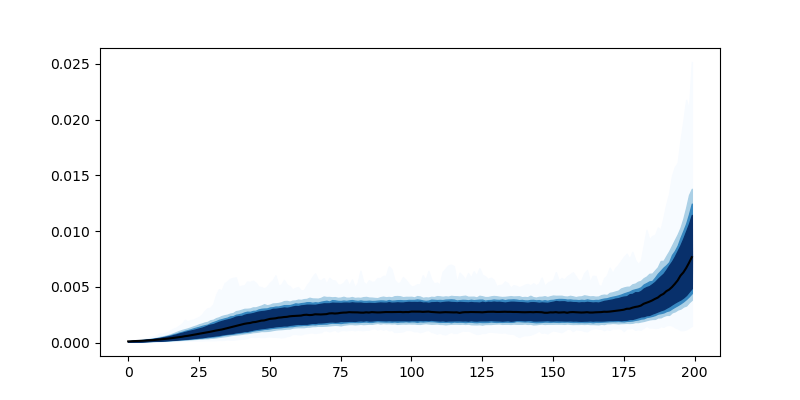}\vspace{-1em}
\caption*{\small Contract case}
\end{minipage}  
\begin{minipage}[c]{.3\linewidth}
\centering
\includegraphics[width=\linewidth]{SIRCovidSLDump25BarBeta2I0107SInit99984Beta200ndt200Pol1tau0rho10etap10cP5050T200_ITraj.png}\vspace{-1em}
\caption*{\small Benchmark case}
\end{minipage} \vspace{-0.7em}
\caption{Dispersion of  simulations of the proportion $I$ of infected in the SIR model}\label{fig:IContratSIR} 
{\small Comparison between the case with tax policy (but without testing) on the left and the benchmark case on the right.}
\end{figure}

\subsection{Tax policy with testing}

In this section, we now study the case where the government can implement an active testing policy, in addition to the incentive policy for lockdown, to contain the spread of the epidemic.
This policy is similar to the one adopted by most European governments in June 2020, after relatively strict containment periods and at a time when the COVID-19 epidemic seemed to be under control. Indeed, the lockdown periods in Europe have generally made it possible to delay the epidemic, and thus to give public authorities time to prepare a meaningful testing policy. This has two major interests. First, it allows the identification of clusters, and therefore provides a more precise knowledge of the dynamics of the epidemic in real time. Second, by identifying infected people, we can force them to remain isolated. Thus, by developing a robust testing policy, public authorities can in fact relax the lockdown while keeping the rate of disease transmission at a sufficiently low level. Therefore, comparing with the no-testing policy case, we expect that
\begin{enumerate}[label=$(\roman*)$]
    \item the government will be able to control the epidemic at least as well as with just the lockdown policy;
    \item it will allow the population to regain a contact rate closer to the desired and initial level $\overline \beta$.
\end{enumerate}

To study the optimal testing policy $\alpha^\star$, taking values in $A := [\varepsilon, 1]$, we consider the cost of effort $k$ given in \Cref{ex:utility_pop}. This cost function emphasises the fact that testing the entire population every day is inconceivable, and therefore results in an explosion of cost when $\alpha$ takes values close to $0$. Recall that the parameters for the function $k$, namely $\kappa_{\rm g}$ and $\eta_{\rm g}$ are given in \Cref{tab:params_contract_gov}. 
Finally, $A$ is discretised with a step equal to $0.05$ and we consider $Z_{\rm max}=30$.

\begin{figure}[H]
\centering
    \begin{minipage}[c]{.29\linewidth}
          \includegraphics[width=0.99\linewidth]{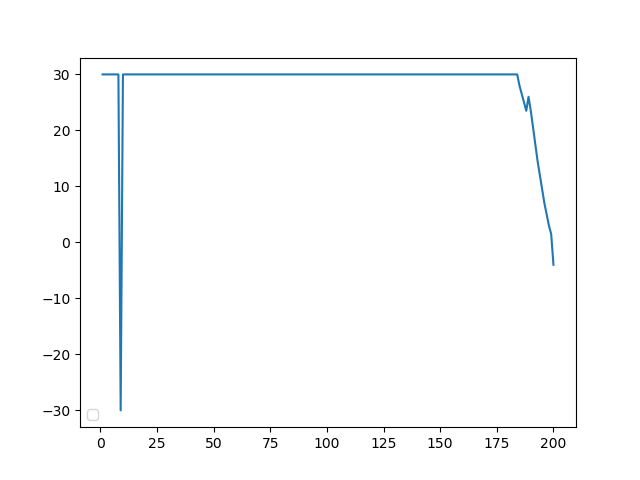}\vspace{-1em}
          \caption*{\small  \textit{Simulation $1$}}
      \end{minipage}   
    \begin{minipage}[c]{.29\linewidth}
          \includegraphics[width=0.99\linewidth]{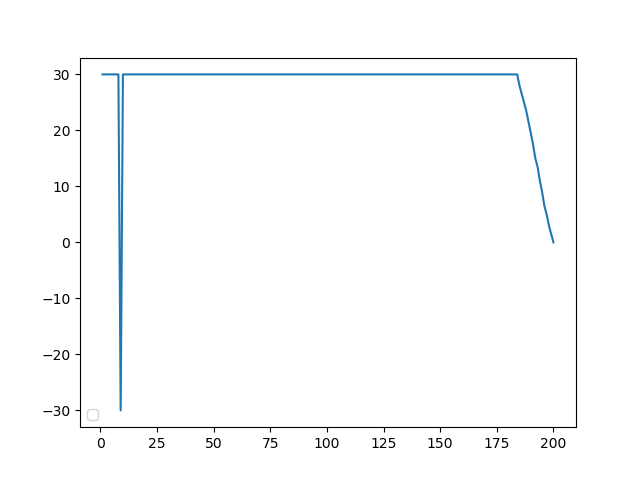}\vspace{-1em}
          \caption*{\small  \textit{Simulation $2$}}
      \end{minipage}   
        \begin{minipage}[c]{.29\linewidth}
          \includegraphics[width=0.95\linewidth]{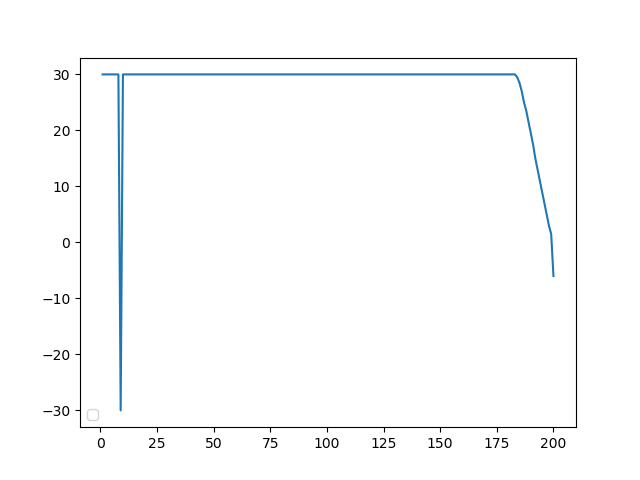}\vspace{-1em}
          \caption*{\small  \textit{Simulation $3$}}
      \end{minipage}   \vspace{-0.7em}
      \caption{\small Optimal trajectories of $Z$ with testing policy.}\label{fig:optZSIRGen}
\end{figure}
\begin{figure}[H]
\centering
    \begin{minipage}[c]{.29\linewidth}
          \includegraphics[width=0.99\linewidth]{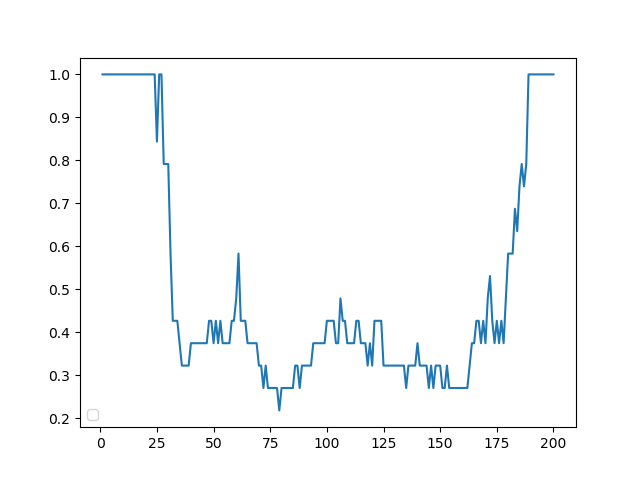}\vspace{-1em}
          \caption*{\small  \textit{Simulation $1$}}
      \end{minipage} 
    \begin{minipage}[c]{.29\linewidth}
          \includegraphics[width=0.99\linewidth]{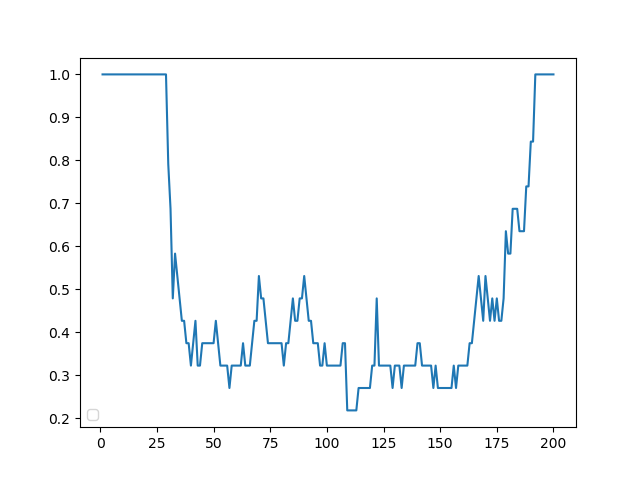}\vspace{-1em}
          \caption*{\small  \textit{Simulation $2$}}
      \end{minipage}  
        \begin{minipage}[c]{.29\linewidth}
          \includegraphics[width=0.95\linewidth]{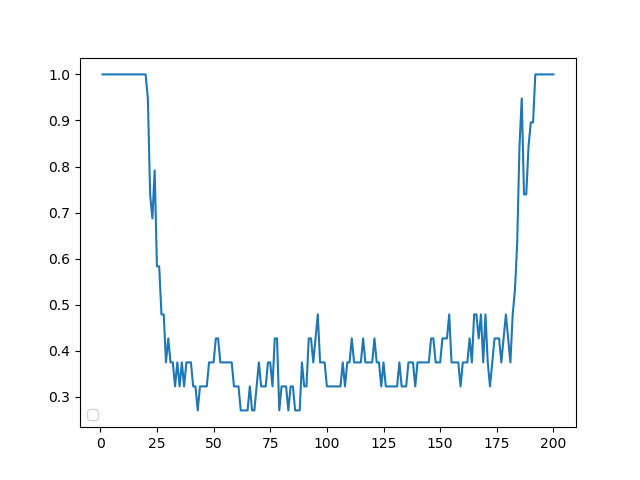}\vspace{-1em}
          \caption*{\small  \textit{Simulation $3$}}
      \end{minipage}   \vspace{-0.7em}
      \caption{ \small Optimal trajectories of the testing policy $\alpha$}\label{fig:optASIRGen}
\end{figure}
\begin{figure}[H]
\centering
    \begin{minipage}[c]{.29\linewidth}
          \includegraphics[width=0.99\linewidth]{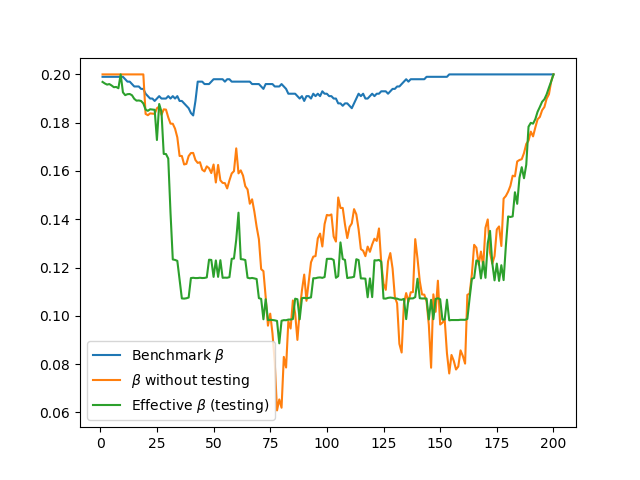}\vspace{-1em}
          \caption*{\small  \textit{Simulation $1$}}
      \end{minipage}  
    \begin{minipage}[c]{.29\linewidth}
          \includegraphics[width=0.99\linewidth]{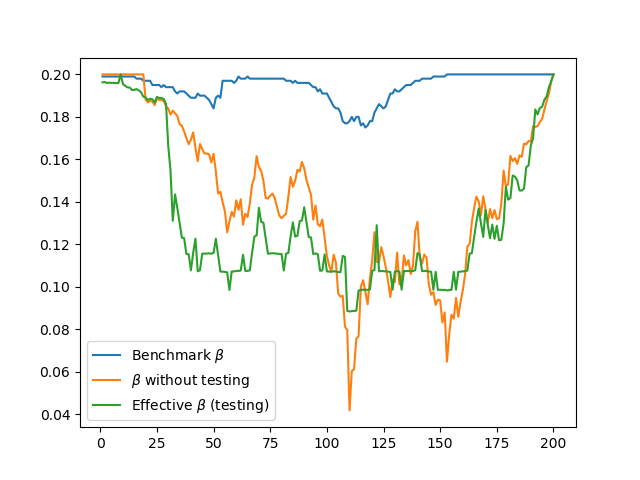}\vspace{-1em}
          \caption*{\small \textit{Simulation $2$}}
      \end{minipage}   
        \begin{minipage}[c]{.29\linewidth}
          \includegraphics[width=0.95\linewidth]{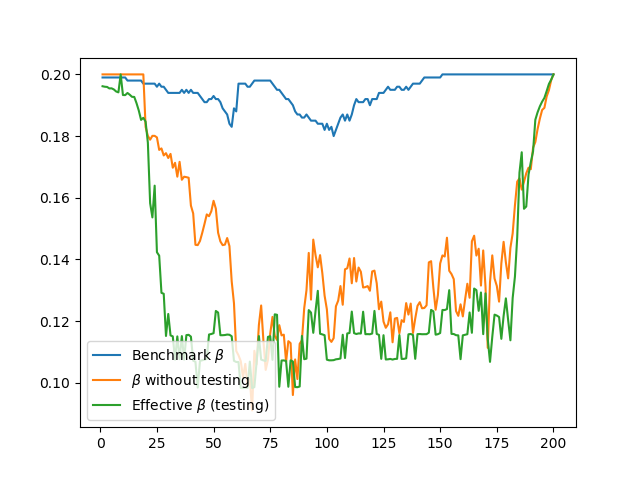}\vspace{-1em}
          \caption*{\small  \textit{Simulation $3$}}
      \end{minipage}   \vspace{-0.7em}
\caption{\small Optimal effective transmission rate $\beta \sqrt{\alpha}$ with testing policy}\label{fig:optBetaSIRGen} 
{\small Comparison between the three cases, the benchmark, with, and without testing.}
\end{figure}

\textbf{Relaxed lockdown but lower effective transmission rate.} 
First, comparing \Cref{fig:optZSIRGen,fig:optZSIR}, the optimal control $Z$ presents the same shape in both cases, except at the beginning, since now $Z$ is not negative initially. In fact, we observe that the government is asking for less effort from the population, and therefore the \textit{initial stimulus} mentioned in the paragraph `Policy implications' still happens, but later and for a much shorter length.
\Cref{fig:betaTestingSIR} also shows that the optimal contact rate is closer to the initial level $\overline \beta$, which should induce a more violent spread of the disease.
Nevertheless, the control $\alpha$, representing the testing policy and given by \Cref{fig:optASIRGen}, balances this effect. 
Indeed, the testing allows an isolation of targeted infected individual, and therefore contribute to the decrease of the effective transmission rate of the disease, represented in \Cref{fig:optBetaSIRGen}. 
Therefore, 
comparing \Cref{fig:SIRGen} with \Cref{fig:IContratSIR}, we notice that the control of the epidemic is more efficient than in the case $A=\{ 1 \}$, since the proportion of infected is globally decreased. 
Finally, \Cref{fig:SIRGen} gives a global overview with the dispersion of $500$ simulations for the optimal controls $\alpha$ and $Z$ as well as for the proportion $I$ of infected, which confirms the intuition given by the three selected ones.

\begin{figure}[ht]
\centering
	\begin{minipage}[c]{.29\linewidth}
		\centering
		\includegraphics[width=\linewidth]{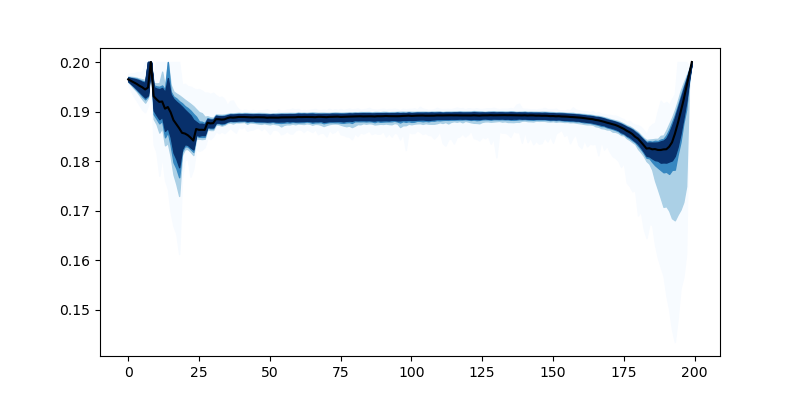}
		\caption*{\small Optimal contact rate $\beta$}
	\end{minipage} 	
	\begin{minipage}[c]{.29\linewidth}
		\includegraphics[width=\linewidth]{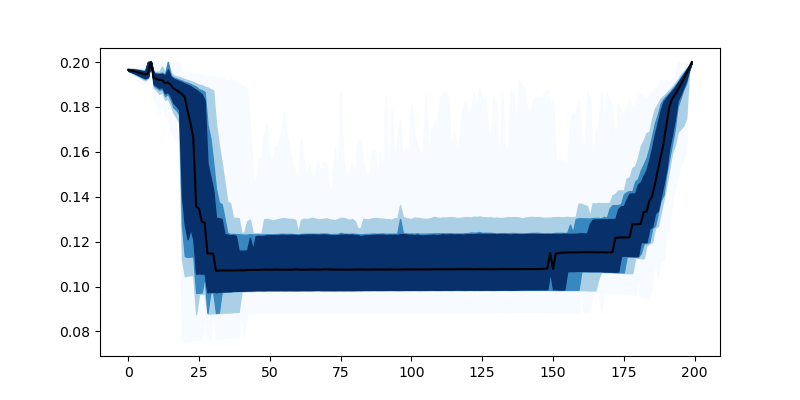}
		\caption*{\small Effective transmission rate $\beta \sqrt{\alpha}$}
	\end{minipage}  
	\caption{\small Dispersion of  simulations of the transmission rate with testing policy}\label{fig:betaTestingSIR}
\end{figure}

\begin{figure}[H]
\centering
\begin{minipage}[c]{.29\linewidth}
\includegraphics[width=0.99\linewidth]{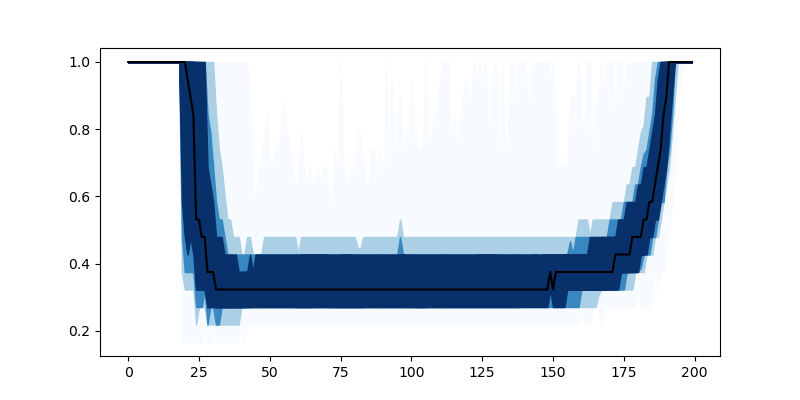}
\caption*{\small Optimal control $\alpha$}
\end{minipage}  
 \begin{minipage}[c]{.29\linewidth}
  \includegraphics[width=0.99\linewidth]{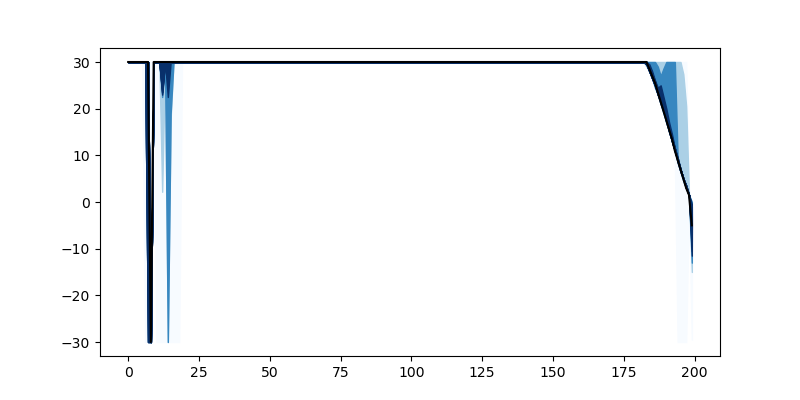}
  \caption*{\small Optimal control $Z$}
\end{minipage}
\begin{minipage}[c]{.29\linewidth}
  \includegraphics[width=0.99\linewidth]{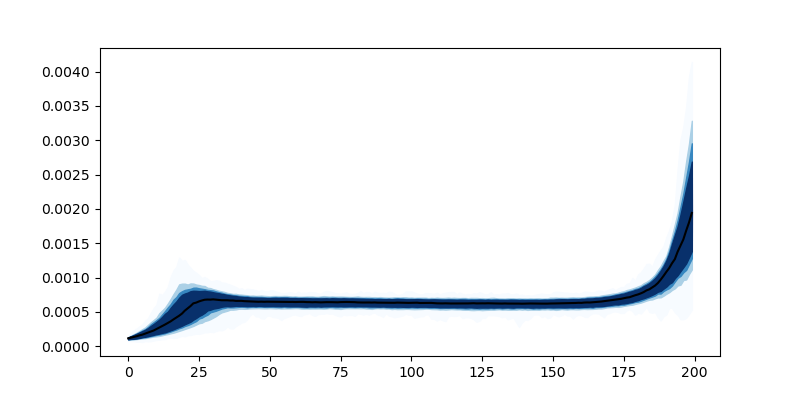}
  \caption*{\small Proportion $I$ of infected}
\end{minipage}
	\caption{ \small Dispersion of simulations of optimal government's controls, with the resulting trajectories of $I$.} \label{fig:SIRGen}
\end{figure}

\subsection{The first-best case}

First, remark that, with the particular choice of utility functions, we have
\begin{align*}
\chi^\star(\varpi)= \frac{1}{\theta_{\rm p}} \ln \bigg(\frac1\varpi -\phi_{\rm p} \bigg), \; \text{if} \; 0 < \varpi < \dfrac1{\phi_{\rm P}} = 2.
\end{align*}
Otherwise, if $\varpi \geq 2$, the optimal tax policy is equal to $- \infty$, which cannot be optimal from the government's point of view, since it leads to an infimum on $\varpi$ equal to $+ \infty$ (see \eqref{eq:V0_P_FB}).
For each value of the Lagrange parameter, a two dimensional PDE with a two-dimensional control $(\alpha,\beta)$ is considered.
A step discretisation for the grid in $(s,i)$ is taken equal to $(0.001,0.001)$. $A=[\eps,1]$ is discretised with 20 values and the values of $\beta$ are discretised with $80$ equally spaced values (to reduce the cost of optimisation).
We then search for the optimal $\varpi$ parameter with a step of $0.01$ within the interval $(0,2)$.
We obtain in this case an optimal value equal to $0.64$ and we give on \Cref{fig:firtsBest} the results, which show in particular that the epidemic is controlled in a similar way as in the second-best case, with incentives and testing policy.

 \begin{figure}[H]
\centering
    \begin{minipage}[c]{.29\linewidth}
          \includegraphics[width=0.99\linewidth]{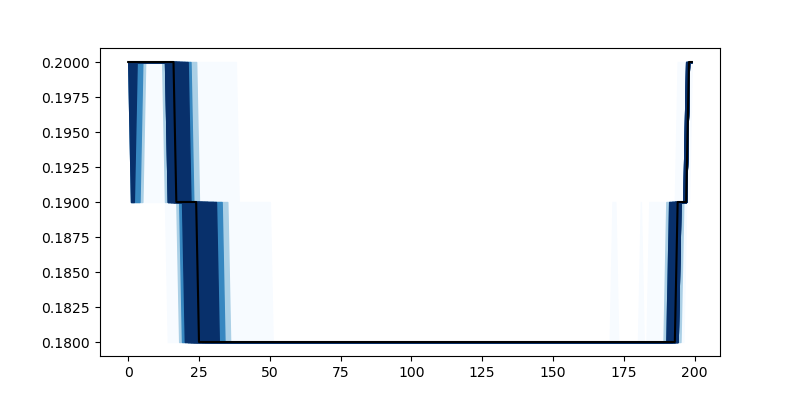}
          \caption*{\small Transmission rate $\beta$}
      \end{minipage}   
    \begin{minipage}[c]{.29\linewidth}
          \includegraphics[width=0.99\linewidth]{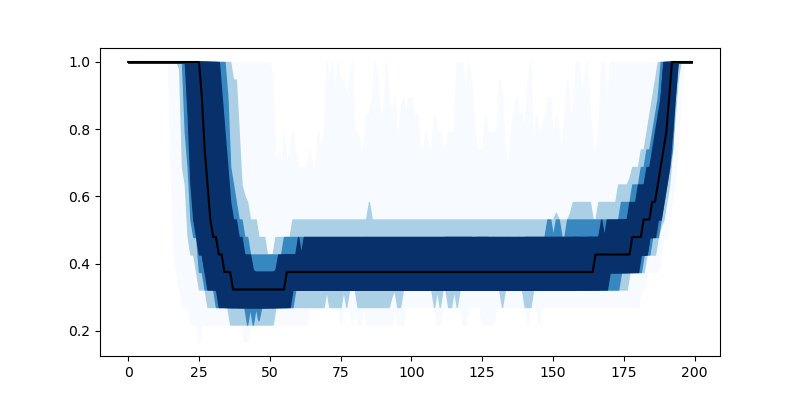}
          \caption*{\small Testing policy $\alpha$}
      \end{minipage}   
     \begin{minipage}[c]{.29\linewidth}
          \includegraphics[width=0.99\linewidth]{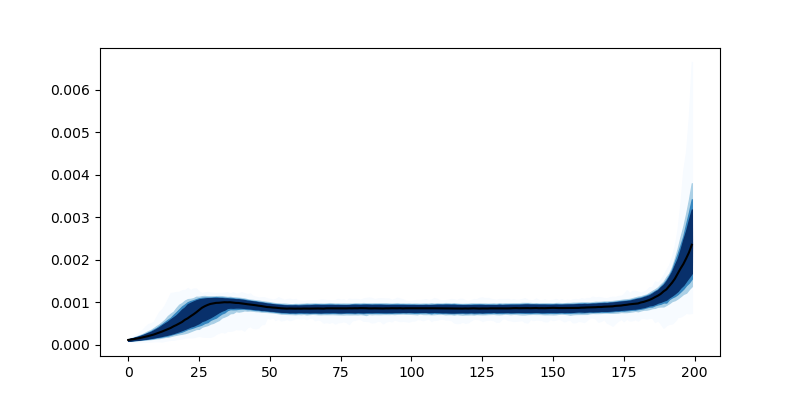}
          \caption*{\small Proportion of infected $I$}
      \end{minipage}  
      \caption{\small \label{fig:firtsBest} Dispersion of 500 trajectories obtained in the first-best case.}
 \end{figure}  
 
The shape of the optimal controls $\beta$ and $\alpha$, as well as the trajectories for the proportion $I$ of infected, are highly similar to those obtained in the previous case. The only clear difference is the principal's value. Indeed, we can compare the optimal value $V_0^{\rm P}$ for the government in the moral hazard case, to the first best value $V_0^{\rm P, FB}$. Using $10^4$ trajectories and the previously optimal control computed, we estimate $V_0^{\rm P,FB}= -0.249$ while $V_0^{\rm P}=-0.287$. The difference between the two values, with a relative difference of $15\%$ only pleads in favour of our incentive model: even without being able to track all the population, governments can achieve containment strategies with very similar levels of efficiency, and costs which are not significantly higher. This is of course partly explained by the fact that the testing is profitable both for the government and for the population, as it allows for values of $\beta$ close to $\overline \beta$, as shown on \Cref{fig:firtsBest}.

\section{Incentive policy for epidemic stochastic models}\label{sec:rigorous_maths}

\subsection{The stochastic model}\label{ss:weak_formulation}

\subsubsection{Initial canonical space}\label{sss:canonical_space_pop}

We fix a small parameter $\eps\in(0,1)$ to consider the subset $A:=[\eps,1]$. $\mathbb A$ is the set of all finite and positive Borel measures on $[0,T]\times A$, whose projection on $[0,T]$ is the Lebesgue measure. Every $q \in \mathbb A$ can be disintegrated as $q(\mathrm{d}s,\mathrm{d} v)=q_s(\mathrm{d} v)\mathrm{d}s$, for an appropriate Borel measurable kernel $(q_s)_{s\in[0,T]}$. We then define the following canonical space $\Omega:=\Cc^2\times\A,$ whose canonical process is denoted by $(S,I,\Lambda)$, in the sense that
\[
S_t\big(\mathfrak{s},\iota,q\big):=\mathfrak{s}(t),\; I_t\big(\mathfrak{s},\iota,q\big):=\iota(t),\; \Lambda\big(\mathfrak{s},\iota,q\big):=q,\; \forall\big(t,\mathfrak{s},\iota,q\big)\in[0,T]\times \Omega.
\]
We let $\Fc$ be the Borel $\sigma$-algebra on $\Omega$, and $\F:=(\Fc_t)_{t\in[0,T]}$ be the natural filtration of the canonical process
\begin{align*}
    \mathcal F_t:=\sigma\Big( \big(S_s, I_s, \Delta_s(\Upsilon) \big): (s,\Upsilon)\in[0,t]\times \Cc_b \big( [0,T]\times A, \R \big) \Big),\; t\in [0,T],
\end{align*}
where for any $(s,\Upsilon)\in[0,T]\times \Cc_b([0,T]\times A,\R)$, $\Delta_s(\Upsilon):=\iint_{[0,s]\times A} \Upsilon(r,a) \Lambda(\mathrm{d}r,\mathrm{d} a).$ Recall that in this framework $\Fc=\Fc_T$. Let $\M$ be the set of probability measures on $(\Omega,\Fc_T)$. For any $\P\in\M$, we let $\Nc^\P$ be the collection of all $\P$-null sets, that is to say $\Nc^\P:=\big\{N\in2^\Omega:\exists N^\prime\in\Fc_T,\; N\subset N^\prime,\; \P[N^\prime]=0\big\},$ where we recall that $2^\Omega$ represents the set of all subsets of $\Omega$, and we let $\F^\P:=(\Fc_t^\P)_{t\in[0,T]}$ be the $\P$-augmentation of $\F$, where $\Fc_t^\P:=\Fc_t\vee \sigma(\Nc^\P)$. We let $\F^{\P+}:=(\Fc_t^{\P+})_{t\in[0,T]}$ the corresponding right limit. Similarly, for any subset $\Pi\subset\M$, we let $\F^\Pi:=(\Fc^\Pi_t)_{t\in[0,T]}$ be the $\Pi$-universal completion of $\F$. Fix some initial values $(s_0,i_0) \in \R_+^2$.\footnote{Notice that the initial value of $r_0$ of $R$, which appears in the SIR version of the model, is irrelevant at this stage.}, and let us introduce the drift and volatility functions for our controlled model, namely $B: \R^2 \longrightarrow\R^2$ and $\Sigma: \R^2 \times A\longrightarrow \R^{2}$, defined by
\renewcommand{\arraystretch}{1}
\[
B(x,y):=\begin{pmatrix}
\lambda-\mu x+\nu y\\
-(\mu+\nu+\gamma+\rho)y
\end{pmatrix},\; \Sigma(x,y,a):=\begin{pmatrix}
\sigma axy\\
-\sigma axy
\end{pmatrix},\; (x,y,a)\in\R^2\times A,
\]
where the parameters $(\lambda,\mu,\nu,\gamma,\sigma)\in[0,\infty)^4\times\R_+^\star$ are given. For any $(s,\varphi)\in[0,T]\times \Cc^2_b( \R^2 ,\R)$, we set
\begin{align*}
    M_s(\varphi):=&\ \varphi(S_s,I_s) - \iint_{[0,s]\times A}\bigg( B(S_r,I_r) \cdot \nabla \varphi(S_r,I_r) + \frac12  {\rm Tr} \big[ D^2 \varphi(S_r,I_r) \big(\Sigma \Sigma^\top\big) (S_r,I_r,a) \big] \bigg)\Lambda(\mathrm{d}r, \mathrm{d} a).
\end{align*}

\begin{definition}\label{def:Pc}
We define the subset $\Pc \subset \M$ as the one composed of all $\P\in\M$ such that
\begin{enumerate}[label=$(\roman*)$]
    \item $M(\varphi)$ is an $(\mathbb F,\P)$--local martingale on $[0,T]$ for all $\varphi \in \Cc^2_b(\R^2,\R);$
    
    \vspace{-0.6em}
    \item $\P \big[(S_0,I_0) = (s_0,i_0)\big]=1;$
    
        \vspace{-0.6em}
    \item with $\P$-probability $1$, the canonical process $\Lambda$ is of the form $\delta_{\phi_\cdot}(\mathrm{d} v)$ for some Borel function $\phi:[0,T]\longmapsto A$, where as usual, for any $a\in A$, $\delta_{a}$ is the Dirac mass at $a$.
\end{enumerate}
\end{definition}

We can follow \citeauthor*{bichteler1981stochastic} \cite{bichteler1981stochastic}, or \citeauthor*{neufeld2014measurability} \cite[Proposition 6.6]{neufeld2014measurability} to define a pathwise version of the density of the quadratic variation of $S$, denoted by $\widehat \sigma:[0,T]\times\Omega\longrightarrow \R$, by $\widehat\sigma^2_t(\omega):=\underset{n\to\infty}{\mathrm{limsup}}\; n\big(\langle S\rangle_t(\omega)-\langle S\rangle_{t-1/n}(\omega)\big),\; (t,\omega)\in[0,T]\times\Omega.$ L\'evy's characterisation of Brownian motion ensures that the process\footnote{More precisely, one should first use the result of \citeauthor*{stroock1997multidimensional} \cite[Theorem 4.5.2]{stroock1997multidimensional} to obtain that on an enlargement of $(\Omega,\Fc_T)$, there is for any $\P\in\Pc$, a Brownian motion $W^\P$, and an $\F$-predictable process, $A$-valued process $\alpha^\P$ such that
\[
 S_t=s_0+\int_0^t (\lambda -\mu S_s +\nu I_s) \mathrm{d}s+\int_0^t\sigma \alpha^\P_sS_sI_s \mathrm{d}W^\P_s,\; t\in[0,T],\; \P\text{\rm--a.s.}
\]
The result for $W$ is then immediate. Notice in addition that since $W$ is defined as a stochastic integral, it should also depend on explicitly on $\P$. We can however use \citeauthor*{nutz2012pathwise} \cite[Theorem 2.2]{nutz2012pathwise} to define $W$ universally, as an $\F^\Pc$-adapted and continuous process. This requires some set-theoretic assumptions which we implicitly consider here, see \citeauthor*{possamai2018stochastic} \cite[Footnote 7]{possamai2018stochastic} for details.}
\begin{equation}\label{browniandef}
W_t:=\int_0^T\widehat\sigma_s^{-1/2}\mathbf{1}_{\hat{\sigma}_s\neq 0}\mathrm{d}S_s,\; t\in[0,T],
\end{equation}
is an $(\F^\Pc,\P)$--Brownian motion for any $\P\in\Pc$. For any $\P\in \Pc$, we denote by $\mathcal A_o(\mathbb P)$ the set of $\F$-predictable and $A$-valued process $\alpha:=(\alpha_s)_{s\in[0,T]}$ such that, $\P$--a.s.
\begin{equation}\label{eq:model_SI}
\begin{cases}
\displaystyle S_t=s_0+\int_0^t\big(\lambda -\mu S_s +\nu I_s\big)\mathrm{d}s+\int_0^t\sigma \alpha_sS_sI_s \mathrm{d}W_s,\; t\in[0,T],\\[0.8em]
\displaystyle  I_t=i_0-\int_0^t(\mu+\gamma+\nu+\rho) I_s \mathrm{d}s -\int_0^t\sigma \alpha_sS_sI_s \mathrm{d}W_s,\; t\in[0,T].\\
\end{cases}
 \end{equation}
Once again, it is a classical result (see for instance \citeauthor*{stroock1997multidimensional} \cite[Theorem 4.5.2]{stroock1997multidimensional}, or \citeauthor*{elie2021mean} \cite[Lemma 2.3]{elie2021mean}) that $\mathcal A_o(\mathbb P)$ is not empty. We recall that the term $\lambda\geq 0$ denotes the birth rate, the parameter $\mu\geq 0 $ is the natural death rate in the population (susceptible and infected), $\gamma\geq 0$ is the death rate inside the infected population. The parameters $\nu$ and $\rho$ correspond to recovery rates, depending on whether we are considering a SIS or a SIR model, see the remark below.

\begin{remark}
It can be noted that our model, which results from a mixing of the \textnormal{SIS} and \textnormal{SIR} models, can be interpreted as an \textnormal{SIR} model with partial immunisation, in the sense that only a part of the population develops antibodies for the disease after being infected. Thus, a proportion $\rho$ of the infected moves to the class $\rm R$, and can no longer be infected. Conversely, the proportion of the infected who do not develop antibodies reverts to the class $\rm S$, and can therefore contract the disease again. This resulting model is similar to the one developed by \textnormal{\citeauthor*{zhang2018epidemic} \cite{zhang2018epidemic}} and called \textnormal{SISRS}. 
This type of model seems in fact well suited to model epidemics related to new viruses, such as the \textnormal{COVID-19}, when the immunity of infected persons has not yet been proved.
\end{remark}
Before pursuing, we need a bit more notations, and will consider the following sets $\Ac_o:=\bigcup_{\P\in\Pc}\Ac_o(\P),$
as well as, for any $\alpha\in\Ac_o$, $\Pc(\alpha):=\big\{\P\in\Pc:\alpha\in\Ac_o(\P)\big\}.$
We will require that the controls chosen by the government lead to only one weak solution to \Cref{eq:model_SI}, and are such that the processes $S$ and $I$ remain non-negative. We will therefore concentrate our attention to the set $\Ac$ of admissible controls defined by
\[
\Ac:=\big\{\alpha\in\Ac_o:\Pc(\alpha)\; \text{\rm is a singleton $\{\P^\alpha\}$, and $(S,I)$ is $\R_+^2$-valued, $\P^\alpha$--a.s.}\big\}.
\]
Notice that the set $\Ac$ is not empty since any constant $A$-valued process automatically belongs to $\Ac$, as a direct consequence of \citeauthor*{gray2011stochastic} \cite[Section $3$]{gray2011stochastic} or \citeauthor*{gao2019dynamics} \cite[Lemma 2.3]{gao2019dynamics}. Remark then that, for any $\alpha\in\Ac$, we have $\widehat\sigma_t=\sigma S_tI_t\alpha_t,\; \mathrm{d}\P^\alpha\otimes\mathrm{d}t$--a.e., and
\[
S_t+I_t=s_0+i_0+\int_0^t\big(\lambda -\mu(S_s+I_s)-(\gamma+\rho) I_s \big)\mathrm{d}s,\; t\in[0,T],\; \P^\alpha\text{\rm--a.s.}
\]
We thus deduce, using the positivity of $S$ and $I$, that
\begin{align}\label{eq:bounded}
    0 \leq S_t+I_t=\mathrm{e}^{-\mu t}\big(s_0+i_0\big)+\int_0^t\mathrm{e}^{-\mu (t-s)}\big(\lambda -(\gamma+\rho) I_s \big)\mathrm{d}s
    \leq 
    F(t,s_0, i_0),\;  t\in[0,T],\; \P^\alpha\text{\rm--a.s.},
\end{align}
where for all $(t, s, i) \in [0,T] \times \R_+^2$
\begin{align}\label{eq:boundary_time}
F(t,s, i) := \mathrm{e}^{-\mu t}\big(s + i \big)+\lambda\bigg(\frac{1-\mathrm{e}^{-\mu t}}{\mu}\mathbf{1}_{\{\mu>0\}}+t\mathbf{1}_{\{\mu=0\}}\bigg).
\end{align}
This result proves in particular that $S$ and $I$ are actually $\P^\alpha$--almost surely bounded, for any $\alpha\in\Ac$. Moreover, if $(s_0, i_0) \in (\R_+^\star)^2$, then for all $t \in [0,T]$, both $S_t$ and $I_t$ are (strictly) positive.
\begin{remark}\label{rem:RI}
Note that in the \textnormal{SIR} model, described by the system \eqref{eq:SIS_SIR} with $\nu =0$, we have, for all $t \in [0,T]$, $R_t =r_0\mathrm{e}^{-\mu t}+ \rho \int_0^t I_s \mathrm{e}^{-\mu(t-s)}\mathrm{d}s,$ so that $R_t$ depends only on the observation of $I_s$ for $s\leq t$.  
In addition to that $0\leq S_t+I_t+R_t\leq \mathrm{e}^{-\mu t}\big(s_0+i_0+r_0\big)+\int_0^t\mathrm{e}^{-\mu (t-s)}\big(\lambda -\gamma I_s \big)\mathrm{d}s
\leq 
F(t,s_0, i_0)+r_0\mathrm{e}^{-\mu t}.$
\end{remark}
 
\subsubsection{Impact of the interaction}\label{ss:interaction_population}

The basic model from \eqref{eq:model_SI} takes into account the testing policy put into place by the government, but ignores so far the interacting behaviour of the population. We model this through an additional control process chosen by the population. More precisely, we fix some constant $\beta^{\rm max}>0$ representing the maximum rate of interaction that can be considered, and we define $B:=[0,\beta^{\rm max}]$. Let $\mathcal B$ be the set of all $\F$-predictable and $B$-valued processes. Given a testing policy $\alpha\in\Ac$ implemented by the government, notice that the following stochastic exponential
\[
\bigg(\exp\bigg(-\int_0^t \dfrac{\beta_s}{\sigma\sqrt{\alpha_s}}\mathrm{d}W_s-\frac12\int_0^t\dfrac{\beta_s^2}{\sigma^2\alpha_s}\mathrm{d}s\bigg)\bigg)_{t\in[0,T]},
\]
is an $(\F,\P^\alpha)$-martingale, given that the process $\beta/(\sigma\sqrt{\alpha})$ takes values in $\big[0,\beta^{\rm max}/(\sigma\sqrt{\eps})\big]$, $\P^\alpha$--a.s. Therefore, for any $(\alpha,\beta)\in\Ac\times\Bc$, we can define a probability measure $\P^{\alpha,\beta}$ on $(\Omega,\Fc)$, equivalent to $\P^\alpha$.
Using Girsanov's theorem, $W^\beta_t:=W_t+\int_0^t\frac{\beta_s}{\sigma\sqrt{\alpha_s}}\mathrm{d}s,\; t\in[0,T],$ is an $(\F,\P^{\alpha,\beta})$--Brownian motion, and \begin{equation}\label{eq:SIS2}\begin{cases}
\displaystyle S_t=s_0+\int_0^t\big(\lambda -\mu S_s +\nu I_s-\beta_s \sqrt{\alpha_s}S_sI_s\big)\mathrm{d}s+\int_0^t\sigma \alpha_sS_sI_s \mathrm{d}W^\beta_s,\; t\in[0,T],\\[0.8em]
\displaystyle  I_t=i_0-\int_0^t\big((\mu+\nu+\gamma+\rho) I_s-\beta_s\sqrt{\alpha_s} S_sI_s\big) \mathrm{d}s -\int_0^t\sigma \alpha_sS_sI_s \mathrm{d}W^\beta_s,\; t\in[0,T].\\
\end{cases}
\end{equation}

\subsubsection{Optimisation problems}\label{ss:def_contract}

At time $0$, the government informs the population about its testing policy $\alpha\in\Ac$, as well as its fine policy $\chi$, which for now will be an $\Fc_T$-measurable and $\R$-valued random variable (a set we denote by $\mathfrak C$). The population solves the following optimal control problem
\begin{equation}\label{pb:agent}
V_0^{\rm A}(\alpha,\chi):=\sup_{\beta\in \Bc} J_0^{\rm A} (\alpha, \chi,\beta), \; \text{with} \; J_0^{\rm A} (\alpha, \chi,\beta) := \mathbb E^{\P^{\alpha,\beta}}\bigg[\int_0^T u(t,\beta_t, I_t)\mathrm{d}t +U(-\chi)\bigg].
\end{equation}
The interpretation of the functions $u$ and $U$ is detailed in \Cref{sss:pop_problem}, where the population's problem was informally introduced. For any $(\alpha,\chi)\in\Ac\times\mathfrak C$, we recall that we denoted by $\Bc^\star(\alpha,\chi)$ the set of optimal controls for $V_0^{\rm A}(\alpha,\chi)$:
\begin{align}
\label{eq:Bc_star}
\Bc^\star(\alpha,\chi):=\big\{\beta\in\Bc:V_0^{\rm A}(\alpha,\chi)=J_0^{\rm A} (\alpha, \chi,\beta) \big\}.
\end{align}
We require minimal integrability assumptions at this stage, and insist that there exists some $p>1$ such that
\begin{equation}
\label{eq:integchi}
\E^{\P^\alpha}\big[|U(-\chi)|^p\big]<\infty, \; \text{for any} \; \alpha\in \Ac.
\end{equation}
\begin{remark}\label{rem:integ}
Notice that since for any $\alpha\in\Ac$ the Radon--Nykod\'ym density $\mathrm{d}\P^{\alpha,\beta}/\mathrm{d}\P^{\alpha}$ has moments of any order under $\P^\alpha$ $($since any $\beta\in\Bc$ is bounded and any $\alpha\in\Ac$ is bounded and bounded away from $0)$, a simple application of H\"older's inequality ensures that \eqref{eq:integchi} implies that for any $p^\prime\in(1,p)$ and any $\beta\in \Bc$, $\E^{\P^{\alpha,\beta}}\big[\big|U(-\chi)\big|^{p^\prime}\big]<\infty.$
\end{remark}

Recall that the government can only implement policies $(\alpha,\chi)\in\Ac\times \mathfrak C$ such that $V_0^{\rm A}(\alpha,\chi)\geq \underline v$, where the minimal utility $\underline v\in\R$ is given. We denote the subset of $\Ac\times \mathfrak C$ satisfying this constraint and \Cref{eq:integchi} by $\Xi$.

\medskip

In line with the informal reasoning developed in \Cref{sss:gov_problem}, the government aims at minimising the number of infected people until the end of the lockdown period, and we write rigorously its minimisation problem as
\begin{align}\label{pb:principal}
    V^{\rm P}_0 := \sup_{(\alpha,\chi)\in\Xi} \sup_{\beta\in \Bc^\star(\alpha,\chi)}
    \mathbb E^{\P^{\alpha,\beta}} \bigg[\chi - \int_0^T \big( c(I_t) + k(t, \alpha_t, S_t, I_t) \big)\mathrm{d} t \bigg],
\end{align}
where the functions $c : \R_+ \longrightarrow \R_+$ and $k : [0,T] \times A \times \R_+ \times \R_+ \longrightarrow \R$ were introduced in \Cref{sss:gov_problem}.

\subsection{Optimal interaction of the population given tax and test policies}\label{ss:solving_pop_pb}

\subsubsection{A relevant contract form}

Since the fine policy $\chi$ is an $\Fc_T$-measurable random variable, where $\F$ is the filtration generated by the process $(S,I)$, we should expect that in general $V_0^{\rm A}(\alpha,\chi)=v(0,s_0,i_0)$, where the map $v:[0,T]\times\Cc^2\longrightarrow \R$ satisfies an informal Hamilton Jacobi Bellman (HJB for short) equation, and as such has the dynamic
\[
\mathrm{d}v(t,S_t,I_t)=-H(S_t,I_t,Z_t^s,Z_t^i,\alpha_t)\mathrm{d}t+Z_t^s\mathrm{d}S_t+Z_t^i\mathrm{d}I_t,
\]
where the population's Hamiltonian $H:[0,T] \times (\R_+^\star)^2 \times\R^2\times A \longrightarrow \R$ is defined by
\begin{align*}
H(t, s,i,z,z^\prime,a) &:=\sup_{b\in B}h(t, s,i,z,z^\prime,a,b),\; (t, s, i, z, z^\prime,a) \in [0,T] \times (\R_+^\star)^2 \times\R^2\times A\\
\text{where } \; h(t, s,i,z,z^\prime,a,b) &:= \big(\lambda-\mu s+\nu i-b\sqrt{a}si\big) z - \big((\mu+\nu+\gamma+\rho)i - b\sqrt{a}si\big) z^\prime + u(t, b, i), \; \text{for} \; b \in B.
\end{align*}

In particular, defining $Z:=Z^s-Z^i$, we should have
\begin{align}\label{eq:utility_relevant_form}
U(-\chi) &= V_0^{\rm A} (\alpha, \chi) 
- \int_0^T H(t,S_t,I_t,Z_t^s,Z_t^i,\alpha_t) \mathrm{d}t
+ \int_0^T Z_t^s \mathrm{d} S_t
+\int_0^T Z_t^i \mathrm{d}I_t \nonumber \\
&=  V_0^{\rm A}(\alpha,\chi)
- \int_0^T \Big( (\mu+\nu+\gamma+\rho) I_t Z_t + \sup_{b\in B} \big\{ u(t,b,I_t) -b \sqrt{\alpha_t} S_t I_t Z_t \big\} \Big) \mathrm{d}t
- \int_0^T Z_t\mathrm{d}I_t.
\end{align}
Given the supremum appearing above, the following assumption will be useful for us.
\begin{assumption}\label{assump:argmax}
There exists a unique Borel-measurable map $b^\star: [0,T] \times \R_+^\star \times \R_+^\star\times \R\times A \longrightarrow B$ such that
\begin{align}\label{eq:b_star}
b^\star(t, s,i,z,a)\in\underset{b\in B}{\mathrm{argmax}} \big\{u(t,b,i) - b \sqrt{a} s i z \big\},\; \forall(t,s,i,z,a)\in [0,T] \times (\R_+^\star)^2 \times \R\times A.
\end{align}
\end{assumption}

\begin{remark}
We would like to insist on the fact that for the \textnormal{SIR} model and in view of \textnormal{\Cref{rem:RI}}, it is not necessary to consider that the process $R$ is a state variable. Indeed, its value at time $t$ can be deduced from the paths of $I$ until time $t$. More precisely, following the previous reasoning to find the relevant form of contracts, one could consider  
\[
\mathrm{d}v(t,S_t,I_t)=-\widetilde H(S_t,I_t,R_t,Z_t^s,Z_t^i,\alpha_t)\mathrm{d}t+Z_t^s\mathrm{d}S_t+Z_t^i\mathrm{d}I_t+Z_t^r \mathrm{d}R_t,
\]
where, in this case, the population's Hamiltonian $\widetilde H:[0,T] \times (\R_+^\star)^2 \times\R^2\times A$ is defined by
\begin{align*}
\widetilde H(t, s,i,r,z,z^\prime,\widetilde z,a) &:=\sup_{b\in B}\big\{ h(t, s,i,z,z^\prime,a,b)\big\} + (\rho i-\mu r) \widetilde z, \text{ for any } (t, s, i, z, z^\prime, \widetilde z,a) \in [0,T] \times (\R_+^\star)^2 \times \R^3\times A.
\end{align*}
Since the dynamics of $R$ is uncontrolled, a simplification occurs between the part of the Hamiltonian $(\rho i-\mu r) \widetilde z$ and the integral w.r.t. $\drm R$, which leads to the same form for the utility function as mentioned in \textnormal{\Cref{eq:utility_relevant_form}}.
\end{remark}

\subsubsection{The general analysis}
For any $(\alpha,m)\in\Ac\times\N^\star$, we define $\Sc^m(\P^\alpha)$ and $\H^m(\P^\alpha)$ as respectively the sets of $\R$-valued, $\F^{\P^\alpha+}$-adapted continuous processes $Y$ s.t. $\|Y\|_{\Sc^m(\P^\alpha)}<\infty$, and the set of $\F^{\P^\alpha}$-predictable, $\R$-valued processes $Z$ with $\|Z\|_{\H^m(\P^\alpha)}<\infty$, where
\[
\|Y\|_{\Sc^m(\P^\alpha)}^m:=\E^{\P^\alpha}\bigg[\sup_{t\in[0,T]}|Y_t|^m\bigg],\; \|Z\|^m_{\H^m(\P^\alpha)}:=\E^{\P^\alpha}\bigg[\bigg(\int_0^T\big|\widehat\sigma_sZ_s\big|^2\mathrm{d}s\bigg)^{m/2}\bigg],\; (Y,Z)\in\Sc^m(\P^\alpha)\times\H^m(\P^\alpha).
\]

\begin{theorem}\label{thm:agent}
Let $(\alpha, \chi) \in \Xi$. There exists a unique $\Fc^{\P^\alpha+}_0$-measurable $Y_0$ and a unique $Z \in \mathbb H^p (\P^\alpha)$ such that 
\begin{align}\label{eq:xi_representation}
    U(-\chi) = Y_0
    - \int_0^T \Big(Z_t (\mu+\nu+\gamma+\rho) I_t +  u(t, \beta^\star_t, I_t) - \beta^\star_t\sqrt{\alpha_t} S_t I_t Z_t \Big) \drm t
    - \int_0^T Z_t \drm I_t,\; \P^\alpha\text{\rm--a.s.},
\end{align}
with $\beta^\star_t:= b^\star(t, S_t, I_t, Z_t,\alpha_t)$ for all $t \in [0,T]$. Moreover, $\Bc^\star(\alpha,\chi)=\{\beta^\star\}$ and $V_0^{\rm A}(\alpha,\chi)=\E^{\P^{\alpha}}[Y_0]$.
\end{theorem}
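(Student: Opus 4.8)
The plan is to recognise \eqref{eq:xi_representation} as the integrated form of a backward stochastic differential equation (BSDE) under $\P^\alpha$, and then to use a verification argument to identify its initial value with $V_0^{\rm A}(\alpha,\chi)$. Under $\P^\alpha$ the dynamics \eqref{eq:model_SI} give $\drm I_t = -(\mu+\nu+\gamma+\rho)I_t\,\drm t - \widehat\sigma_t\,\drm W_t$ with $\widehat\sigma_t = \sigma\alpha_t S_t I_t$, so substituting this into \eqref{eq:xi_representation} the two drift terms carrying a bare $Z_t$ cancel, and the representation is seen to be equivalent to
\[
Y_t = U(-\chi) + \int_t^T F(s,Z_s)\,\drm s - \int_t^T \widehat\sigma_s Z_s\,\drm W_s,\; t\in[0,T],\; \P^\alpha\text{--a.s.},
\]
where $Y_0$ is the initial value and the generator is $F(t,z) := \sup_{b\in B}\{u(t,b,I_t) - b\sqrt{\alpha_t}S_t I_t z\}$. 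By \Cref{assump:argmax} the supremum is attained at $b^\star(t,S_t,I_t,z,\alpha_t)$, so $F(t,Z_t) = u(t,\beta^\star_t,I_t) - \beta^\star_t\sqrt{\alpha_t}S_t I_t Z_t$, which matches exactly the $\drm t$--integrand in \eqref{eq:xi_representation} once the cancellation is performed. In particular, only the $\widehat\sigma$--weighted process $\widehat\sigma Z$ enters, so uniqueness of $Z$ is to be understood in $\H^p(\P^\alpha)$.

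First I would establish well--posedness of this BSDE. The terminal datum lies in $L^p(\P^\alpha)$ by \eqref{eq:integchi}. The generator is Lipschitz in $z$ with respect to the relevant weighted norm: since $S$ and $I$ are bounded by \eqref{eq:bounded} and $\alpha_t\geq\eps$, one has
\[
\big|F(t,z_1)-F(t,z_2)\big| \leq \beta^{\rm max}\sqrt{\alpha_t}S_tI_t\,|z_1-z_2| = \frac{\beta^{\rm max}}{\sigma\sqrt{\alpha_t}}\,\widehat\sigma_t\,|z_1-z_2| \leq \frac{\beta^{\rm max}}{\sigma\sqrt{\eps}}\,\widehat\sigma_t\,|z_1-z_2|,
\]
while $F(\cdot,0)=\sup_{b\in B}u(\cdot,b,I_\cdot)$ is bounded because $u$ is continuous, $B$ compact and $I$ bounded. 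Standard $L^p$ theory for Lipschitz BSDEs then yields a unique solution $(Y,Z)\in\Sc^p(\P^\alpha)\times\H^p(\P^\alpha)$; the measurable selection of \Cref{assump:argmax} makes $\beta^\star_t := b^\star(t,S_t,I_t,Z_t,\alpha_t)$ an $\F$--predictable, $B$--valued process, hence an element of $\Bc$. The initial value $Y_0$ is $\Fc_0^{\P^\alpha+}$--measurable (indeed $\P^\alpha$--a.s. constant by the Blumenthal zero--one law), which gives the first part of the statement.

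For optimality I would run a verification argument. Fix any $\beta\in\Bc$ and pass to $\P^{\alpha,\beta}$, under which, by \eqref{eq:Wbeta}, $\drm W_t = \drm W^\beta_t - \sigma^{-1}\alpha_t^{-1/2}\beta_t\,\drm t$. Substituting into the BSDE and using $\widehat\sigma_t\sigma^{-1}\alpha_t^{-1/2}\beta_t = \beta_t\sqrt{\alpha_t}S_tI_t$ gives
\[
\drm Y_t = \big(-F(t,Z_t) - \beta_t\sqrt{\alpha_t}S_tI_tZ_t\big)\,\drm t + \widehat\sigma_t Z_t\,\drm W^\beta_t.
\]
By the very definition of $F$ as a supremum, $F(t,Z_t)\geq u(t,\beta_t,I_t)-\beta_t\sqrt{\alpha_t}S_tI_tZ_t$, so the drift above is bounded by $-u(t,\beta_t,I_t)$, with equality (Lebesgue--almost everywhere) if and only if $\beta_t=\beta^\star_t$, by the uniqueness of the maximiser in \eqref{eq:b_star}. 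Integrating between $0$ and $T$, using $Y_T=U(-\chi)$, and taking $\P^{\alpha,\beta}$--expectations, the stochastic integral is a true martingale: indeed $Z\in\H^p(\P^\alpha)$ together with the equivalence of $\P^{\alpha,\beta}$ and $\P^\alpha$ and \Cref{rem:integ} places $\widehat\sigma Z$ in $\H^{p'}(\P^{\alpha,\beta})$ for some $p'>1$, so the Burkholder--Davis--Gundy inequality applies. Hence
\[
\E^{\P^{\alpha,\beta}}\bigg[\int_0^T u(t,\beta_t,I_t)\,\drm t + U(-\chi)\bigg] \leq \E^{\P^{\alpha,\beta}}[Y_0] = \E^{\P^{\alpha}}[Y_0],
\]
the last equality following since $Y_0$ is $\Fc_0$--measurable and the two measures coincide there. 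Taking the supremum over $\beta\in\Bc$ gives $V_0^{\rm A}(\alpha,\chi)\leq\E^{\P^\alpha}[Y_0]$, while choosing $\beta=\beta^\star$ turns every inequality into an equality, so that $V_0^{\rm A}(\alpha,\chi)=\E^{\P^\alpha}[Y_0]$ and $\beta^\star\in\Bc^\star(\alpha,\chi)$; the strict gap produced by any $\beta$ differing from $\beta^\star$ on a set of positive $\drm\P^\alpha\otimes\drm t$--measure forces $\Bc^\star(\alpha,\chi)=\{\beta^\star\}$.

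The main obstacle is the BSDE well--posedness step, and specifically controlling the Lipschitz constant of the generator through the weighted norm $\H^p$: this is where the non--degeneracy of the testing policy ($\alpha_t\geq\eps$) and the a priori boundedness \eqref{eq:bounded} of $(S,I)$ are essential, and it is what allows the classical Lipschitz--$L^p$ BSDE machinery to be invoked despite the degenerate, state--dependent volatility $\widehat\sigma$. A secondary technical point is justifying the true--martingale property of the stochastic integral under the change of measure, for which \Cref{rem:integ} has been tailored.
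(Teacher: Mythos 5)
Your proposal is correct and follows essentially the same route as the paper: you introduce exactly the same BSDE under $\P^\alpha$ with generator $\sup_{b\in B}\{u(t,b,I_t)-b\sqrt{\alpha_t}S_tI_tZ_t\}$, obtain well--posedness in $\Sc^p(\P^\alpha)\times\H^p(\P^\alpha)$ from the Lipschitz--in--$z$ structure and the $L^p$ terminal datum, and then verify optimality by a Girsanov change of measure together with the martingale property of the stochastic integral and the uniqueness of the maximiser in \Cref{assump:argmax}. The only differences are cosmetic (you make the Lipschitz estimate and the cancellation of the $Z_t(\mu+\nu+\gamma+\rho)I_t$ terms explicit, and argue the drift comparison pathwise rather than after taking expectations), so nothing further is needed.
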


\begin{proof}
Fix $(\alpha, \chi) \in \Xi$ as in the statement of the theorem. 
Let us consider the solution $(Y,Z)$ of the following BSDE
\begin{align}\label{BSDE:opt}
    Y_t =  U(-\chi) 
    + \int_t^T \sup_{b\in B}\big\{ u(r, b, I_r) - Z_rb \sqrt{\alpha_r} S_r I_r \big\} \drm r
    - \int_t^T Z_r \sigma \alpha_r S_r I_r \drm W_r, \; t \in [0,T].
\end{align}
Since $\chi \in \mathfrak C$, $u$ is continuous, $I$ and $S$ are bounded, and $B$ is a compact set, it is immediate this BSDE is well-posed and admits a unique solution $(Y, Z) \in \Sc^p(\mathbb P^\alpha) \times \H^p(\mathbb P^\alpha)$ (in a more general context, one may refer for instance to \citeauthor*{bouchard2018unified} \cite[Theorem 4.1]{bouchard2018unified}). Therefore, using the dynamic of $I$ under $\P^\alpha$, given by \Cref{eq:model_SI}, as well as the definition of $\beta^\star$, and letting $t=0$, we obtain that \eqref{eq:xi_representation} is satisfied. Next, using this representation for $U(\chi)$ in the population's criteria defined in \Cref{pb:agent}, we notice that, for any $\beta\in\Bc$,
\begin{align*}
J_0^{\rm A} (\alpha, \chi,\beta) &= \E^{\P^{\alpha,\beta}}\bigg[ Y_0 
+ \int_0^T \Big( u(t, \beta_t, I_t) - Z_t (\mu+\nu+\gamma+\rho) I_t -  u(t, \beta^\star_t, I_t) + \beta^\star_t \sqrt{\alpha_t} S_t I_t Z_t \Big) \drm t
- \int_0^T Z_t \drm I_t\bigg]\\
&= \E^{\P^\alpha}[Y_0] + \sup_{\beta\in\Bc} \E^{\P^{\alpha,\beta}}\bigg[  \int_0^T \Big( u(t, \beta_t, I_t) - \beta_t S_t I_t Z_t - u(t, \beta^\star_t, I_t) + \beta^\star_t\sqrt{\alpha_t} S_t I_t Z_t \Big) \drm t \bigg]\leq \E^{\P^\alpha}[Y_0],
\end{align*}
where we used the fact that that $Z \in \mathbb H^p (\P^\alpha)$, and that $\Ec^\beta_\cdot:=\exp\big(-\int_0^\cdot\frac{\beta_s}{\sigma\sqrt{\alpha_s}}\mathrm{d}W_s-\frac12\int_0^\cdot\frac{\beta_s^2}{\sigma^2\alpha_s}\mathrm{d}s\big),$ is continuous, and both an $(\F^{\P^\alpha},\P^\alpha)$- and an $(\F^{\P^\alpha+},\P^\alpha)$-martingale (see \citeauthor*{neufeld2014measurability} \cite[Proposition 2.2]{neufeld2014measurability}), so that for any $\beta\in\Bc$
\begin{align*}
\E^{\P^{\alpha,\beta}}[Y_0] &= 
\E^{\P^{\alpha}}\big[\Ec_T^\beta Y_0\big]
=\E^{\P^{\alpha}}\big[\Ec_0^\beta Y_0\big]
=\E^{\P^{\alpha}}[Y_0].
\end{align*}
The previous inequality implies that $
V_0^{\rm A}(\alpha,\chi)\leq \E^{\P^\alpha}[Y_0].$
Moreover, thanks to \Cref{assump:argmax}, equality is achieved if and only if we choose the control $\beta^\star$. This shows that $
V_0^{\rm A}(\alpha,\chi)=\E^{\P^{\alpha}}[Y_0], \; \text{\rm and}\; \Bc^\star(\alpha,\chi)=\big\{\beta^\star\big\}.$
\end{proof}
In the previous result, the fact that \Cref{eq:xi_representation} holds with an $\Fc_0^{\P^\alpha+}$-measurable random variable and not a constant is somewhat annoying. The next lemma shows that we can actually have the representation with a constant without loss of generality.
\begin{lemma}\label{lemma:rep}
Let $\alpha\in\Ac$, and fix an $\Fc^{\P^\alpha+}_0$-measurable random variable $Y_0$ and some $Z \in \mathbb H^p (\P^\alpha)$. Define the following contracts
\begin{align*}
    \chi &:=-U^{(-1)}\bigg(Y_0
    - \int_0^T \Big(Z_t (\mu+\nu+\gamma+\rho) I_t +  u(t, \beta^\star_t, I_t) - \beta^\star_t \sqrt{\alpha_t} S_t I_t Z_t \Big) \drm t
    - \int_0^T Z_t \drm I_t\bigg),\\
     \chi^\prime &:= -U^{(-1)}\bigg( \E^{\P^\alpha}[Y_0]
    - \int_0^T \Big(Z_t (\mu+\nu+\gamma+\rho) I_t +  u(t, \beta^\star_t, I_t) - \beta^\star_t \sqrt{\alpha_t} S_t I_t Z_t \Big) \drm t
    - \int_0^T Z_t \drm I_t\bigg).
\end{align*}
Then $V_0^{\rm A}(\alpha,\chi)=V_0^{\rm A}(\alpha,\chi^\prime)=\E^{\P^\alpha}[Y_0],\; \Bc^\star(\alpha,\chi)=\Bc^\star(\alpha,\chi^\prime)=\big\{\beta^\star\big\}.$
\end{lemma}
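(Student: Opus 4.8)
The plan is to observe that both $\chi$ and $\chi'$ are, by construction, contracts for which $U(-\cdot)$ admits exactly the representation \eqref{eq:xi_representation} of \Cref{thm:agent}, the only difference being that the ``initial value'' is the $\Fc^{\P^\alpha+}_0$--measurable random variable $Y_0$ in the case of $\chi$, and the (deterministic, hence trivially $\Fc^{\P^\alpha+}_0$--measurable) constant $\E^{\P^\alpha}[Y_0]$ in the case of $\chi'$. Since the verification argument in the proof of \Cref{thm:agent} never used that this initial value was a constant, but only that it was $\Fc^{\P^\alpha+}_0$--measurable, I would simply re-run that argument for each of the two contracts. The whole content of the lemma is that an $\Fc^{\P^\alpha+}_0$--measurable random variable and its $\P^\alpha$--expectation play interchangeable roles here.

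Concretely, for $\chi$ I would first substitute the dynamics of $I$ under $\P^\alpha$ given by \eqref{eq:model_SI} into the defining representation, and use that $\beta^\star_t = b^\star(t,S_t,I_t,Z_t,\alpha_t)$ realises the supremum in \eqref{eq:b_star}, to rewrite
\[
U(-\chi) = Y_0 - \int_0^T \sup_{b\in B}\big\{ u(t,b,I_t) - b\sqrt{\alpha_t} S_t I_t Z_t \big\}\,\drm t + \int_0^T Z_t \sigma \alpha_t S_t I_t\,\drm W_t,\; \P^\alpha\text{--a.s.}
\]
For an arbitrary $\beta\in\Bc$, passing to $\P^{\alpha,\beta}$ through Girsanov's theorem and the expression \eqref{eq:Wbeta} for $W^\beta$ turns this into
\[
\int_0^T u(t,\beta_t,I_t)\,\drm t + U(-\chi) = Y_0 + \int_0^T \big( u(t,\beta_t,I_t) - \beta_t\sqrt{\alpha_t} S_t I_t Z_t - \sup_{b\in B}\{ u(t,b,I_t) - b\sqrt{\alpha_t} S_t I_t Z_t \} \big)\,\drm t + \int_0^T Z_t \sigma \alpha_t S_t I_t\,\drm W^\beta_t.
\]
Taking $\E^{\P^{\alpha,\beta}}$, the $W^\beta$--stochastic integral has zero expectation since $Z\in\mathbb H^p(\P^\alpha)$ remains in $\mathbb H^{p'}(\P^{\alpha,\beta})$ for $p'\in(1,p)$ by the moment bound of \Cref{rem:integ}, while the Lebesgue integrand is non--positive and vanishes if and only if $\beta=\beta^\star$ by \Cref{assump:argmax}.

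It then remains to check that $\E^{\P^{\alpha,\beta}}[Y_0] = \E^{\P^\alpha}[Y_0]$, which is exactly the step already carried out in the proof of \Cref{thm:agent}: the density $\Ec^\beta$ is a continuous $(\F^{\P^\alpha+},\P^\alpha)$--martingale with $\Ec^\beta_0 = 1$, so the $\Fc^{\P^\alpha+}_0$--measurability of $Y_0$ gives $\E^{\P^{\alpha,\beta}}[Y_0] = \E^{\P^\alpha}[\Ec^\beta_T Y_0] = \E^{\P^\alpha}[\Ec^\beta_0 Y_0] = \E^{\P^\alpha}[Y_0]$. Combining the three displays yields $V_0^{\rm A}(\alpha,\chi)\leq \E^{\P^\alpha}[Y_0]$ with equality attained precisely at $\beta^\star$, hence $V_0^{\rm A}(\alpha,\chi)=\E^{\P^\alpha}[Y_0]$ and $\Bc^\star(\alpha,\chi)=\{\beta^\star\}$. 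For $\chi'$ I would repeat the identical computation with the constant $\E^{\P^\alpha}[Y_0]$ in place of $Y_0$; its $\P^\alpha$--expectation being itself, the same chain gives $V_0^{\rm A}(\alpha,\chi')=\E^{\P^\alpha}[Y_0]$, and the optimiser is unchanged, namely $\beta^\star$, because $\beta^\star$ depends only on $(t,S,I,Z,\alpha)$, which are common to both contracts. There is no serious obstacle beyond this bookkeeping: the only point requiring care is the measure--change invariance $\E^{\P^{\alpha,\beta}}[Y_0]=\E^{\P^\alpha}[Y_0]$, and it is precisely this invariance that makes replacing $Y_0$ by $\E^{\P^\alpha}[Y_0]$ harmless.
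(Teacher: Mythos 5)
Your argument is correct and follows essentially the same route as the paper: the authors likewise dispose of $(\alpha,\chi)$ by invoking the verification argument of \Cref{thm:agent} (which, as you note, only uses $\Fc_0^{\P^\alpha+}$--measurability of the initial value), and then treat $(\alpha,\chi^\prime)$ by the identical computation — substituting the dynamics of $I$, changing measure, using $Z\in\H^q(\P^{\alpha,\beta})$ for $q\in(1,p)$ to kill the stochastic integral, and concluding from the non--positivity of the Lebesgue integrand together with \Cref{assump:argmax}. The measure--change invariance $\E^{\P^{\alpha,\beta}}[Y_0]=\E^{\P^\alpha}[Y_0]$ via the martingale property of $\Ec^\beta$ is exactly the step the paper relies on as well.
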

\begin{proof}
The equalities for $(\alpha,\chi)$ are immediate from \Cref{thm:agent}. For $(\alpha,\chi^\prime)$, we have, using the fact that $Z\in\H^p(\P^\alpha)$, and thus $Z\in\H^q(\P^{\alpha,\beta})$ for any $\beta\in\Bc$ and any $q\in(1,p)$
\begin{align*}
V_0^{\rm A}(\alpha,\chi^\prime) &= \sup_{\beta\in\Bc}\E^{\P^{\alpha,\beta}}\bigg[ \E^{\P^\alpha}[Y_0]
    + \int_0^T \Big(u(t, \beta_t, I_t) - Z_t (\mu+\nu+\gamma+\rho) I_t - u(t, \beta^\star_t, I_t) + \beta^\star_t \sqrt{\alpha_t} S_t I_t Z_t \Big) \drm t - \int_0^T Z_t \drm I_t \bigg]\\
    &=\E^{\P^\alpha}[Y_0]+\sup_{\beta\in\Bc}\E^{\P^{\alpha,\beta}}\bigg[  \int_0^T \Big(u(t, \beta_t, I_t) - \beta_t \sqrt{\alpha_t} S_t I_t Z_t - u(t, \beta^\star_t, I_t) + \beta^\star_t\sqrt{\alpha_t} S_t I_t Z_t \Big) \drm t\bigg]\leq \E^{\P^\alpha}[Y_0].
\end{align*}
Since the equality is attained if and only if we choose $\beta=\beta^\star$, this ends the proof.
\end{proof}

\subsubsection{Characterisation of the class of admissible contracts} 

We introduce the class $\overline \Xi$ of contracts defined by all pairs $(\alpha,\chi^{y_0,Z})$ with $\alpha\in\mathcal A $ and $\chi^{y_0,Z} := -U^{(-1)}(Y_T^{y_0,Z})$, where $Y^{ y_0,Z}$ is a process given, $\P^\alpha\text{\rm--a.s.}$, for all $t\in[0,T]$ by 
\[
Y_t^{y_0,Z} = y_0 - \int_0^t \Big(Z_r (\mu+\nu+\gamma+\rho) I_r +  u\big(t,b^\star(r, S_r,I_r,Z_r,\alpha_r), I_r\big) - b^\star(r, S_r,I_r,Z_r,\alpha_r)\sqrt{\alpha_r} S_r I_r Z_r \Big) \drm r
- \int_0^t Z_r \drm I_r,
\]
with $Z\in\mathbb H^p(\mathbb P^\alpha)$ and $y_0\in[\underline v,\infty)$. We also denote for simplicity $\P^{\star,\alpha, Z}:=\P^{\alpha,b^\star(S_\cdot,I_\cdot,Z_\cdot)}$.
    
\begin{lemma}\label{lem:control_pb_principal}
The problem of the government given by \eqref{pb:principal} can be rewritten
\begin{equation}\label{pb:principal:reduce}
V^{\rm P}_0 = \sup_{(\alpha,Z)\in\mathcal A\times \mathbb H^p(\mathbb P^\alpha)} \mathbb E^{\P^{\star,\alpha,Z}}
\bigg[ -U^{(-1)} \big( Y_T^{\underline v,Z} \big)
- \int_0^T \big( c(I_s) + k(s,\alpha_s, S_s, I_s) \big)\mathrm{d}s\bigg].
\end{equation}
\end{lemma}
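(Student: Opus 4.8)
The plan is to use the bijection between admissible contracts and pairs $(\alpha,Z)$ furnished by \Cref{thm:agent} and \Cref{lemma:rep} to turn the government's Stackelberg problem \eqref{pb:principal} into a standard control problem over $\overline\Xi$, and then to argue that the participation constraint binds, forcing $y_0=\underline v$. First I would observe that, by \Cref{thm:agent}, for every $(\alpha,\chi)\in\Xi$ the set $\Bc^\star(\alpha,\chi)$ is the singleton $\{\beta^\star\}$ with $\beta^\star_t=b^\star(t,S_t,I_t,Z_t,\alpha_t)$, so the inner supremum in \eqref{pb:principal} is vacuous and the relevant measure is $\P^{\alpha,\beta^\star}=\P^{\star,\alpha,Z}$. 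The same theorem gives $U(-\chi)=Y_T^{Y_0,Z}$ for a unique $Z\in\H^p(\P^\alpha)$ and a unique $\Fc_0^{\P^\alpha+}$--measurable $Y_0$ with $V_0^{\rm A}(\alpha,\chi)=\E^{\P^\alpha}[Y_0]$; since the canonical filtration carries a deterministic initial datum, Blumenthal's zero--one law makes $\Fc_0^{\P^\alpha+}$ $\P^\alpha$--trivial, so $Y_0$ is $\P^\alpha$--a.s. the constant $y_0:=\E^{\P^\alpha}[Y_0]$, and admissibility means precisely $y_0\geq\underline v$. This identifies $\Xi$ with $\overline\Xi$ and rewrites the government's payoff as $\E^{\P^{\star,\alpha,Z}}\big[-U^{(-1)}(Y_T^{y_0,Z})-\int_0^T(c(I_s)+k(s,\alpha_s,I_s,S_s))\,\drm s\big]$.

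For the inequality ``$\geq$'' I would fix any $(\alpha,Z)\in\Ac\times\H^p(\P^\alpha)$ and take the element of $\overline\Xi$ with $y_0=\underline v$, i.e. the contract $\chi$ with $U(-\chi)=Y_T^{\underline v,Z}$. Because $Z\in\H^p(\P^\alpha)$ while $S,I$ are bounded, $u$ is continuous, and $b^\star$ is $B$--valued, the equation defining $Y^{\underline v,Z}$ is a well--posed linear BSDE with solution in $\Sc^p(\P^\alpha)$ (exactly as in the proof of \Cref{thm:agent}), whence $\E^{\P^\alpha}[|U(-\chi)|^p]=\E^{\P^\alpha}[|Y_T^{\underline v,Z}|^p]<\infty$, so \eqref{eq:integchi} holds; and $V_0^{\rm A}(\alpha,\chi)=\underline v$ by \Cref{lemma:rep}, so the participation constraint is met and $(\alpha,\chi)\in\Xi$. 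Its government value is then exactly the integrand of \eqref{pb:principal:reduce}, and taking the supremum over $(\alpha,Z)$ yields $V_0^{\rm P}\geq$ the right--hand side.

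For the reverse inequality ``$\leq$'' I would start from an arbitrary $(\alpha,\chi)\in\Xi$, use the correspondence above to write its value as $\E^{\P^{\star,\alpha,Z}}\big[-U^{(-1)}(Y_T^{y_0,Z})-\int_0^T(c+k)\,\drm s\big]$ with $y_0\geq\underline v$, and exploit that $y_0\mapsto Y^{y_0,Z}$ is a pure additive shift: the integrands defining $Y^{y_0,Z}$ do not involve $y_0$, so $Y_T^{y_0,Z}=Y_T^{\underline v,Z}+(y_0-\underline v)\geq Y_T^{\underline v,Z}$. Since $U$ is an increasing bijection, $U^{(-1)}$ is increasing, hence $-U^{(-1)}(Y_T^{y_0,Z})\leq -U^{(-1)}(Y_T^{\underline v,Z})$ pointwise, and the government's value for $(\alpha,\chi)$ is bounded above by that of the $\underline v$--contract with the same $(\alpha,Z)$, which is itself $\leq$ the right--hand side of \eqref{pb:principal:reduce}; taking the supremum over $\Xi$ closes the loop. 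The main obstacle, and the step deserving the most care, is the reduction of the random $Y_0$ to a constant $y_0\geq\underline v$: one must either justify the triviality of $\Fc_0^{\P^\alpha+}$ or invoke \Cref{lemma:rep} to replace $Y_0$ by $\E^{\P^\alpha}[Y_0]$ without altering the agent's value or optimal response, since only with a deterministic $y_0$ does the monotonicity argument forcing $y_0=\underline v$ go through. The remaining ingredients (BSDE well--posedness and the martingale identity $\E^{\P^{\star,\alpha,Z}}[Y_0]=\E^{\P^\alpha}[Y_0]$) are routine and mirror those already used in \Cref{thm:agent}.
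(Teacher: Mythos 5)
Your proof is correct and follows essentially the same route as the paper: identify $\Xi$ with $\overline\Xi$ via \Cref{thm:agent} and \Cref{lemma:rep} together with the $p$--integrability estimate for $Y_T^{y_0,Z}$, then saturate the participation constraint by the monotonicity of $y_0\longmapsto -U^{(-1)}\big(Y_T^{y_0,Z}\big)$. Your explicit splitting into the two inequalities and your remark on reducing the $\Fc_0^{\P^\alpha+}$--measurable $Y_0$ to the constant $\E^{\P^\alpha}[Y_0]$ merely spell out what the paper compresses into the set equality $\Xi=\overline\Xi$ and the observation that the map $y_0\longmapsto \E^{\P^{\star,\alpha,Z}}\big[-U^{(-1)}(Y_T^{y_0,Z})-\int_0^T(c+k)\,\drm s\big]$ is non--increasing.
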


\begin{proof}
From \Cref{thm:agent} and \Cref{lemma:rep}, we know that $\Xi\subset \overline \Xi$. To prove the reverse inclusion, let us now consider a pair $(\alpha, \chi^{y_0,Z})\in \overline \Xi$. In fact, to show that $\overline\Xi \subset \Xi$ (and thus that $\Xi=\overline\Xi$), we simply need to ensure that $\chi^{y_0,Z}$ satisfies the integrability condition \eqref{eq:integchi}. Using the fact that $u$ is continuous, $B$ is compact, $\alpha$ is bounded below by $\eps$, and $S$ and $I$ are bounded, we have that there exists a constant $C>0$, which may change value from line to line, such that
\begin{align*}
\mathbb E^{\mathbb P^\alpha}\big[\big|U( -\chi^{y_0,Z}) \big|^p\big]
&\leq C\bigg(1+ \mathbb E^{\mathbb P^\alpha}\bigg[\bigg(\int_0^T|S_rI_rZ_r|\mathrm{d}r\bigg)^p+ \bigg|\int_0^T \widehat\sigma_rZ_r\mathrm{d}W_r\bigg|^p\bigg]\bigg)\\
& \leq C\bigg(1+\mathbb E^{\mathbb P^\alpha}\bigg[\bigg(\int_0^T\sigma\alpha_r|S_rI_rZ_r|\mathrm{d}r\bigg)^p \bigg]+ \|Z\|^p_{\H^p(\P^\alpha)}\bigg)\leq C\big(1+ \|Z\|^p_{\H^p(\P^\alpha)}\big)<\infty,
\end{align*}
where we used Burkholder--Davis--Gundy's inequality and Cauchy--Schwarz's inequality, implying that \eqref{eq:integchi} holds.

\medskip
Next, we use \Cref{lemma:rep} to realise that $\Bc^\star(\alpha, \chi^{y_0,Z})=\big\{b^\star(\cdot, S_\cdot,I_\cdot,Z_\cdot,\alpha_\cdot)\big\}$, and $V_0^{\rm A}(\alpha, \chi^{y_0,Z})=y_0$, which implies
\[
 V^{\rm P}_0=\sup_{y_0\geq \underline v}\sup_{(\alpha,Z)\in\mathcal A\times \mathbb H^p(\mathbb P^\alpha)} 
 \mathbb E^{\P^{\star,\alpha,Z}} \bigg[ -U^{(-1)} \big( Y_T^{ y_0,Z} \big) 
 - \int_0^T \big( c(I_s) + k(s, \alpha_s, S_s, I_s) \big) \mathrm{d}s \bigg].
\]
To conclude, it is enough to notice that the following map is non-increasing
\begin{align*}
[\underline v,\infty)\ni y_0 \longmapsto  \mathbb E^{\P^{\star,\alpha,Z}}\bigg[-U^{(-1)}\big( Y_T^{y_0,Z}\big)- \int_0^T \big( c(I_s) + k(s, \alpha_s, S_s, I_s) \big) \mathrm{d}s \bigg] \in\R.
\end{align*}\end{proof}

\subsection{Optimal tax and test policies under moral hazard for epidemic models}\label{ss:solving_gov_pb}

\subsubsection{Weak formulation for the government's problem}

\Cref{lem:control_pb_principal} states that the problem of the government can be can be reduced to a more standard stochastic control problem. However, in the current formulation, one of the three state variables, namely $Y$, is considered in the strong formulation, while the other state variables $S$ and $I$ are considered in weak formulation. Indeed, the variable $Y$ is indexed by the control $Z$, while the control $(\alpha, Z)$ only impacts the distribution of $S$ and $I$ through $\P^{\star, \alpha, Z}$. As highlighted by \citet*[Remark 5.1.3]{cvitanic2012contract}, it makes little sense to consider a control problem of this form directly. Therefore, contrary to what is usually done in principal--agent problems (see, \textit{e.g.}, \cite{cvitanic2018dynamic}), we decided to adopt the weak formulation to rigorously write the problem of the principal, since this is the formulation which makes sense for the agent's problem. We will thus formulate it below, for the sake of thoroughness.\footnote{Notice that at the end of the day, this is not really an issue. Indeed, provided that the problem has enough regularity (typically some semi-continuity of the terminal and running reward with respect to state), one can expect the strong and weak formulations to coincide. See for instance \citeauthor*{karoui2013capacities2} \cite[Theorem 4.5]{karoui2013capacities2}.}

\medskip

Let $V := \R \times A$ and consider the sets $\V$ as we defined $\A$ in \Cref{sss:canonical_space_pop}. The intuition is that the principal's problem depends only on time and on the state variable $X = (S,I,Y)$. Following the same methodology used for the agent's problem, to properly define the weak formulation of the principal's problem, we are led to consider the canonical space $\Omega^{\rm P} :=  \Cc^3 \times \V$, with canonical process $(S, I, Y, \Lambda^{\rm P})$, where for any $(t, \mathfrak{s}, \iota, y, q) \in[0,T] \times \Omega^{\rm P}$:
\[
S_t (\mathfrak{s},\iota,y,q):=\mathfrak{s}(t),\; 
I_t (\mathfrak{s},\iota,y,q):=\iota(t),\;
Y_t (\mathfrak{s},\iota,y,q):=y(t), \;
\Lambda^{\rm P} (\mathfrak{s},\iota,y,q):=q.
\]
We let $\Gc$ be the Borel $\sigma$-algebra on $\Omega^{\rm P}$, and $\G:=(\Gc_T)_{t\in[0,T]}$ the natural filtration of $(S, I, Y, \Lambda^{\rm P})$, defined in the same way as $\F$ in the previous canonical space $\Omega$ (see \Cref{ss:weak_formulation}). Let then $\M^{\rm P}$ be the set of probability measures on $(\Omega^{\rm P},\Gc_T)$. For any $\P \in \M^{\rm P}$, we can define $\G^\P$ the $\P$-augmentation of $\G$, its right limit $\G^{\P+}$, as well as $\F^\Pi:=(\Fc_t^{\Pi})_{t\in[0,T]}$ the $\Pi$-universal completion of $\F$ for any subset $\Pi\subset\M^{\rm P}$.

\medskip

The drift and volatility functions for the process $X$ are now defined for any $(t, s, i, z,a) \in [0,T] \times (\R_+^\star)^2 \times V$
\renewcommand{\arraystretch}{1}
\begin{align}
\label{eq:edsGen}
    B^{\rm P} (t, s, i, z,a):=
    \begin{pmatrix}
        \lambda - \mu s +\nu i - b^\star(t, s,i,z,a)\sqrt{a} s i \\[0.3em]
        - (\mu+\nu+\gamma+\rho) i + b^\star(t, s,i,z,a)\sqrt{a} s i \\[0.3em]
        - u^\star (t, s,i,z,a)
    \end{pmatrix},\; 
    \Sigma^{\rm P} (s, i, z,a):= \sigma a s i
    \begin{pmatrix}
        1 \\[0.3em]
        -1 \\[0.3em]
        z
    \end{pmatrix},
\end{align}
where $u^\star (t, s,i,z,a) := u (t, b^\star(t, s,i,z,a), i)$, for all $(t, s, i, z) \in [0,T] \times (\R_+^\star)^2 \times \R$.
For any $(t,\varphi^{\rm P}) \in [0,T] \times \Cc^2_b( \R^3,\R)$, we define
\begin{align*}
    M^{\rm P}_t(\varphi^{\rm P}):=&\ \varphi^{\rm P}(X_t) - \iint_{[0,t]\times V}\bigg( B^{\rm P} (r, S_r,I_r,  v) \cdot \nabla \varphi^{\rm P}(X_r) + \frac12  {\rm Tr} \big[ D^2 \varphi^{\rm P}(X_r) \big(\Sigma^{\rm P} ( \Sigma^{\rm P} )^\top \big) (r, S_r,I_r, v) \big] \bigg)\Lambda^{\rm P}(\mathrm{d}r, \mathrm{d} v).
\end{align*}

In the spirit of \Cref{def:Pc} for $\Pc \subset \M$, we define the subset $\Qc \subset \M^{\rm P}$ as the one consisting of all $\P\in\M^{\rm P}$ such that
\begin{enumerate}[label=$(\roman*)$]
    \item $M^{\rm P}(\varphi^{\rm P})$ is a $(\G,\P)$--local martingale on $[0,T]$ for all $\varphi^{\rm P} \in \Cc^2_b(\R^3,\R)$;
    \item $\P \big[X_0 = x_0 \big]=1$, where $x_0 := (s_0,i_0, \underline v)$;
    \item with $\P$-probability $1$, the canonical process $\Lambda^{\rm P}$ is of the form $\delta_{\phi_\cdot}(\mathrm{d} v)$ for some Borel-measurable function $\phi:[0,T]\longmapsto V$.
\end{enumerate}

Still following the line of \Cref{ss:weak_formulation}, we know that for any $\P \in \Qc$, we can define a $(\G^\Qc, \P)$--Brownian motion $W^{\rm P}$. We then denote by $\Vc_o(\P)$ the set of $\G$-predictable and $V$-valued process $(Z, \alpha)$ such that, $\P$--a.s. and for all $t\in[0,T]$,
\begin{equation}\label{eq:SIY}\begin{cases}
\displaystyle S_t = s_0 
+ \int_0^t \big(\lambda -\mu S_r +\nu I_r - b^\star (r, S_r, I_r, Z_r,\alpha_r)\sqrt{\alpha_r} S_r I_r \big) \drm r
+ \int_0^t \sigma \alpha_r S_r I_r \drm W^{\rm P}_r,\\[0.8em]
\displaystyle  I_t = i_0 
- \int_0^t \big((\mu+\nu+\gamma+\rho) I_r - b^\star(r, S_r, I_r, Z_r,\alpha_r)\sqrt{\alpha_r} S_r I_r \big) \drm r
- \int_0^t \sigma \alpha_r S_r I_r \drm W^{\rm P}_r,\\[0.8em]
\displaystyle  Y_t = \underline v 
- \int_0^t u^\star (r, S_r,I_r,Z_r,\alpha_r) \drm r
+ \int_0^t Z_r \sigma \alpha_r S_r I_r  \drm W^{\rm P}_r.
\end{cases}
\end{equation}

\subsubsection{Solving the government's problem}\label{sss:HJB_principal}

Thank to the analysis conducted in the previous subsection, the problem of the government given by \eqref{pb:principal} can now be written rigorously in weak formulation
\begin{align}\label{eq:pb_principal_weak}
    V^{\rm P}_0 = \sup_{\P \in \Qc} \mathbb E^{\P} \bigg[
    -U^{(-1)} ( Y_T) - \int_0^T \big(c(I_s) + k(s, \alpha_s, S_s, I_s) \big) \drm s\bigg].
\end{align}

We then define the Hamiltonian of the government, for all $t \in [0,T]$, $x := (s,i,y) \in \R^3$ and $(p,M) \in \R^3 \times \S^{3}$, by
\begin{align}\label{eq:hamiltonian_principal}
    H^{\rm P} (t, x, p, M) := \sup_{ (z,a) \in V} \bigg\{ B^{\rm P} (t, s, i, z,a) \cdot p + \dfrac{1}{2} {\rm Tr} \Big[ M \big( \Sigma^{\rm P} (\Sigma^{\rm P})^\top \big) (t, s, i, z,a) \Big] - k(t, a, s,i) \bigg\} - c(i),
\end{align}
where $\S^3$ represents the set of $3 \times 3$ symmetric positive matrices with real entries. More explicitly, the Hamiltonian can be written as follows, with $f(z,M) := M_{11} - 2 M_{12} + M_{22} - 2 z (M_{23} -M_{13}) + z^2 M_{33}$ for all $(z,M) \in \R \times \S^{3}$:
\begin{align*}
    H^{\rm P} (t, x, p, M)
    = &\ \sup_{z \in \R, a\in A} \bigg\{ b^\star(t, s,i,z,a) \sqrt{a} s i (p_2 - p_1)
    - u^\star (t, s,i,z,a) p_3
    + 
    \frac12\sigma^2 a^2 (si)^2 f(z,M) 
    -  k(t, a, s, i) \bigg\} \\[0.3em]
    &+ (\lambda - \mu s +\nu i) p_1 
    - (\mu+\nu+\gamma+\rho) i p_2
    - c (i).
\end{align*}
We are then led to consider the following HJB equation, for all $t \in [0,T]$ and $x = (s,i,y) \in \R^3$:
\begin{align}\label{eq:HJB_principal}
    - \partial_t v (t, x)  - H^{\rm P} (t, x, \nabla_x v, D^2_x v) = 0,\; (t,x)\in \Oc,
\end{align}
with terminal condition $v (T,x) := -U^{(-1)}(y)$, and where, recalling that $F$ is defined by \eqref{eq:boundary_time}, the natural domain over which the above PDE must be solved is\footnote{The boundary of the domain cannot be reached by the processes $S$ and $I$, which is why it not necessary to specify a boundary condition. Notice though that the upper bound can formally only be attained when $I$ is constantly $0$, in which case $S$ becomes deterministic, and the government best choice for $\alpha$ is clearly $1$, and its choice of $Z$ becomes irrelevant. In such a situation, we would immediately have $V_0^{\rm P}=\underline v$.}
 $\Oc:=\big\{(t,s,i,y)\in[0,T)\times\R_+^2\times \R: 0<s+i<F(t,s_0,i_0)\big\}.$

\begin{remark}
Standard arguments from viscosity solution theory allow to prove that $V_0^{\rm P} = v^{\rm P}(0, x_0)$ $($recalling that $x_0 = (s_0, i_0, \underline v))$ where $v^{\rm P}$ should be understood as the unique viscosity solution, in an appropriate class of functions, of the {\rm PDE} \eqref{eq:HJB_principal}. Obtaining further regularity results is by far more challenging. Indeed, it is a second-order, fully non-linear, parabolic {\rm PDE}, which is clearly not uniformly elliptic, the corresponding diffusion matrix being degenerate. This makes the question of proving the existence of an optimal contract a very complicated one, which is clearly outside the scope of our study. As a sanity check though, we recall that $\eps$-optimal contracts always exist, and can be indeed approximated numerically. See for instance {\rm \citeauthor*{kharroubi2020regulation} \cite{kharroubi2020regulation}} for an explicit construction of such $\eps$-optimal contracts in a particular case dealing with the stochastic logistic equation.
\end{remark}

\subsubsection{Comparison with the first-best case}\label{sss:fb_maths}

As already mentioned, the first-best case corresponds to the case where the government can enforce whichever interaction rate $\beta\in\Bc$ it desires (in addition to a contract $(\alpha,\chi) \in \Ac \times \mathfrak C$), and simply has to satisfy the participation constraint of the population. In order to find the optimal interaction rate in this scenario, as well as the optimal contract, one has to solve the government's problem defined by \eqref{eq:FB}.

\medskip

The simplest way to take into account the inequality constraint in the definition of $V_0^{\rm P,FB}$ is to introduce the associated Lagrangian. By strong duality, we then have
\[
V_0^{\rm P,FB}=\inf_{\varpi> 0}\sup_{(\alpha,\chi,\beta)\in\Ac\times\mathfrak C\times \Bc}\bigg\{\E^{\P^{\alpha,\beta}}\bigg[\chi-\int_0^T\big(c(I_t)+k(t,\alpha_t,S_t,I_t)\big)\mathrm{d}t\bigg]+\varpi\bigg(\mathbb E^{\P^{\alpha,\beta}} \bigg[\int_0^T u(t,\beta_t, I_t)\mathrm{d}t + U(-\chi)\bigg]- \underline v\bigg)\bigg\}.
\]
First, by concavity of $U$, it is immediate that for any given Lagrange multiplier $\varpi>0$, the optimal tax is constant. Then, using the definition of $\overline V_0(\varpi)$
for any $\varpi>0$ in \eqref{eq:V0_varpi_FB},
we have:
\[
V_0^{\rm P,FB}=\inf_{\varpi>0}\Big\{\chi^\star(\varpi)+\varpi \big(U\big(-\chi^\star(\varpi)\big)-\underline v\big)
+ \overline V_0(\varpi) \Big\}.
\]
Note that $\overline V_0(\varpi)$ is the value function of a standard stochastic control problem. Therefore,
we expect to have $\overline V_0(\varpi)= v^\varpi(0,s_0,i_0)$, where the function $v^\varpi:[0,T]\times\R_+^2 \longrightarrow \R$ solves the following HJB PDE
\[
\begin{cases}
\displaystyle
-\partial_t v^\varpi(t,s,i) + c(i) 
- (\lambda-\mu s+\nu i) \partial_s v^\varpi
+ (\mu+\nu+\gamma+\rho)i \partial_i v^\varpi
- \Hc^{\varpi} (t,s,i,\partial v^\varpi, D^2 v^\varpi) =0,\; (t,s,i)\in \Dc,\\[0.5em]
v^\varpi(T,s,i)=0,\; (s,i)\in\Dc_T,
\end{cases}
\]
where the Hamiltonian is defined, for $t \in [0,T]$, $(s,i) \in (\R_+^\star)^2$, $p :=(p_1,p_2) \in \R^2$ and $M \in \S^2$ by
\begin{align*}
    \Hc^{\varpi} (t,s,i,p,M) := \sup_{a \in A} \bigg\{
\sup_{b \in B} \big\{
\varpi u(t,b,i) - b si \sqrt{a}  (p_1-p_2)
\big\} - k(t,a,s,i) 
+ \dfrac12 \sigma^2 (si)^2 a^2 (M_{11} -2 M_{12} + M_{22})
\bigg\}.
\end{align*}

Note that if we consider separable utilities with the forms in \Cref{ex:utility_pop}, the optimal interaction rate is given, for a testing policy $\alpha \in \Ac$ and a Lagrange multiplier $\varpi > 0$, by $\beta^{\varpi}_t = b^{\varpi} \big(S_t, I_t, \partial v^\varpi (t,S_t,I_t), \alpha_t \big)$ for all $t \in [0,T]$, where
\begin{align*}
    b^{\varpi} (s,i,p,a) := b^\circ \big( s, i,\sqrt{a} (p_1-p_2) / \varpi \big), \; \text{ for all } \; 
    (s,i,p,a) \in (\R_+^\star)^2 \times \R^2 \times A.
\end{align*}

\section{Extensions and generalisations}\label{sec:extensions}

\subsection{Diseases with latency periods: SEIS, SEIR}

The reasoning developed in this paper can be extended in a straightforward way to consider SEIR and SEIS compartment models. These models are used to describe epidemics in which individuals are not directly contagious after contracting the disease, as for the COVID-19 epidemic (see, \textit{e.g.}, \citeauthor{dolbeault2020heterogeneous} \cite{dolbeault2020heterogeneous}), and thus involve a fourth class representing the `Exposed', \textit{i.e.}, individuals who have contracted the disease but are not yet infectious. The constant rate at which an exposed person becomes infectious is denoted by $\iota \in \R_+$.
The difference between SEIS and SEIR models is embedded into the immunity toward the disease: for SEIR, it is assumed that the immunity is permanent (as in a SIR), whereas for SEIS, infected individual come back in the susceptible class at rate $\nu\geq 0$, similarly to SIS models. We can also take into account the demographic dynamics of the population, through the parameters $\lambda$, $\mu$ and $\gamma$. 
Similarly to the previous models, we consider that the dynamic of the epidemic is subject to a noise in the estimation of the proportion of susceptible and infected individuals. 
Inspired by the stochastic model in \citeauthor*{mummert2019parameter} \cite[Equation (3)]{mummert2019parameter}, we can consider that the dynamics of the epidemic is given by:
\begin{equation}\label{eq:SEIRS_sto}
\begin{cases}
\displaystyle S_t = s_0 +\int_0^t\big(\lambda - \mu S_s - \beta_s \sqrt{\alpha_s}S_s I_s + \nu I_s \big) \drm s + \int_0^t \sigma \alpha_sS_sI_s \drm W_s,\\[0.8em]
\displaystyle E_t =e_0-\int_0^t  \big( (\mu+\iota) E_s - \beta_s \sqrt{\alpha_s}S_s I_s \big) \mathrm{d}s -\int_0^t \sigma \alpha_sS_sI_s \drm W_s,\\[0.8em]
\displaystyle I_t = i_0- \int_0^t \big( (\mu + \nu +\gamma+\rho) I_s -\iota E_s \big) \mathrm{d}s,\\[0.8em]
\displaystyle R_t =r_0+\int_0^t (\rho I_s- \mu R_s) \mathrm{d}s,
\end{cases}
\text{for } \; t\in[0,T],
\end{equation}
Note that the proportion $I$ of infected and infectious is uncertain, but only through its dependence on $E$ and the proportion $R$ of recovery is uncertain only through its dependence on $I$. More precisely, we assume that there is no uncertainty on both the recovery rate $\rho$, the rate $\iota$ at which infected people becomes infectious and the (potentially) rate $\nu$ at which an individual loses immunity, implying that if the proportion of exposed individual is perfectly known, the proportion of infected is also known without uncertainty and consequently the proportion of recovery is also certainly known. Again this modelling choice is consistent with most stochastic SEIRS models, and emphasises that the major uncertainty in the current epidemic is related to the non-negligible proportion of (nearly) asymptomatic individuals. Indeed, an asymptomatic individual may be misclassified as susceptible or exposed.

\medskip

We will now give, informally, the optimisation problems faced by both the population and the government. 
The most important change compared to SIS/SIR models is that the criteria should now depend on the sum $E+I$, representing the proportion of the population having contracted the disease, rather than just the proportion $I$ of infectious people. For example, we can consider the following form for the population's problem:
\begin{align*}
V_0^{\rm A} (\alpha,\chi) := \sup_{\beta\in \Bc} \mathbb E \bigg[\int_0^T u\big(t,\beta_t, E_t+I_t\big)\mathrm{d}t +U(-\chi)\bigg],
\end{align*}
while that of the government could become
\begin{align*}
    V^{\rm P}_0 := \sup_{(\alpha,\chi) \in \Xi} \sup_{\beta \in \Bc^\star (\alpha,\chi)}
    \mathbb E \bigg[\chi - \int_0^T \big( c\big(E_t+I_t\big) + k(t, \alpha_t, S_t, I_t) \big)\mathrm{d} t \bigg].
\end{align*}
A slight adaption of our earlier arguments will show that admissible taxes take the form $\chi := - U^{(-1)} (Y_T)$ with
\begin{align*}
    Y_t := Y_0
    - \int_0^T \Big(Z_t (\mu+\iota) E_t +  u(t, \beta^\star_t, E_t+I_t) - \beta^\star_t\sqrt{\alpha_t} S_t I_t Z_t \Big) \drm t
    - \int_0^T Z_t \drm E_t,
\end{align*}
where $\beta^\star$ is the population's optimal contact rate, under the assumption it exists. It thus remain to solve the government's problem, but unlike in the previous SIS/SIR models, there are now four state variables, namely $(S, E, I,Y)$. 
However, solving it numerically is really more challenging since it increases the dimension of the problem. A numerical investigation seems to be complicated as far as we now, and we left these numerical issues for future research. 

\subsection{Beyond SEIS/SEIR models: a theoretically tractable method}

There are of course plethora of generalisations of the models we have considered so far. For instance, in SEIRS (or also SIRS) models, the immunity is temporary, \textit{i.e.} people in the class $\rm R$ may come back into the class $\rm S$ at rate $\nu$. Using a similar stochastic extension of this model, it is straightforward that all our results extend, \emph{mutatis mutandis}, to this case as well, albeit with one important difference: the control problem faced by the government now has 5 states variables, namely $(S,E,I,R,Y)$. Even more generally, our approach can readily be adapted to compartmental models considering additional classes: for instance the SIDARTHE (`Susceptible' (S), `Infected' (I), `Diagnosed' (D), `Ailing' (A), `Recognised' (R), `Threatened' (T), `Healed' (H) and `Extinct' (E)) model investigated in \citeauthor*{giordano2020modelling} \cite{giordano2020modelling} for COVID-19. Of course the price to pay is that the number of state variables in the government's problem will increase with the number of compartments, and numerical procedures to solve the HJB equation will become more delicate to implement, and could be based on neural networks.

{\small

}
\end{document}